\newcommand{\mytitle}{Placing Green Bridges Optimally,
with~Close-Range~Habitats~in~Sparse~Graphs}
\def\thxPACSs{Funded by Deutsche Forschungsgemeinschaft (DFG, German
Research Foundation), project PACS (FL~1247/1-1, 522475669).}
\def\HUaffil{Humboldt-Universität zu Berlin,
Department of Computer Science, Algorithm Engineering Group, Germany}
\def\TUaffil{Technische Universität Berlin, Algorithmics and Computational Complexity, Germany}
\renewenvironment{abstract}{%
	\small
	\quotation
	\noindent{\bfseries\abstractname.}%
}{\endquotation}
\renewcommand\section{\@startsection {section}{1}{\z@}%
																	{-2.5ex \@plus -1ex \@minus -.2ex}%
																	{1.3ex \@plus.2ex}%
																	{\normalfont\large\bfseries}}
\renewcommand\subsection{\@startsection{subsection}{2}{\z@}%
																		{-2.25ex\@plus -1ex \@minus -.2ex}%
																		{0.5ex \@plus .2ex}%
																		{\normalfont\normalsize\bfseries}}
\renewcommand\subsubsection{\@startsection{subsubsection}{3}{\z@}%
																		{-2.25ex\@plus -1ex \@minus -.2ex}%
																		{0.5ex \@plus .2ex}%
																		{\normalfont\normalsize\bfseries}}
\renewcommand\paragraph{\@startsection{paragraph}{4}{\z@}%
																		{2.25ex \@plus1ex \@minus.2ex}%
																		{-1em}%
																		{\normalfont\normalsize\bfseries}}
\renewcommand\subparagraph{\@startsection{subparagraph}{5}{\parindent}%
																			{2.25ex \@plus1ex \@minus .2ex}%
																			{-1em}%
																			{\normalfont\normalsize\bfseries}}
\pgfplotsset{compat=1.5}
\pgfplotsset{major grid style={very thin,gray!20!white}} %
\newcommandx{\set}[2][1=1]{\ensuremath{\{#1,\ldots,#2\}}}
\newcommandx{\mydefenv}[4][3=A]{%
  \newtheorem{#1}{#2}
  {#3}{A}{\crefname{#1}{#2}{#2s}}{\crefname{#1}{#2}{#3}}%
}
\theoremstyle{plain}
\newtheorem{theorem}{Theorem}
\newtheorem{lemma}{Lemma}
\newtheorem{observation}[lemma]{Observation}
\newtheorem{rrule}{Reduction Rule}
\theoremstyle{definition}
\newtheorem{definition}[lemma]{Definition}
\newtheorem{problem}{Problem}
\declaretheorem[style=definition,name=Construction,qed=$\diamond$]{construction}
\theoremstyle{remark}
\newtheorem{example}{Example}
\crefname{observation}{Observation}{Observations}
\crefname{lemma}{Lemma}{Lemmas}
\crefname{definition}{Definition}{Definitions}
\crefname{rrule}{Reduction Rule}{Reduction Rules}
\crefname{construction}{Construction}{Constructions}
\crefname{proposition}{Proposition}{Propositions}
\crefname{theorem}{Theorem}{Theorems}
\crefname{corollary}{Corollary}{Corollaries}
\crefname{line}{Line}{Lines}
\crefname{problem}{Problem}{Problems}
\crefname{figure}{Figure}{Figures}
\crefname{table}{Table}{Tables}
\crefname{subsection}{Section}{Sections}
\crefname{section}{Section}{Sections}
\crefname{algorithm}{Algorithm}{Algorithms}
\crefname{example}{Example}{Examples}
\Crefname{subsection}{Sec}{Sects}
\Crefname{section}{Sec}{Sects}
\Crefname{problem}{Prob}{Probs}
\Crefname{observation}{Obs}{Obs}
\Crefname{proposition}{Prop}{Props}
\Crefname{corollary}{Cor}{Cors}
\Crefname{theorem}{Thm.}{Thm.}
\newcommandx{\decprob}[6][3=Input,5=Question]{\begin{samepage}
  \begingroup
\begin{problem}\label{prob:#2}%
  {{\textsc{#1}}}%
  \nopagebreak[4]\end{problem}\nopagebreak[4]\vspace{-0.5em}%
  \par\noindent\hangindent=\parindent\textbf{#3}:  #4\nopagebreak[4]
  \par\noindent\hangindent=\parindent\textbf{#5}:  #6
  \par\medskip
  \endgroup
  \end{samepage}
}
\DeclareMathOperator{\dist}{dist}
\newcommand{\I}{\mathcal{I}}
\newcommand{\yes}{\textnormal{\texttt{yes}}}
\newcommand{\no}{\textnormal{\texttt{no}}}
\newcommand{\RD}{$(\Rightarrow)\quad$}
\newcommand{\LD}{$(\Leftarrow)\quad$}
\newcommand{\N}{\mathbb{N}}
\newcommand{\Nzero}{\N_0}
\renewcommand{\O}{\mathcal{O}}
\newcommand{\prob}[1]{\textnormal{\textsc{#1}}}
\newcommand{\PCVC}{\prob{Planar Cubic Vertex Cover}}
\newcommand{\gbp}{Green Bridges Placement}
\newcommand{\GBP}{\prob{GBP}}
\newcommandx{\DgbpTsc}[2][1=$d$]{\prob{#1-Diameter \gbp}}
\newcommandx{\DgbpAcr}[2][1=$d$]{\prob{#1-Diam \GBP}}
\newcommandx{\DwgbpTsc}[2][1=$d$]{\prob{#1-Diameter \gbp{} with Costs}}
\newcommandx{\DwgbpAcr}[2][1=$d$]{\prob{#1-Diam \GBP-C}}
\newcommandx{\GwgbpTsc}{\prob{General \gbp{} with Costs}}
\newcommandx{\GwgbpAcr}{\prob{Gen \GBP-C}}
\newcommandx{\TwoDiam}{\DgbpAcr[2]}
\newcommandx{\TwoDiamC}{\DwgbpAcr[2]}
\newcommand{\cocl}[1]{\ensuremath{\operatorname{#1}}}
\newcommand{\NP}{\cocl{NP}}
\newcommand{\nphard}{\NP-hard}
\newcommand{\calH}{\mathcal{H}}
\newcommand{\calP}{\mathcal{P}}
\newcommand{\CCC}{\mathcal{C}}
\newcommand{\FFF}{\mathcal{F}}
\newcommand{\HHH}{\mathcal{H}}
\newcommand{\III}{\mathcal{I}}
\newcommand{\R}{\mathbb{R}}
\newcommandx{\tref}[2][1=]{{\footnotesize [\Cref{#2}#1]}}
\newcommand{\xcase}[2]{\textit{Case~#1}: #2.}
\DeclareMathOperator{\diam}{diam}
\DeclareMathOperator{\opt}{opt}
\newcommand{\ceq}{\ensuremath{\coloneqq}}
\newcommand{\cif}{\text{if }}
\newcommand{\otw}{\text{otherwise}}
\newcommand{\kfoure}{\ensuremath{K_4{\rm -} e}}
\newcommand{\maxH}{\ensuremath{\eta}}
\newcommand{\setFc}{\ensuremath{F^*}}%
\newcommand{\allrrs}[1]{\cref{rr:infeas,rr:irredge,rr:irrvertex,rr:nontriangle,rr:no-triangle-and-cut,rr:forced-habs,rr:smallcomps,#1}}
\definecolor{lilla}{HTML}{750787}
\newcommand{\thecolor}{black}%
\newcommand{\xcitet}[2]{\citet{#2}}
\newcommand{\ExternalLink}{%
\tikz[x=1.2ex, y=1.2ex, baseline=-0.05ex]{%
    \begin{scope}[x=1ex, y=1ex]
        \clip (-0.1,-0.1) --++ (-0, 1.2) --++ (0.6, 0) --++ (0, -0.6) --++ (0.6, 0) --++ (0, -1);
        \path[draw, line width = 0.5, rounded corners=0.5] (0,0) rectangle (1,1);
    \end{scope}
    \path[draw, line width = 0.5] (0.5, 0.5) -- (1, 1);
    \path[draw, line width = 0.5] (0.6, 1) -- (1, 1) -- (1, 0.6);
    }
}
\newcommand{\tikzpreamble}{%
  \tikzstyle{xnode}=[scale=0.33,circle,fill,draw,color=\thecolor]
  \tikzstyle{xedge}=[thick,-,color=\thecolor]
  \tikzstyle{xxedge}=[ultra thick,-,color=\thecolor]
  \tikzstyle{xedgedot}=[thick,-,dotted,color=white]
  \tikzstyle{xhabA}=[-,opacity=0.2, line width=5pt, line cap=round,color=magenta]
  \tikzstyle{xhabB}=[-,opacity=0.2, line width=5pt, line cap=round,color=green]
  \tikzstyle{xhabC}=[-,opacity=0.2, line width=5pt, line cap=round,color=cyan]
  \tikzstyle{xhabD}=[-,opacity=0.2, line width=5pt, line cap=round,color=blue]
  \tikzstyle{xhabE}=[-,opacity=0.2, line width=5pt, line cap=round,color=yellow]
  \tikzstyle{dockingedge}=[
			thick,
			dashed,
			green!70!black,
		]
	\tikzstyle{forcededge} = [
			thick,
			densely dotted,
			red!70!black,
		]

  \def\habcolA{magenta}
  \def\habcolC{cyan}
  \def\habcolD{blue}
  \def\habcolE{yellow}
  \tikzstyle{xstreet}=[-, line width=6pt, line cap=round,color=black!20!gray]
  \tikzstyle{xstreetpat}=[dashed,-,white,thick]
  \tikzstyle{xrailway}=[-,double,double distance=4pt,ultra thick,color=brown!50!red]
}
\tikzset{
	street/.style = {
		line width = 6pt,
		line cap = round,
		color = black!20!gray,
	},
	bigstreet/.style = {
		line width = 12pt,
		color = black!20!gray,
	},
	streetlines/.style = {
		dashed,
		white,
		thick,
	},
	streetlinessolid/.style = {
		white,
		thick,
	},
	railway/.style = {
		double,
		double distance=4pt,
		ultra thick,
		color=brown!50!red,
	},
	baseedge/.style = {
		semithick,
	},
	soledge/.style = {
		baseedge,
		ultra thick,
		color=red!50!black,
	},
}
\newcommandx{\tikzES}[2][1={-,thick}]{%
	\foreach \x/\y in {#2}{\draw[#1] (\x) to (\y);}
}
\newcommandx{\drawAE}[4][3={-,thick}]{
	\draw[#3] (#1) to node[midway,inner sep=1pt,fill=white,sloped,font=\footnotesize]{#4}(#2);
}
\newcommandx{\tikzESd}[2][1={-,thick}]{%
	\foreach \x/\y/\z in {#2}{\drawAE{\x}{\y}[#1]{\z};}
}
\newcommand{\bspframe}{
  \node at (-1*\xr,2*\yr)[]{};
  \node at (2*\xr,-2*\yr)[]{};
}
\newcommand{\bsphabs}{
  \node (fox) at (-0.55*\xr,-.9*\yr)[]{};
  \node (goose) at (1*\xr,0.5*\yr)[]{};
  \node (salam) at (1*\xr,1*\yr)[]{};
  \node (tort) at (0*\xr,0.5*\yr)[]{};
  \draw [draw=\habcolA!30!gray,very thick] (fox) ellipse (1.1*\xr cm and 0.85*\yr cm);
  \draw [draw=\habcolE!30!gray,very thick] (goose) ellipse (0.66*\xr cm and 1.25*\yr cm);
  \draw [draw=\habcolD!30!gray,very thick] (tort) ellipse (0.66*\xr cm and 1.25*\yr cm);
}
\newcommand{\bsphabsnames}{
  \node at (-0.65*\xr,-.55*\yr)[fill=white,inner sep=.5pt]{$H_1$};
  \node at (-0.34*\xr,0.5*\yr)[fill=white,inner sep=.5pt]{$H_2$};
  \node at (1.35*\xr,0.40*\yr)[fill=white,inner sep=.5pt]{$H_3$};
}
\newcommand{\bsptransp}{
  \draw[bigstreet] (0*\xr,0*\yr) to (1*\xr,0*\yr);
  \draw[streetlines] (0*\xr,0.133*\yr) to (1*\xr,0.133*\yr);
  \draw[streetlinessolid] (0*\xr,0.0*\yr) to (1*\xr,0.0*\yr);
  \draw[streetlines] (0*\xr,-0.133*\yr) to (1*\xr,-0.133*\yr);
  \foreach \x in {0}{
    \draw[street] (\x*\xr,-1*\yr) to (\x*\xr,2*\yr);
    \draw[streetlines] (\x*\xr,-.9*\yr) to (\x*\xr,2*\yr);
  }
  \foreach \x/\y in {1.5/-1}{
    \draw[street] (-1*\xr,\y*\yr) to (\x*\xr,\y*\yr);
    \draw[streetlines] (-1*\xr,\y*\yr) to (\x*\xr,\y*\yr);
  }
  \foreach \x/\y in {1.75/1}{
    \draw[street] (-0*\xr+\y*\xr,\y*\yr) to (\x*\xr,\y*\yr);
    \draw[streetlines] (-0*\xr+\y*\xr,\y*\yr) to (\x*\xr,\y*\yr);
  }
  \foreach \x in {-1}{
    \draw[street] (\x*\xr,-2*\yr) to (\x*\xr,2*\yr);
    \draw[streetlines] (\x*\xr,-2*\yr) to (\x*\xr,2*\yr);
  }
  \foreach \x in {1}{
    \draw[street] (\x*\xr,-2*\yr) to (\x*\xr,2*\yr);
    \draw[streetlines] (\x*\xr,-2*\yr) to (\x*\xr,2*\yr);
  }
}
\newcommand{\bspnodes}{

	\node (n1) at (-2*\xr,-1.5*\yr) [xnode] {};
	\node (n2) at (-1*\xr,-1.0*\yr) [xnode] {};
	\node (n3) at (-1*\xr,-2.0*\yr) [xnode] {};
	\node (n4) at (-0*\xr,-0.5*\yr) [xnode] {};
	\node (n5) at (-0*\xr,-1.5*\yr) [xnode] {};
	\node (n6) at ( 1*\xr,-0.0*\yr) [xnode] {};
	\node (n7) at ( 1*\xr,-1.0*\yr) [xnode] {};

}
\newcommand{\bspedges}{
	\draw[baseedge]
		(n1) to (n3) to (n5) to (n4) to (n6) to (n7)
		(n1) to (n2) to (n4) to (n7) to (n5) to (n2) to (n3);
}
\newcommand{\bspedgesX}{
	\draw[baseedge]
		(n1) to node[midway,above,sloped]{\scriptsize 1}(n3) to node[midway,above,sloped]{\scriptsize 1}(n5) to node[midway,above,sloped]{\scriptsize 3}(n4) to node[midway,above,sloped]{\scriptsize 1}(n6) to node[midway,above,sloped]{\scriptsize 1}(n7)
		(n1) to node[midway,above,sloped]{\scriptsize 1}(n2) to node[midway,above,sloped]{\scriptsize 1}(n4) to node[midway,above,sloped]{\scriptsize 1}(n7) to node[midway,above,sloped]{\scriptsize 1}(n5) to node[midway,above,sloped]{\scriptsize 1}(n2) to node[midway,above,sloped]{\scriptsize 1}(n3);
}
\newcommand{\bspsoledges}{
	\draw[soledge]
		(n1) to (n2) to (n3)
		(n2) to (n5) to (n4) to (n7)
		(n4) to (n6);
}
\newcommand{\bspsoledgesX}{
	\draw[soledge]
		(n1) to (n2)
		(n1) to (n3)
		(n2) to (n4)
		(n2) to (n5) to (n7) to (n6)
		(n4) to (n7);
	\draw[ultra thick, color=green!50!black] (n3) to (n5);
}
\newcommand{\bspnodehabs}{
  \draw[xhabA] (n1) to (n3) to (n5) to (n2) to (n1) (n2) to (n3);
  \draw[xhabD] (n2) to (n4) to (n5) to (n2);
  \draw[xhabE] (n4) to (n6) to (n7) to (n4) to (n5) to (n7);
}
\title{\Large\bf \mytitle}
\author[1]{Christian Wallisch}
\author[2]{Till Fluschnik\footnote{\thxPACSs}}
\author[1]{Leon Kellerhals}
\affil[1]{\TUaffil}
\affil[2]{\HUaffil}
\date{}
\newcommand{\theabstract}{%
	We study a network design problem motivated by
	the challenge of placing wildlife crossings to reconnect fragmented habitats of animal species,
	which is among the 17 goals towards sustainable development by the UN:
	Given a graph, whose vertices represent the fragmented habitat areas and
	whose edges represent possible green bridge locations (with costs),
	and the habitable vertex set for each species' habitat,
	the goal is to find the cheapest set of edges such that each species' habitat is sufficiently connected.
	We focus on the established variant where a habitat is considered sufficiently connected if it has diameter two in the solution
	and study its complexity in cases justified by our setting
	namely small habitat sizes on planar graphs and graphs of small maximum degree~$\Delta$.
	We provide efficient algorithms and NP-hardness results for different values of $\Delta$ and maximum habitat sizes on general and planar graphs.
}
\begin{document}

\maketitle

\begin{abstract}
	\theabstract{}
\end{abstract}

\pagestyle{plain}

\section{Introduction}
\hyphenation{Flusch-nik}

Habitat fragmentation due to human-made infrastructure like roads or train tracks leads to wildlife-vehicle collisions~\cite{Shilling21},
a threat to animals so severe that it has a significant impact on biodiversity~\cite{bennett2017effects}.
Installing wildlife crossings like bridges, tunnels, ropes, etc., to connect habitat patches of different animal species
is a cost-efficient and effective~\cite{HuijserMHKCSA08,HuijserDCAM09} measure against wildlife-vehicle collisions.
A~critical property in the design of wildlife corridors is the distance between habitat patches and the number of obstacles that need to be crossed between them.
It seems desirable to place green bridges such that every animal can reach any part of its habitat by traversing only a few green bridges.
A natural question is where to build green bridges to achieve dense connection of habitats while keeping financial costs low.

\xcitet{Fluschnik and Kellerhals}{FluschnikK24} developed a framework to tackle such questions.
They use a graph in which the vertices represent patches of land, and edges correspond to places where wildlife crossings can be installed.
The \emph{habitats} of animals then correspond to vertex subsets.
\xcitet{Fluschnik and Kellerhals}{FluschnikK24} define multiple classes of problems that represent different demands of habitat connectivity.
In this work, we focus on one of them.
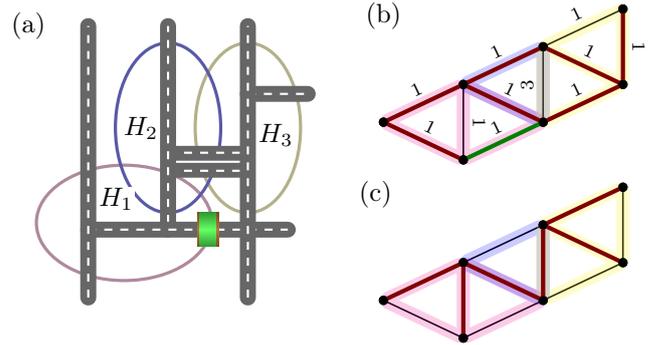
\begin{figure}
	\centering
	\begin{tikzpicture}
		\tikzpreamble{}
		\def\xr{1.05}
		\def\yr{0.9}

		\begin{scope}
		  \node at (-1.75*\xr,2*\yr)[]{(a)};
			\bspframe{}
			\bsphabs{}
			\bsphabsnames{}
			\bsptransp{}
			\def\hwMain{3pt}

			\def\hwSide{4pt}

			\definecolor{darkgreen}{RGB}{0,50,0}     %
			\definecolor{brightgreen}{RGB}{100,255,100} %

			\definecolor{darkbrown}{RGB}{70,30,0}
			\definecolor{brightbrown}{RGB}{180,120,60}

			\pgfdeclareverticalshading{greenGrad}{100bp}{
				color(0bp)=(darkgreen);
				color(50bp)=(brightgreen);
				color(100bp)=(darkgreen)
			}
			\pgfdeclareverticalshading{brownGrad}{100bp}{
				color(0bp)=(darkbrown);
				color(50bp)=(brightbrown);
				color(100bp)=(darkbrown)
			}

			\coordinate (A) at (0.5*\xr,-0.75*\yr);
			\coordinate (B) at (0.5*\xr,-1.25*\yr);

			\shade[shading=brownGrad]
				($(A)+(-\hwSide,0)$) rectangle ($(B)+(\hwSide,0)$);
			\shade[shading=greenGrad]
				($(A)+(-\hwMain,0)$) rectangle ($(B)+(\hwMain,0)$);

		\end{scope}
		\begin{scope}[xshift=4.7*\xr cm,yshift=-0.375*\yr cm]
			\node at (-2.1*\xr,-0.1*\yr)[]{(c)};
			\def\yr{1.0}
			\bspnodes{}
			\bspedges{}
			\bspnodehabs{}
			\bspnodes{}
			\bspsoledges{}
		\end{scope}
		\begin{scope}[xshift=4.7*\xr cm,yshift=2.25*\yr cm]
			\node at (-2/1*\xr,-0.1*\yr)[]{(b)};
			\def\yr{1.0}
			\bspnodes{}
			\bspedgesX{}
			\bspnodehabs{}
			\bspnodes{}
			\bspsoledgesX{}
		\end{scope}
	\end{tikzpicture}
	\caption{
		(a) A road network fragmenting habitats~$H_1$, $H_2$, and~$H_3$.
		(b \& c) The corresponding graph
		and a solution (thicker edges) for
		(b) \DwgbpAcr{} of minimum cost where the broader road part involves cost 3 (compared to 1 for the others) and the already built green bridge forms a forced edge;
		(c) \TwoDiam{} of minimum cardinality where edge costs and forced edges are ignored.
	}
	\label{fig:example}
\end{figure}

\decprob{\DwgbpTsc{} (\DwgbpAcr{})}{dwgbp}
{An undirected graph~$G=(V,E)$ with edge costs $c\colon E\to \Nzero$,
a set~$\calH=\{H_1,\dots,H_r\}$ of habitats with $H_i\subseteq V$ and~$|H_i|\geq 2$ for all~$i\in\set{r}$,
a set~$\setFc\subseteq E$ of \emph{forced} edges,
an integer~$k\in\Nzero$.}
{Is there a subset~$F\subseteq E$ with
$\setFc\subseteq F$ and
$c(F)\ceq \sum_{e\in F}c(e)\leq k$ such that
$\diam(G[F][H])\leq d$
for every~$H\in\HHH$?
}

If all edges have
\emph{unit} cost 1,
i.e.,
if $c\equiv 1$,
then we call the problem short \DgbpTsc{} (\DgbpAcr{}).
\xcitet{Fluschnik and Kellerhals}{FluschnikK24} study only \DgbpAcr{} and without forced edges.
Thus,
we generalize their model by allowing general edge costs.
The (possibly empty) set of forced edges that must be contained in every solution
only extends but does not generalize their model
since
any forced edge
can be modeled by a habitat of size two.
A forced edge models the case where a green bridge is already built or planned,
and thereby allows for incremental approaches of reconnecting habitats in an area.

\begin{example}
 \label{ex:intro}
 Consider the example given in~\cref{fig:example} for $d=2$.
 In~(a),
 we are given three habitats~$H_1$, $H_2$, and~$H_3$
 of different species
 fragmented by a street network.
 The middle horizontal street part is broader compared to all the others,
 and a green bridge is already built over a part of the horizontal street at the bottom.
 In (b)
 we show the graph~$G$ obtained from (a)
 with a vertex for each habitated land patch
 and with edge costs~$c$ (above each edge),
 where all edges have cost 1 except for the one that corresponds the broader street (which has edge cost 3).
 Each habitat~$H$ is depicted by a subgraph~$G[H]$ colored in correspondence with (a).
 The forced edge is drawn in green.
 Thick lines correspond to a solution of cost~$8$, which is minimum.
 In (c),
 we show the graph from~(b) when edge costs and forced edge are ignored.
 Thick lines correspond to a minimum-cardinality solution.
 If the true costs were applied and the forced edges would be added,
 the obtained solution would have cost~$9$.
 \hfill$\diamond$
\end{example}

This work contributes to a better understanding of \DgbpAcr{} from an algorithmic and complexity theoretic point of view.
As a starting point, and to garner better understanding of the structural properties of the problem, we focus solely on the case $d=2$.
This is also interesting from an ecological perspective as then animals from any two patches of the habitat can meet each other by each crossing at most one bridge.
Moreover, the case $d=2$ has also been studied the most in related problems (see below).
Our main interest is the complexity of \DwgbpAcr[2]{} on \emph{planar} graphs as well as grpahs with bounded \emph{maximum degree}.
As in our setting, the graphs correspond to the duals of road networks, these are natural assumptions.
Moreover, we focus on instances in which the habitats are small.
Small habitats appear more often for small mammals, amphibians, and reptiles
\cite{HammerM08,hennings2010wildlife}.

\paragraph{Related Work.}

The problems introduced by \xcitet{Fluschnik and Kellerhals}{FluschnikK24} are not the first
\emph{network design problems}
for \emph{wildlife preservation} studied from a theoretical and algorithmic perspective.
For designing wildlife corridors,
\xcitet{Lai et al.}{lai_steiner_2011}  consider the problem of finding a subgraph induced by a vertex set so that within each habitat, a set of habitat-specific terminals is connected.
\xcitet{LeBras et al.}{lebras_robust_2013} extend this problem to guarantee multiple disjoint paths between each terminal pair.
Both works highlight the importance of focusing on planar inputs.

The need to ``connect fragmented habitats'' also appears in a range of areas outside of wildlife preservation.
Applications include social networks \cite{angluin_inferring_2010}, graph drawing \cite{du1988interconnection,buchin2011planar,brandes_path_2012,klemz2014tree}, combinatorial auctions \cite{conitzer_combinatorial_2004}, vacuum technology \cite{du_complexity_1995}, structural biology \cite{agarwal_connectivity_2013}, and reconfigurable computer networks \cite{fan_algorithms_2008,chockler_constructing_2007}.
In the latter context, \xcitet{Chockler et al.}{chockler_constructing_2007} call for algorithms that ``connect habitats'' with a small diameter.
\xcitet{Chen et al.}{chen_polynomial_2015} list further applications of the studied problem and its variants.
Another problem concerned with connecting more than one habitat is \textsc{Steiner Forest} \cite{bateni_approximation_2011}, extending the well-known \textsc{Steiner Tree} problem.

\emph{Green bridges placement.}
\citet{FluschnikK24} proved that \TwoDiam{} is \nphard{} even if $r=1$,
and polynomial-time solvable if the graph has constant maximum degree and there are only constantly many habitats.
\xcitet{Herrendorf et al.}{HerrendorfKMS24} show that an algorithm solving the problem in $2^{o(n^2+r)}$ time would break the exponential time hypothesis, even if each habitat has size $4$.
Several related problems consider the task of connecting just one ``habitat''.
Minimizing the diameter of the habitat with a fixed budget is also \nphard{}~\cite{plesnik_complexity_1981},
as is the problem of adding few edges to a graph to make its diameter two~\cite{schoone_diameter_1987,frati_augmenting_2015}.

The variant where every habitat has size three has been studied extensively and is \nphard{} even if~$G$ is a clique~\cite{fan_algorithms_2008,fluschnik2022placing,hosoda_approximability_2012}.
It is $2$-approximable in polynomial time, which is best possible under the unique games conjecture~\cite{hosoda_approximability_2012}.
The variant is polynomial-time solvable on planar graphs if each habitat induces a triangle being the boundary of a face,
or when the graph has maximum degree and each habitat induces a triangle~\cite{fluschnik2022placing}.
If there is a solution that spans a tree in~$G$,
then it can be found in linear time~\cite{korach_clustering_2003}.

\TwoDiam{} has been studied by \xcitet{Jansson, Levcopoulos, and Lingas}{jansson_online_2021} who give a po\-ly\-no\-mi\-al-time $O(\maxH^4)$-approximation algorithm,
where~$\maxH$ denotes the maximum habitat size.
Moreover, there are several studies on heuristics for \TwoDiam{}~\cite{onus_minimum_2011, onus_parameterized_2016, oztoprak_topic_2017}.
Besides the primary objective, the heuristics aim to keep the maximum degree of $G[F]$ low.

Another related network design problem is network sparsification.
\xcitet{Gionis et al.}{gionis_community_2017} consider a variant of \TwoDiam{} called \prob{Sparse Stars} where an edge subset~$F$ is a solution if for each $H \in \mathcal{H}$ the graph $G[F][H]$ contains a spanning star.
As a star has diameter two, any such $F$ is also a solution for \TwoDiam{}.
They give a polynomial-time approximation algorithm for \prob{Sparse Stars}.
\xcitet{Herrendorf et al.}{HerrendorfKMS24} study the parameterized complexity of \prob{Sparse Stars} with respect to the parameters solution size $|F|$ and maximum degree~$\Delta$,
amongst others.
\xcitet{Korach and Stern}{korach_complete_2008} examine a variant of \prob{Sparse Stars} which additionally requires that the solution~$F$ induces a spanning tree of~$G$.

\paragraph{Our Contributions.}

Our results are summarized in~\cref{tab:results}.
\begin{table}[t!]
\caption{Overview on our results for \DgbpAcr[2]{} regarding maximum degree~$\Delta$, maximum habitat size, and further restrictions like planarity.
Recall that $n$, $m$, and $r$ denote the number of vertices, of edges, and of habitats, respectively.\\
($^\ast$: on planar graphs; $^\dagger$: even for unit costs)}
 \label{tab:results}
 \centering
	\begin{tabular}{@{}llll}\toprule
	 $\Delta$
 & \multicolumn{3}{c}{Maximum habitat size}
 \\ \cmidrule{2-4}
 &
 $\leq 3$ &
 $= 4$
 \qquad\quad
 &
 $\geq 5$
 \qquad\quad
 \\ \midrule
 $\leq 3$ & \multicolumn{3}{c}{\cellcolor{green!10!white} $\O(n+r)$ \tref{thm:maxdeg3}}
 \\
 $= 4$ & \multicolumn{2}{c}{\cellcolor{green!15!white}$\O(n+r)$ \tref{cor:deg4}} & \emph{open}
 \\
 $\geq 5$
 & \multicolumn{3}{c}{\cellcolor{red!10!white} \nphard$^\dagger$~\tref[(i)]{thm:nph:deg5}}
 \\
 $\geq 5$$^\ast$ & \cellcolor{green!10!white}
 $\O(n^2r^2+r^3)$~\tref{prop:planarH3}
 & \multicolumn{2}{c}{\cellcolor{red!15!white} \nphard$^\dagger$~\tref[(ii)]{thm:nph:deg5}}
 \\
 \bottomrule
 \end{tabular}
\end{table}
We provide tractability borders
(i.e., polynomial-time solvability versus \NP-hardness)
for \TwoDiamC{} when combining sparsity as described above with small maximum habitat sizes.
On the positive side,
we show that \DwgbpAcr[2]{} is polynomial-time solvable on instances with
\begin{compactenum}[(i)]
\item maximum degree three;
\item maximum degree four and maximum habitat size four;\label{item:deg4hab4}
\item planar graphs and maximum habitat size three.\label{item:planarH3}
\end{compactenum}
Cases (i) and (ii) are even solvable in linear time.
Complementing these,
we show that
\DgbpAcr[2]{} is \NP-hard even if the input graph
has maximum degree five and
\begin{compactenum}[(i)]
\item the maximum habitat size is three;
\item is also planar, and the maximum habitat size is four.
\end{compactenum}

For all of our polynomial-time algorithms, the habitat size is constant;
therefore the complexity of these specific cases arises from the way the habitats intersect.
For each of the polynomial-time solvable cases, we make use of different structural observations and algorithmic techniques.

Highlights of this work are the structural observations of po\-ly\-no\-mi\-al-time algorithms for instances with planar graphs and habitat size three (\cref{sec:planar})
and instances with maximum degree four and habitat size four (\cref{sec:deg4}).
The former algorithm exploits that size-three habitats in planar graphs admit a hierarchy which we can break down with an intricate data reduction rule.
For the latter result, we introduce the habitat intersection graph that captures the complexity of how habitats intersect; in this case, we are able to show that each connected component of the intersection graph is a path or a cycle, or has constant size.
For larger habitat sizes, we were unable to prove the same structure for the habitat intersection graph, but we conjecture that it keeps the same structure.
Both structural properties cease to hold
when increasing the habitat size to four in planar graphs,
or the maximum degree to five, even for size-three habitats.
This is reflected by our hardness results.

We first define some notation and state some preprocessing rules used throughout the paper in \cref{sec:prelim}.
Then,
we present the algorithms, starting with the algorithm on planar graphs in \cref{sec:planar}.
Afterwards, our results are ordered by increasing maximum degree (\cref{sec:deg3,sec:deg4}), ending on the hardness results for graphs of maximum degree five (\cref{sec:maxdeg5}).

\section{Preliminaries and Preprocessing}
\label{sec:prelim}

For an (undirected) graph $G = (V, E)$, we also denote by $V(G)$ and $E(G)$ the vertex and edge sets
and let $n \ceq |V|$ and $m \ceq |E|$.
For a vertex set~$W\subseteq V$,
the graph~$G[W]$ \emph{induced on~$W$} is the graph with vertex set~$W$ and all edges in~$E$ with both endpoints in~$W$.
For an edge set~$E'\subseteq E$,
the graph \emph{induced on~$E'$} is~$G[E'] \ceq (V, E')$.
For an edge set~$E'$ and a vertex set~$W\subseteq V$,
the graph~$G[E'][W]$ is the graph~$G[E']$ induced on~$W$.
For $u, v \in V$, let $\dist_G(u, v)$ be the length of a shortest path between $u$ and $v$.
The \emph{diameter} of $G$ is $\diam(G) \coloneqq \max_{u,v\in V} \dist_G(u,v)$.
The degree $\deg_G(v)$ of a vertex~$v$ is the number of edges with~$v$ as an endpoint.
The maximum degree of~$G$ is the maximum degree of a vertex in~$G$.
A graph is planar if it can be drawn into the two-dimensional plane without crossing edges.
A set~$W\subseteq V$ is a vertex cover of~$G$ if~$G[V\setminus W]$ is edgeless.

\paragraph{Preprocessing.} We use the following straightforward reduction rules throughout our paper.

\begin{rrule}
	\label{rr:infeas}
	Return \no{} if $\diam(G[H])>2$ for some $H \in \HHH$.
\end{rrule}

If there is an edge $e=\{u,v\}$ contained in a habitat $H$
such that the distance from $u$ to $v$ is more than two when removing the edge,
then $e$ must be in every solution.
Thus:

\begin{rrule}
	\label{rr:nontriangle}
	If there is a habitat~$H\in\HHH$ and an edge~$e\in E(G[H])$ not contained in a triangle in~$G[H]$,
	then mark~$e$ as forced.
\end{rrule}
\cref{rr:nontriangle} implies
the next more direct rule.
\begin{rrule}
	\label{rr:no-triangle-and-cut}
	If a habitat $H \in \HHH$ contains no triangle,
	then mark each edge in $G[H]$ as forced.
	If $G[H]$ for some $H \in \HHH$ contains a cut vertex~$v$,
	then mark all edges in~$G[H]$ incident to $v$ as forced.
\end{rrule}

Next, we remove all unnecessary habitats, edges, and vertices, and small components.

\begin{rrule}
	\label{rr:forced-habs}
	If $\diam(G[\setFc][H]) \le 2$ for a habitat $H \in \HHH$,
	then delete $H$ from $\HHH$.
\end{rrule}

\begin{rrule}
	\label{rr:irredge}
	If there is an edge~$e\in E$ such that there is no habitat~$H \in \HHH$ with~$e\in E(G[H])$,
	then delete~$e$. If further $e\in \setFc$, then also set~$k\ceq k-c(e)$.
\end{rrule}

\begin{rrule}
	\label{rr:irrvertex}
	If $v\in V$ is isolated,
	then delete~$v$.
\end{rrule}

\begin{rrule}
	\label{rr:smallcomps}
	Let~$C$ be a connected component of~$G$ with $|V(C)|\leq 6$.
	Then,
	in constant time,
	brute-force a minimum-cost set~$F_C^* \subseteq E(C)$
	with $\setFc \cap E(C) \subseteq F_C^*$
	such that $\diam(G[F_C^*][H])\leq 2$ for every habitat~$H\subseteq V(C)$.
	Next, delete~$C$ from~$G$ and
	all habitats~$H\subseteq V(C)$ from~$\calH$,
	and set~$k\ceq k-c(F_C^*)$.
\end{rrule}

\begin{observation}
	\label{obs:rr-applied}
	\allrrs{} can be exhaustively applied to an instance $(G, c, \HHH, \setFc, k)$ of \TwoDiamC{} in $\O(r \cdot \maxH^3 + n + m)$ time, where $\maxH \coloneqq \max_{H \in \HHH} |H|$.
	Each component has more than $6$ vertices and for each habitat $H$, $G[H]$
	(i) is $2$-connected,
	(ii) has at least two unforced edges and
	(iii) at least one triangle.
\end{observation}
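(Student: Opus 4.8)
The plan is to verify the structural guarantees one at a time, in each case arguing the contrapositive (if the property failed, some reduction rule would still apply, contradicting exhaustiveness), and then to bound the running time rule by rule. The component bound is immediate: once \cref{rr:smallcomps} is inapplicable, every connected component has more than six vertices, and since the rules deleting vertices or edges are applied before it in the fixpoint, no small component survives. For property~(iii) I argue that if $G[H]$ had no triangle, then \cref{rr:no-triangle-and-cut} marks \emph{every} edge of $G[H]$ as forced, so $G[\setFc][H]=G[H]$ and $\diam(G[\setFc][H])=\diam(G[H])\le 2$ (the last inequality because \cref{rr:infeas} no longer fires), whence \cref{rr:forced-habs} deletes $H$. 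For property~(i) the key observation is that a cut vertex $v$ of a diameter-two graph dominates it: any $x,y$ in two different components of $G[H]-v$ satisfy $\dist_{G[H]}(x,y)=2$ realized only through $v$, forcing $x\sim v$ and $y\sim v$, and ranging over all such pairs shows $v$ is adjacent to all of $H$. Then \cref{rr:no-triangle-and-cut} forces every edge at $v$, so each pair $x,y$ has the forced length-two path through $v$, giving $\diam(G[\setFc][H])\le 2$ and deletion by \cref{rr:forced-habs}; hence a surviving $H$ has no cut vertex and, having a triangle by~(iii) and thus more than two vertices, is $2$-connected.

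For property~(ii) I again argue by contradiction on the number of unforced edges. If there are none, then $G[\setFc][H]=G[H]$ has diameter at most two and $H$ is removed by \cref{rr:forced-habs}, exactly as above. The remaining case---exactly one unforced edge $e=\{x,y\}$---is where I expect the real work to lie. Here $e$ must lie on a triangle $x\,y\,z$ (otherwise \cref{rr:nontriangle} forces it), so the forced edges $\{x,z\}$ and $\{y,z\}$ already yield $\dist_{G[\setFc][H]}(x,y)\le 2$; the goal is to upgrade this to $\diam(G[\setFc][H])\le 2$ for \emph{all} pairs, so that \cref{rr:forced-habs} deletes $H$ and every surviving habitat has at least two unforced edges. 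I anticipate this to be the main obstacle: deleting the single edge $e$ can only endanger pairs whose sole short connection ran through $e$, and I would try to reroute each such pair using the apex $z$ together with the $2$-connectivity from~(i). Making this rerouting exhaustive---or pinning down exactly which additional consequence of the rules excludes a surviving ``bad'' pair---is the delicate point I would spend the most care on.

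Finally, for the running time I would run a single fixpoint sweep and exploit monotonicity: forcing only enlarges $\setFc$, while habitats, vertices, and edges are only ever deleted, so the process terminates. The dominant cost is per habitat: for each $H$ I build $G[H]$, test $\diam(G[H])\le 2$, detect triangles, and detect cut vertices, each in $\O(\maxH^3)$ time, and I evaluate $\diam(G[\setFc][H])$ for \cref{rr:forced-habs} within the same budget, summing to $\O(r\cdot\maxH^3)$. Because deleting edges contained in no habitat (\cref{rr:irredge}), isolated vertices (\cref{rr:irrvertex}), and small components (\cref{rr:smallcomps}) never alters the induced subgraph $G[H]$ of any surviving habitat, these graph-level rules cost only $\O(n+m)$ in total and trigger no re-forcing. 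The one ordering subtlety to verify is that removing irrelevant edges and satisfied habitats before \cref{rr:smallcomps} can create fresh small components, which are then cleaned up in the same $\O(n+m)$ pass without cascading (deleting a whole component leaves the others untouched). Combining the two bounds gives the claimed $\O(r\cdot\maxH^3+n+m)$.
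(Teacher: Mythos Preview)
Your approach matches the paper's closely. The component bound, properties~(i) and~(iii), and the running-time analysis are argued essentially the same way; in places you supply detail the paper leaves implicit (for instance, that a cut vertex of a diameter-two graph must be universal, which is what makes the forced star at~$v$ already give diameter~$\le 2$).

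The only real divergence is property~(ii). You treat the one-unforced-edge case as the main obstacle, sketch a rerouting idea through the triangle apex~$z$, and leave the completion open. The paper, by contrast, dispatches this case in a single sentence: from inapplicability of \cref{rr:infeas} and \cref{rr:nontriangle} it simply asserts $\diam(G[\setFc][H])\le 2$, hence \cref{rr:forced-habs} removes~$H$. No explicit rerouting argument is carried out. So relative to the paper you are not missing an idea---you are applying more scrutiny than the paper does. Your caution is not unreasonable: the implication ``the lone unforced edge lies in a triangle in $G[H]$, hence removing it keeps diameter at most two'' is not self-evident for arbitrary habitat sizes, though it is easy to verify by hand for the habitats with $|H|\le 4$ that drive the paper's later applications.
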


\begin{proof}
	We first apply \cref{rr:infeas}; this takes $\O(r \cdot \maxH^3)$ time by computing at most $\maxH$ breadth-first searches for each habitat.
	Afterwards, we apply \cref{rr:nontriangle} (and thereby \cref{rr:no-triangle-and-cut}); this can be done alongside the breadth-first searches.
	Finally, we apply \cref{rr:irredge}, followed by \cref{rr:forced-habs} and \cref{rr:smallcomps} in linear time in the instance size.

 From~\cref{rr:smallcomps} it follows directly that $H$ is in a component with more than 6 vertices.

 (i) Since~$\diam(G[H])\leq 2$ due to~\cref{rr:infeas},
 we know that~$G[H]$ is connected.
 Suppose that~$G[H]$ is 1-connected.
 Then $H$ contains a cut vertex,
 and \cref{rr:forced-habs} or \ref{rr:no-triangle-and-cut} is applicable.

 (ii) Suppose that all but one of the edges in~$G[H]$ are forced.
 Since \cref{rr:infeas,rr:nontriangle} are inapplicable,
 we have $\diam(G[\setFc][H]) \le 2$,
 contradicting that \cref{rr:forced-habs} is inapplicable.

 (iii) If~$G[H]$ contains no triangle,
 then \cref{rr:no-triangle-and-cut} or \ref{rr:forced-habs} is applicable.
\end{proof}

We mention in passing that, after applying our preprocessing,
habiatats in instances with maximum degree $3$ ($4$) have size at most $6$ ($8$).
This can be verified by checking all graphs with maximum degree $4$ and diameter $2$, which have at most $15$ vertices \cite{elspas1964topological}.

\section{Planar graphs and habitats of size three}
\label{sec:planar}

\xcitet{Herkenrath et al.}{fluschnik2022placing} showed that, when every habitat has size three and induces a face in a planar graph, \TwoDiamC{} is efficiently solvable.\footnote{They show this for a variation of \TwoDiamC{} where one only needs that $G[F][H]$ is \emph{connected} for each $H \in \HHH$. For triangle habitats this is equivalent to our diameter constraint.}
We extend their result by showing that this is possible without the face constraint.

\begin{theorem}
	\label{prop:planarH3}
	\DwgbpAcr[2]{} on planar graphs is solvable in $O(n^2r^2+r^3)$ time
	if the habitat size is at most three.
\end{theorem}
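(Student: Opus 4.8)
The plan is to first collapse the problem to its combinatorial core using the structure guaranteed by \cref{obs:rr-applied}. Since the habitat size is at most three, and each surviving $G[H]$ is $2$-connected and contains a triangle, every remaining habitat $H$ satisfies $|H|=3$ with $G[H]$ a triangle. For a three-vertex habitat, $\diam(G[F][H])\le 2$ holds exactly when $G[F][H]$ is connected, i.e.\ when $F$ retains at least two of the three triangle edges. Thus the task is to find a minimum-cost $F\supseteq\setFc$ keeping at least two edges of every habitat-triangle. Passing to the complement $D\ceq E\setminus F$, this is equivalent to selecting a maximum-cost set $D$ of non-forced edges containing \emph{at most one} edge of each habitat-triangle; equivalently, a maximum-weight independent set in the conflict graph $\Gamma$ whose vertices are the edges of $G$ and in which the three edges of each habitat-triangle are made pairwise adjacent. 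Since two distinct triangles share at most one edge, $\Gamma$ is an edge-disjoint union of $r$ triangles.

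First I would record that this reduction alone is not enough: maximum-weight independent set on an arbitrary edge-disjoint union of triangles is intractable, so I must invoke planarity. The plan here is to fix a planar embedding of $G$ and expose the \emph{hierarchy} of habitat-triangles promised in the introduction. Each habitat-triangle is a $3$-cycle bounding a closed disc; the discs of the triangles sharing one fixed edge~$ab$ are linearly nested (a ``book'' on $ab$, as in a $K_{2,t}$ together with $ab$), and, reading the nesting of discs globally, the triangles organize into a forest. The key claim I would prove is that $\Gamma$ inherits from this nesting a laminar/tree-like decomposition whose bags interact only through shared edges, where the local obstruction is exactly the \kfoure{} formed by two edge-sharing triangles.

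Next I would introduce the intricate data reduction rule: it resolves the innermost elements of the hierarchy (or collapses long nested books), each application strictly shrinking the instance while preserving the optimum cost, with correctness by a local exchange argument on which edge of each triangle is omitted. After exhaustive application the remaining conflict structure is simple enough — each cluster being a path, a cycle, or a piece of bounded size — that a bottom-up dynamic program over the hierarchy computes the maximum-weight independent set; the state at a triangle records which incident edge may be omitted, consistently with its children and with the book it belongs to. Finally I would reconstruct $F=E\setminus D$, re-add the forced edges and the contributions already accounted for by \cref{rr:smallcomps} and the reduction rules, and verify the running time: computing the embedding and the triangle nesting dominates, and combined with the dynamic program this yields $\O(n^2r^2+r^3)$.

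The hard part will be the structural analysis underlying the middle two paragraphs: proving that size-three habitats in a planar graph genuinely form the claimed nested hierarchy, and designing the reduction rule so that it is simultaneously \emph{safe} (optimum-preserving even though a single omitted edge may be the omitted edge of many triangles in a book) and \emph{progress-making}. The central difficulty is the interaction between different shared edges — the three edges of one triangle may each belong to a different book — which defeats any naive per-book argument and is precisely what forces the reduction rule to be intricate.
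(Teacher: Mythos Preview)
Your high-level plan matches the paper's: after preprocessing, every habitat is a triangle; fix a planar embedding; use the inside/outside hierarchy of triangles; repeatedly reduce an innermost (``reducible'') triangle; finish on the residual instance. Two points, however, separate your proposal from a proof.

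First, the reduction rule is where the real content lives, and you leave it unspecified. The paper's \cref{rr:reducibleH} is precisely the ``intricate'' rule you allude to: for a reducible habitat $H$, it computes four inside optima --- the unrestricted $\opt_H$ and, for each edge $e\in E_H$, the $e$-omitting optimum $\opt_H^e$ --- and then \emph{re-weights the three edges of $H$} (adding $\opt_H$ to each, subtracting $\opt_H^e$ from each non-forced $e$, forcing edges that turn negative, and adjusting $k$) so that the inside of $H$ can be deleted outright. The non-obvious fact is that four states can be encoded into three edge costs plus a shift of $k$; your text flags this difficulty (``safe and progress-making'' across shared edges) but does not resolve it. Without this encoding, the interaction you worry about --- one omitted edge serving many triangles --- is exactly what blocks a naive book-by-book reduction.

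Second, your endgame diverges from the paper and from what you actually need. The paper does \emph{not} show that the residual conflict structure becomes ``a path, a cycle, or a piece of bounded size'' --- that is the habitat-intersection-graph statement from \cref{sec:deg4} for maximum degree four, and it does not hold here. What \cref{rr:reducibleH} achieves is that after exhaustive application every remaining habitat \emph{induces a face}; the instance is then solved by the $O(n^2r)$ algorithm of Herkenrath et~al. for face-triangles, invoked as a black box both inside the rule (to compute the four inside optima) and at the end. This is also where the claimed running time comes from: at most $r$ applications of the rule, each costing $O(r^2)$ to locate a reducible habitat plus $O(n^2r)$ to compute the four optima, giving $O(n^2r^2+r^3)$. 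Your independent-set reformulation is sound, but the DP you sketch over the hierarchy would have to reprove the face case rather than cite it, and the structural claim you would need for it is not the path/cycle one.
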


\newcommand{\xin}{\mathrm{in}}
\newcommand{\xout}{\mathrm{out}}
\newcommand{\ain}[1]{\ensuremath{A^{\xin}_{#1}}}
\newcommand{\aout}[1]{\ensuremath{A^{\xout}_{#1}}}
\newcommand{\vin}[1]{\ensuremath{V^{\xin}_{#1}}}
\newcommand{\vout}[1]{\ensuremath{V^{\xout}_{#1}}}
\newcommand{\ein}[1]{\ensuremath{E^{\xin}_{#1}}}
\newcommand{\eout}[1]{\ensuremath{E^{\xout}_{#1}}}
\newcommand{\hin}[1]{\ensuremath{\HHH^{\xin}_{#1}}}
\newcommand{\hout}[1]{\ensuremath{\HHH^{\xout}_{#1}}}
\newcommand{\fin}[1]{\ensuremath{F_{#1}}}%
\newcommand{\fcin}[1]{\ensuremath{F^{*\xin}_{#1}}}

We assume that the graph is embedded in the plane $\R^2$ via a straight-line embedding $\phi$ and that each habitat induces a triangle by \cref{obs:rr-applied}.
The crucial observation
is that for any habitat $H \in \HHH$ of size 3, $G[H]$ splits the plane into an inner, bounded area $\ain{H} \subseteq \R^2$ and an outer, unbounded area $\aout{H} \subseteq \R^2$.
A vertex or edge is contained in (i) $G[H]$, (ii), the inner, or (iii) the outer area.
Habitats cannot be both inside and outside as then two edges would cross.
This implies a hierarchy on the habitats:
For every two habitats,
either one has a vertex in the inner area of the other,
or both habitats consider the other as outside of them.
Hence, there are habitats that induce faces (at the lowest level of the hierarchy), and thus, there is a habitat that contains only face habitats; we call such a habitat \emph{reducible}, see \cref{def:reducible}.

As habitats inside a reducible habitat $H$ may share edges with $H$,
the minimum cost of reconnecting them depends on the subsolution, i.e., the edges in $G[F][H]$.
The main tool for \cref{prop:planarH3} is \cref{rr:reducibleH}, which eliminates all vertices, edges, and habitats inside $H$,
propagating the costs of its optimal inside solutions into the cost of the edges of $H$.
This is remarkable since there are four possible subsolutions for $H$, but only three edges to store the information in:
A solution may contain all or all but one of the edges in $H$.

Exhaustive application of our reduction rule results in an instance in which every habitat induces a face and allows us to apply the algorithm by
\xcitet{Herkenrath et al.}{fluschnik2022placing}.

Let us introduce some notation.
For a habitat $H \in \HHH$,
a vertex $v \in V\setminus H$ is either inside or outside a habitat~$H$,
i.e., $\phi(v)\in \ain{H}$ or $\phi(v)\in \aout{H}$ (we define the vertices in $H$ to be neither inside or outside of $H$).
As $G$ is planar,
the endpoints of an edge in~$E\setminus E(G[H])$
cannot be both inside and outside of~$H$.
Hence,
$e\in E\setminus E(G[H])$ is inside if $e$ has an endpoint in~$\ain{H}$
and outside if $e$ has an endpoint in~$\aout{H}$.
Finally, we define a habitat $H' \in \HHH$, $H' \neq H$,
to be inside if $\ain{H'} \subsetneq \ain{H}$
and outside if $\aout{H'} \subsetneq \aout{H}$.
We denote by~$\vin{H},\vout{H},\ein{H},\eout{H},\hin{H}$, $\hout{H}$,
the sets of vertices, edges, and habitats that are inside or outside of~$H$.
Note that $H$ induces an (inner) face if and only if $\hin{H} = \emptyset$.

\newcommand{\reducible}{reducible}
\begin{definition}
	\label{def:reducible}
 We call a habitat~$H\in\HHH$ \emph{\reducible}
 if~$\hin{H}\neq \emptyset$ and for each~$H'\in\hin{H}$ it holds true that~$\hin{H'}= \emptyset$.
\end{definition}

We prove that if at least one habitat contains another,
then there is a \reducible{} habitat.

\begin{lemma}
 \label{lem:ExReducibleH}
 Let~$G$ be a non-empty finite planar graph with an embedding into the plane and~$\HHH\neq \emptyset$ a set of pairwise different habitats in~$G$ each inducing a triangle.
 Then either~$\hin{H}=\emptyset$ for all~$H\in\HHH$ or there is a \reducible{} habitat.
\end{lemma}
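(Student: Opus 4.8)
The plan is to treat containment of the inner regions as a strict partial order on the habitats and to exhibit a \reducible{} habitat as a minimal element of the right subset. Define $H' \prec H$ to mean $H' \in \hin{H}$, i.e.\ $\ain{H'} \subsetneq \ain{H}$. Because strict set inclusion is irreflexive and transitive, $\prec$ is a strict partial order on $\HHH$; transitivity, $\ain{H''} \subsetneq \ain{H'} \subsetneq \ain{H} \Rightarrow \ain{H''} \subsetneq \ain{H}$, is purely set-theoretic and in particular rules out $\prec$-cycles. The planarity hypothesis and the triangle assumption are what make each $\ain{H}$ a well-defined bounded region in the first place, via the splitting of the plane recorded just before the lemma; the order axioms we actually use come for free from set inclusion.

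Suppose the first alternative fails, so that $S \ceq \{ H \in \HHH : \hin{H} \neq \emptyset \}$ is nonempty. Since $\HHH$ is finite, so is $S$, and a finite nonempty subset of a strict poset has a $\prec$-minimal element; let $H$ be one, so that no $H'' \in S$ satisfies $H'' \prec H$. I claim $H$ is \reducible{} in the sense of \Cref{def:reducible}. Since $H \in S$ we have $\hin{H} \neq \emptyset$. Now take any $H' \in \hin{H}$; then $H' \prec H$. If we had $\hin{H'} \neq \emptyset$, then $H'$ would lie in $S$ and still satisfy $H' \prec H$, contradicting the $\prec$-minimality of $H$ within $S$. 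Hence $\hin{H'} = \emptyset$ for every $H' \in \hin{H}$, which is exactly the condition for $H$ to be \reducible{}.

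The argument is short, so there is no single hard computation; the points that need care are structural rather than technical. The key subtlety—and the main obstacle to stating the proof correctly—is that one must minimise within $S$ rather than over all of $\HHH$: a globally $\prec$-minimal habitat has $\hin{\cdot} = \emptyset$ and is therefore \emph{not} \reducible{}, so what we want is the innermost habitat \emph{among those that still contain something}, which is precisely a $\prec$-minimal element of $S$. Beyond that, the only thing to be confident about is the geometric backbone of the whole ``inside'' framework, namely that the inner regions of triangle habitats are genuinely nested and never partially overlap; this is the no-crossing observation stated before the lemma (two habitats each partly inside the other would force a pair of their edges to cross, impossible in the planar embedding), and it is what justifies reading $\ain{H'} \subsetneq \ain{H}$ as ``$H'$ lies inside $H$'' in the first place.
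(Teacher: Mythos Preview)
Your proof is correct and follows essentially the same idea as the paper: both exploit that the relation ``$H'$ is inside $H$'' is a strict partial order on the finite set $\HHH$, so it is well-founded. The paper phrases this as an infinite-descent argument (build a chain longer than $\binom{|V(G)|}{3}$ and obtain a repetition by pigeonhole), whereas you take the more direct route of picking a $\prec$-minimal element of $S$; your formulation is a bit cleaner but the underlying argument is the same.
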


\begin{proof}
 Assume not.
 Then for every habitat~$H$ with~$\hin{H}\neq \emptyset$
 there is~$H'\in\hin{H}$ with~$\hin{H'}\neq \emptyset$.
 This way,
 we can construct a sequence~$H_1,\dots,H_j$ of habitats with~$j>\binom{|V(G)|}{3}$
 such that~$H_{i+1}\in\hin{H_{i}}$ for each~$i\in\set{j-1}$.
 Since~$G$ is finite and $j>\binom{|V(G)|}{3}$,
 there are at least two habitats~$H_q$ and~$H_p$ such that~$H_q=H_p$,
 a contradiction.
\end{proof}

We next define some terms that we need for \cref{rr:reducibleH}, alongside an example in \cref{fig:planarhab3}.
Let $H \in \HHH$ be a habitat;
define $E_H \coloneqq E(G[H])$.
The \emph{inner graph $G_H$ of $H$} is defined as the subgraph of $G$ containing all edges in $\ein{H} \cup E_H$.
The \emph{inner cost function} $c_H$ assigns cost $c_H(e) \coloneqq c(e)$ to each $e \in \ein{H}$ and cost $c_H(e) \coloneqq 0$ to each $e \in E_H$.
We define the \emph{inside optimum solution} $F_H$ for $H$ to be an edge set $E(G_H) \cap \setFc \subseteq F_H \subseteq E(G_H)$ satisfying $H' \subseteq V(G_H[F_H])$ and $\diam(G[F_H][H']) \le 2$ for all $H' \in \hin{H}$ that minimizes the cost $c_H(F_H) \eqqcolon \opt_H$.
Intuitively, $F_H$ is the cheapest solution with respect to the inner cost function that contains all forced edges satisfying the diameter constraints for all habitats inside of $H$.
In the example in \cref{fig:planarhab3}, $F_H = E_H \cup \{v_r,v_c\}$ and $\opt_H = 3$.

For $e \in E_H$, the \emph{$e$-omitting inner cost function} $c^e_H$ sets $c_H^e(e) \coloneqq 1\!+\!\sum_{e'' \in \ein{H}}c(e'')$ and $c^e_H(e') \coloneqq 0$ for each $e' \in E(G_H) \setminus \{e\}$.
The \emph{$e$-omitting inside optimum solution} $F^e_H$ and $\opt^e_H \coloneqq c^e_H(F^e_H)$ are defined just as the inside optimum solution with the only difference being the cost function.
In the example in \cref{fig:planarhab3}, $F^{\{v_r,v_b\}}_H = F_H \setminus \{\{v_r, v_b\}\} \cup \{\{v_\ell, v_c\}\}$ and $\opt^{\{v_r,v_b\}}_H = 11$.
Further, $\opt^{\{v_\ell,v_b\}}_H = 3$ and $\opt^{\{v_\ell,v_r\}}_H = 7$.

As the cost of $e$ is prohibitively high, we have
$e \notin F^e_H$.
Thus
$\smash{\opt_H\leq \opt^e_H \le \sum_{e' \in \ein{H}} c(e')}$,
and observe the following.

\begin{observation}
	\label{obs:inner-cost}
	For $H \in \HHH$, $e \in E_H$, and a solution $F$,
	\begin{equation*}
		c(F \cap \ein{H}) \ge
		\begin{cases}
			\opt_H, \text{ if } |F \cap E_H| = 3,\\
			\opt^e_H, \text{ if } E_H\setminus F = \{e\}.
		\end{cases}
	\end{equation*}
\end{observation}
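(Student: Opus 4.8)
The plan is to take an arbitrary (global) solution $F$ and show that its restriction $F' \ceq F \cap E(G_H)$ to the inner graph is a feasible candidate for the inside optimization associated with $H$. Its inner cost then bounds $\opt_H$ (resp.\ $\opt^e_H$) from above, and I read off that this cost is exactly $c(F \cap \ein{H})$.

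The heart of the argument is the structural fact $E(G[H']) \subseteq E(G_H) = \ein{H} \cup E_H$ for every $H' \in \hin{H}$. Here planarity and the triangle structure enter: each of the three edges of $H'$ either coincides with an edge of $H$ — and then lies in $E_H$, since the three vertices of $H$ are pairwise adjacent — or it is not an edge of $H$ and, being part of a triangle lying strictly inside $H$, has an endpoint in $\ain{H}$ and hence lies in $\ein{H}$. It follows that $F \cap E(G[H']) = F' \cap E(G[H'])$, so $G[F][H'] = G[F'][H']$ and the diameter-two constraint for every $H' \in \hin{H}$ is inherited by $F'$. Together with $\setFc \cap E(G_H) \subseteq F'$ (because $\setFc \subseteq F$) and $F' \subseteq E(G_H)$, this shows that $F'$ satisfies all requirements of an inside solution.

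Both cases then follow by specialization. In the first case $F$ keeps all three boundary edges, and $F'$ is feasible for the (unrestricted) inside optimum; since $c_H$ vanishes on $E_H$ and agrees with $c$ on $\ein{H}$, minimality gives $c(F \cap \ein{H}) = c_H(F') \ge \opt_H$. In the second case $E_H \setminus F = \{e\}$ forces $e \notin F$, and as $\setFc \subseteq F$ the edge $e$ is not forced; hence $F'$ avoids $e$ while still containing every forced edge of $G_H$, so it is feasible for the $e$-omitting problem and $c^e_H(F') \ge \opt^e_H$. Because $e \notin F'$ the prohibitive cost on $e$ is never charged, while $c^e_H$ coincides with $c$ on $\ein{H}$ and is zero on $E_H \setminus \{e\}$; thus $c^e_H(F') = c(F \cap \ein{H})$, yielding the stronger bound.

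I expect the only genuine obstacle to be the inclusion $E(G[H']) \subseteq E(G_H)$; the rest is bookkeeping about which cost function charges which edge. The delicate point is the case where an inside habitat shares an edge with $H$: one must verify that passing from $F$ to $F \cap E(G_H)$ does not discard an edge on which $G[F][H']$ relies. The planar hierarchy and the pairwise adjacency of the vertices of $H$ ensure that such a shared edge already belongs to $E_H \subseteq E(G_H)$, so no edge relevant to an inside habitat is lost.
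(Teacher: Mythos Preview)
Your argument is correct and follows the same idea the paper uses (restated as equation~\eqref{eq:reducibleh-cost} in the proof of \cref{rr:reducibleH}): restrict a global solution to $E(G_H)$, observe it is feasible for the inside optimization, and read off the cost identity from the fact that $c_H$ and $c^e_H$ vanish on $E_H\setminus\{e\}$. The paper's justification is a single sentence; you have usefully made explicit the structural point $E(G[H'])\subseteq E(G_H)$ for $H'\in\hin{H}$, which the paper leaves implicit.
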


This gives rise to our central reduction rule (see \cref{fig:planarhab3}).

\begin{figure}
 \centering
 \begin{tikzpicture}
  \def\xr{0.71}
  \def\yr{0.95}
  \def\xsh{4.}
  \def\teps{0.1}
  \tikzpreamble{}

  \newcommand{\PHTnodes}{%
		\node (l) at (-1.5*\xr,0*\yr)[xnode]{};
		\node (t) at (0*\xr,0.75*\yr)[xnode]{};
		\node (r) at (1.5*\xr,0*\yr)[xnode]{};
		\node (b) at (0*\xr,-2*\yr)[xnode]{};
		\tikzESd{l/t/1,r/t/3}
  }
  \newcommand{\PHTannot}[1]{%
		\node at (l)[label=90:{$v_\ell$}]{};
		\node at (t)[label=180:{$v_t$}]{};
		\node at (r)[label=90:{$v_r$}]{};
		\node at (b)[label=180:{$v_b$}]{};
		#1=0
			\node at (c)[label={[label distance=-6pt]225:{$v_c$}}]{};
		}
  \newcommand{\PHTnodesF}{%
    \PHTnodes{}
		\node (c) at (0*\xr,-0.75*\yr)[xnode]{};
		\tikzESd{l/c/4,r/c/3,b/c/8}
  }
  \newcommand{\labelk}[2]{
		\node at (-2.125*\xr,1.1*\yr)[anchor=west]{(#1) $k=#2$};
  }

  \newcommand{\PHTfixhab}{
		\draw[-, line width=5pt, opacity=0.2, color=magenta] ($(l)+(4*\teps,\teps)$)
		to ($(r)+(-4*\teps,\teps)$) to ($(t)+(0,-\teps)$) to cycle;
		\node at ($(t)-(0,0.4)$)[color=magenta]{$H'$};
		\draw[-, line width=5pt, opacity=0.2, color=cyan] ($(l)$)
		to ($(r)$) to ($(b)$) to cycle;
		\node at (-1.1*\xr,-1.1*\yr)[color=cyan]{$H$};
  }

  \newcommand{\PHTinhab}{
		\draw[-, line width=5pt, opacity=0.2, color=green!60!black] ($(l)+(4*\teps,-\teps)$)
		to ($(r)+(-4*\teps,-\teps)$) to ($(c)+(0,\teps)$) to cycle;
		\node at ($(c)+(0,0.4)$)[color=green!60!black]{$H_1$};
		\draw[-, line width=5pt, opacity=0.2, color=red] ($(r)+(-2.75*\teps,-2.5*\teps)$)
		to ($(b)+(0.5*\teps,2*\teps)$) to ($(c)+(0.5*\teps,-0.5*\teps)$) to cycle;
		\node at ($(c)+(0.25,-0.14)$)[color=red]{$H_2$};
  }

  \begin{scope}
   \labelk{a}{14}
   \PHTnodesF{}
   \PHTannot{0}
   \tikzESd{l/r/5,r/b/5,b/l/2}
   \PHTfixhab{}
   \PHTinhab{}
  \end{scope}

  \begin{scope}[xshift=1*\xsh*\xr cm]
   \labelk{b}{-1}
   \PHTnodes{}
   \PHTannot{1}
   \tikzESd{l/r/1,r/b/-3,b/l/2}
   \PHTfixhab{}
  \end{scope}

  \begin{scope}[xshift=2*\xsh*\xr cm]
   \labelk{c}{2}
   \PHTnodes{}
   \PHTannot{1}
   \draw[line width=2pt,color=red] (b) to (r);
   \tikzESd{l/r/1,r/b/0,b/l/2}
	 \PHTfixhab{}
  \end{scope}

 \end{tikzpicture}
 \caption{An illustration of the application of \cref{rr:reducibleH},
 with edge costs written on the edges and~$k$ given in the subfigure's label.
 (a) The habitats are~$H'=\{v_\ell,v_r,v_t\}$, $H=\{v_\ell,v_r,v_b\}$,
 $H_1=\{v_\ell,v_r,v_c\}$, and~$H_2=\{v_r,v_b,v_c\}$,
 where~$H$ is the reducible habitat with~$H_1$ and~$H_2$ being inside~$H$.
 (b) After deleting~$H_1$ and~$H_2$ and adjusting edge costs and~$k$.
 (c) After dealing with negative edge costs, where edge~$\{v_r,v_b\}$ (thick, red) is now forced.
 }
 \label{fig:planarhab3}
\end{figure}

\begin{rrule}
	\label{rr:reducibleH}
	Let $H$ be a reducible habitat.
	Then delete $\ein{H}$ from the graph, $\hin{H}$ from the habitat set,
	increase $k$ by $2 \opt_H -\allowbreak \sum_{e \in E_H \setminus \setFc} \opt^e_H$,
	and increase the cost of each $e \in E_H$ by $\opt_H$.
	\emph{Afterwards}, decrease the cost of each $e \in E_H\setminus \setFc$ by $\opt_H^e$.
	For every edge $e$ whose cost $c(e)$ becomes negative in this process, increase $k$ by $|c(e)|$, set $c(e) \coloneqq 0$, and mark $e$ as forced.
\end{rrule}

\newcommand{\Fneg}{\ensuremath{F_{\mathrm{neg}}}}

\begin{proof}
	Denote by $\III$ and $\III'$ the instances before and after application of \cref{rr:reducibleH} and let $c'$ be the new cost function.
	Let us first determine the new bound $k'$ on the solution cost in $\III'$.
	Let $\Fneg \subseteq E_H \setminus \setFc$ be the set of edges whose cost was negative before the final step of the reduction rule.
	For every $e \in \Fneg$, we have $c(e) < \opt^e_H - \opt_H$.
	Then the set of forced edges in $\III'$ is $\setFc \cup \Fneg$, and $k' = k + \delta$, where
	\begin{equation}
		\label{eq:kprime}
		\delta = 2\opt_H - \!\!\!\sum_{e \in E_H \setminus \setFc} \opt^e_H + \sum_{e \in \Fneg} (\opt^e_H - \opt_H - c(e)).
	\end{equation}

	Let us now prove that $\III$ has a solution of cost at most $k$ if and only if $\III'$ has a solution of cost at most $k'$.
	Suppose first that $F$ is a solution of cost $c(F) \le k$ for $\III$.
	We will use the fact that
	\begin{equation}
		\label{eq:reducibleh-cost}
		c(F \cap \ein{H}) \ge
		\begin{cases}
			\opt_H, \text{ if } |F \cap E_H| = 3,\\
			\opt^e_H, \text{ if } E_H\setminus F = \{e\}.
		\end{cases}
	\end{equation}
	To verify this, observe that the inside optimum solution of cost $\opt_H$ is based on the inner cost function, which assigns cost $0$ to every edge in $E_H$,
	and the $e$-omitting inside optimum solution of cost $\opt^e_H$ is based on the $e$-omitting inner cost function, which assigns a prohibitively high cost to $e$.

	We claim that $F' \coloneqq (F \cup \Fneg) \setminus \ein{H}$ is a solution for $\III'$.
	Note that $c(e) \ne c'(e)$ only for all $e \in E_H$ (and of course, for $e \in \ein{H}$, $c'(e)$ is not defined).
	Thus,
	\begin{align*}
			c'(F') &= c(F) - c(F \cap \ein{H}) + \quad\quad \sum_{\mathclap{e \in F \cap (E_H \setminus (\setFc \cup \Fneg))}}\;\; (\opt_H - \opt^e_H)
			\\
			&\qquad + \sum_{e \in E_H \cap \setFc} \opt_H - \sum_{e \in F \cap \Fneg} c(e).\\
			      &\le k - c(F \cap \ein{H}) + |F \cap E_H| \opt_H - \!\!\!\!\!\!\!\!\!\!\sum_{e \in F \cap (E_H \setminus \setFc)}\!\!\!\!\!\!\!\!\!\! \opt^e_H
			      \\
			      &\qquad - \sum_{e \in F \cap \Fneg} (c(e) + \opt_H - \opt^e_H).\stepcounter{equation}\tag{\theequation}\label{eq:cost-of-f}
	\end{align*}

	If $|F \cap E_H| = 3$,
	then, using \eqref{eq:reducibleh-cost}, we have $c(F \cap \ein{H}) \ge \opt_H$ and $F \cap (E_H \setminus \setFc) = E_H \setminus \setFc$;
	by reformulating \eqref{eq:cost-of-f} we then obtain $c'(F') \le k + \delta = k'$.

	Otherwise, $E_H \setminus F = \{f\}$, and $c(F \cap \ein{H}) \ge \opt^f_H$.
	As $(F \cap (E_H \setminus \setFc)) \cup \{f\} = E_H \setminus \setFc$, we can again reformulate \eqref{eq:cost-of-f} to obtain $c'(F') \le k + \delta = k'$.

	Suppose next that $F'$ is a solution of cost $c'(F') \le k'$ for $\III'$.
	We will pick $F \coloneqq F' \cup F''$, where $F''$ will be later specified.
	Then, in analogy to the forward direction,
	\begin{align*}
		c(F) &= c(F'') + c'(F') - |F \cap E_H| \opt_H + \!\!\!\!\!\!\sum_{e \in F \cap (E_H \setminus \setFc)}\!\!\!\!\!\!\!\!\!\! \opt^e_H
		\\
		&\qquad + \!\!\sum_{e \in F \cap \Fneg} \!\!\!\!(c(e) + \opt_H - \opt^e_H).\stepcounter{equation}\tag{\theequation}\label{eq:cost-of-fprime}
	\end{align*}

	If $|F' \cap E_H| = 3$, then we pick $F''$ to be the inside optimum solution for $H$.
	Note that $F''$ fulfills the diameter constraint for all $H \in \hin{H}$ while $F'$ fulfills the diameter constraint for all remaining habitats $H$.
	Moreover, $c(F'') = \opt_H$ and $F \cap (E_H \setminus \setFc) = E_H \setminus \setFc$;
	thus reformulating \eqref{eq:cost-of-fprime} yields $c(F) = c'(F') - \delta \le k'-\delta = k$.

	Otherwise, $E_H \setminus F' = \{f\}$.
	Then, we pick $F''$ to be the $f$-omitting inside optimum solution for $H$.
	Analogously to the other case, $F$ fulfills the diameter constraint for all $H \in \HHH$.
	Moreover, $c(F'') = \opt^f_H$ and $(F \cap (E_H \setminus \setFc)) \cup \{f\} = E_H \setminus \setFc$;
	thus reformulating \eqref{eq:cost-of-fprime} yields $c(F) = c'(F') - \delta \le k'-\delta = k$.
\end{proof}

When applying \cref{rr:reducibleH} to a reducible habitat,
its parent becomes reducible.
Thus, after exhaustive application,
every habitat induces a face,
and we can apply \xcitet{Herkenrath et al.}{fluschnik2022placing}'s algorithm;
this proves \cref{prop:planarH3}.

\begin{proof}[Proof of~\cref{prop:planarH3}]
	We first apply \allrrs{} exhaustively in $\O(r\cdot \eta^3 + n + m) = \O(r + n)$ time, where $\eta$ is the maximum habitat size.
	We apply~\cref{rr:reducibleH} exhaustively.
	Note that \cref{rr:reducibleH} is applicable at most~$r$ times
	and its instance can be constructed in linear time.
	To apply \cref{rr:reducibleH},
	we need to find a reducible habitat
	and compute four solutions as given in the description of \cref{rr:reducibleH}.
	For a habitat~$H$,
	we can compute $\hin{H}$ in~$\O(r)$ time.
	Hence,
	to check whether a habitat~$H$ is reducible,
	we need~$\O(r^2)$ time.
	When we found a reducible habitat~$H$,
	we know that for every~$H'\in \hin{H}$,
	$\hin{H'}=\emptyset$ and hence,
	due to~\cref{lem:ExReducibleH},
	we know that each habitat hence induces a face.
	This case is solvable in $\O(n^2\cdot r)$ time \cite{fluschnik2022placing},
	and we need to find four solutions.
	Hence,
	in total we need~$\O(r\cdot(n^2\cdot r + r^2))$ time
	until \cref{rr:reducibleH} is inapplicable.
	Then there is no \reducible{} habitat,
	and thus,
	again,
	we can solve the remaining instance in~$\O(n^2\cdot r)$ time.
	In all, the running time is~$\O(n^2\cdot r^2 + r^3)$.
\end{proof}

\section{Maximum Degree Three}
\label{sec:deg3}
\newcommand{\mkthr}{\ensuremath{M(K_3)}}
\newcommand{\strngl}{zone}%
We next give a linear-time algorithm for subcubic graphs.
\begin{theorem}
 \label{thm:maxdeg3}
 \TwoDiamC{} is solvable in $\O(n+r)$ time on graphs of maximum degree three.
\end{theorem}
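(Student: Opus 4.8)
The plan is to first normalize the instance using the preprocessing of \cref{obs:rr-applied}, after which every connected component has more than six vertices and, for each habitat $H$, the graph $G[H]$ is $2$-connected, has diameter at most two, and contains a triangle. Since $\Delta \le 3$, the diameter-two bound noted after \cref{obs:rr-applied} gives $|H| \le 6$, so up to isomorphism there are only constantly many shapes of $G[H]$ together with its forced/unforced edge labeling (triangle, diamond, house, prism, and a few others). For each habitat I would precompute, in constant time, the family of \emph{valid local solutions}: the subsets $F_H \subseteq E(G[H])$ that contain every forced edge of $G[H]$ and satisfy $\diam(G[F_H]) \le 2$. A global edge set $F \supseteq \setFc$ is feasible exactly when its restriction to each $E(G[H])$ is a valid local solution, and since each edge is paid for once, two habitats are \emph{coupled} only if they share an edge of $G$; habitats meeting in a single vertex have disjoint edge sets and decouple completely.

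The structural heart, and the step I expect to be the main obstacle, is to control the \emph{edge-sharing components}: the connected components of the graph on $\HHH$ whose edges join habitats with $E(G[H]) \cap E(G[H']) \ne \emptyset$. First I would observe that every habitat through a fixed vertex $v$ lies inside the distance-two ball $B_2(v)$, which has at most $1 + 3 + 3\cdot 2 = 10$ vertices in a subcubic graph; hence $v$ lies in only constantly many habitats and the edge-sharing graph has constant maximum degree. The crux is then to rule out long components: whenever $H'$ shares an edge $\{u,v\}$ with $H$, the endpoints $u,v$ lie on a triangle, so each has degree at least two inside its habitat and at most one further incident edge, and any triangle of $H'$ through $\{u,v\}$ must reuse a common neighbor of $u$ and $v$. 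Subcubic saturation then prevents a shared edge from being extended to a fresh triangle on genuinely new vertices, confining an edge-sharing component to a constant-size vertex set. Establishing this cleanly demands a careful case analysis over the constantly many habitat shapes and the ways two of them can overlap, which is where the real work lies. Should the analysis instead reveal unbounded components, I would expect them to be paths or cycles of habitats (each sharing one edge with the next), which a left-to-right dynamic program with a constant number of states per habitat still resolves in linear time.

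Granting the bounded-component lemma, the algorithm is immediate: habitats sharing no edge with any other are solved in isolation by picking the cheapest valid local solution, while each nontrivial edge-sharing component spans only $\O(1)$ vertices, edges, and habitats, so I can brute-force the cheapest combination of valid local solutions that agree on the shared edges in constant time per component. Because each edge of $G$ belongs to the habitats of exactly one component, the costs add across components without interference, and the global optimum is the sum of the component optima (the $\setFc$-adjustments to $k$ having already been made during preprocessing). Building the edge-sharing graph and traversing its constant-size components costs $\O(n + r)$, and with the preprocessing of \cref{obs:rr-applied} running in $\O(r\cdot\maxH^3 + n + m) = \O(n + r)$ time (using $\maxH \le 6$ and $m = \O(n)$ for subcubic graphs), the total running time is $\O(n + r)$, as claimed.
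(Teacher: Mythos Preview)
Your plan is plausible, but it is not the route the paper takes, and the step you flag as ``the main obstacle'' is indeed where the two diverge. You build the habitat edge-sharing graph and aim to show that its components are of constant size (with a dynamic-programming fallback for path/cycle components). This is essentially the strategy the paper deploys for \emph{maximum degree four} via the habitat intersection graph $\partial\HHH$; for degree three the paper bypasses the habitat-interaction analysis entirely.

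Concretely, the paper works with \emph{zones}: maximal triangle clusters in~$G$, namely subgraphs isomorphic to $K_4-e$, to $M(K_3)-e_M$ (two $K_3$'s joined by two matching edges), or an isolated~$K_3$. Zones are defined from the graph alone, not from the habitats, and a short case check shows that distinct zones are edge-disjoint. After preprocessing, every unforced edge lies in a triangle and hence in exactly one zone, so the algorithm simply brute-forces, for each zone~$X$, the cheapest $F_X\subseteq E(X)$ that---together with the edges of each intersecting habitat outside~$X$---yields diameter two. Correctness is then a direct case analysis on the set $\mathcal{P}^{\le 2}_{H,v,w}$ of length-$\le 2$ $v$--$w$ paths in $G[H]$: either all such paths lie inside one zone (and the zone's local optimum suffices), or some path has only forced edges.

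What the zone approach buys is precisely the lemma you left open: you never need to argue that habitat edge-sharing components are bounded, because you partition the \emph{unforced edges of $G$} rather than the habitats. In fact your components can be larger than zones (two habitats may share a forced, non-triangle edge while their triangles live in different zones, as in two ``house'' habitats glued along a $C_4$-edge), so even if your components turn out to be bounded in the subcubic case, the analysis is strictly more work. Conversely, your framework has the advantage of generalising more readily---it is exactly what the paper needs once $\Delta=4$, where the zone picture breaks down.
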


Due to \cref{rr:nontriangle}, we only care about those edges of habitats contained in a triangle.
Indeed, the degree and diameter constraints allow us to partition the triangles into edge-disjoint groups of at most two triangles,
which we call \emph{\strngl{s}}, see \cref{fig:strngl}.
\begin{figure}[t!]
 \centering
 \begin{tikzpicture}
  \def\xr{0.725}
  \def\yr{0.35}
  \def\xsh{3.75}
  \tikzpreamble{}
  \newcommand{\tlabel}[1]{\node at (-1.125*\xr,1.125*\yr)[]{(#1)};}

  \begin{scope}
   \tlabel{a}
   \node (a) at (-1*\xr,0)[xnode,label=-90:{$v$}]{};
   \node (b) at (0*\xr,1*\yr)[xnode]{};
   \node (c) at (1*\xr,0)[xnode,label=-90:{$w$}]{};
   \node (d) at (0*\xr,-1*\yr)[xnode]{};
   \foreach \x/\y in {a/b,b/c,c/d,d/a,b/d}{\draw[xedge] (\x) to (\y);}
  \end{scope}
  \begin{scope}[xshift=1*\xsh*\xr cm]
   \tlabel{b}
   \node (a) at (-1*\xr,0)[xnode,label=-90:{$v$}]{};
   \node (b) at (0*\xr,1*\yr)[xnode]{};
   \node (c) at (0*\xr,-1*\yr)[xnode]{};
   \node (ap) at (-1*\xr+4*\xr,0)[xnode,label=-90:{$w$}]{};
   \node (bp) at (0*\xr+2*\xr,1*\yr)[xnode]{};
   \node (cp) at (0*\xr+2*\xr,-1*\yr)[xnode]{};
   \foreach \x/\y in {a/b,b/c,c/a,ap/bp,bp/cp,cp/ap,b/bp,c/cp}{\draw[xedge] (\x) to (\y);}
  \end{scope}
  \begin{scope}[xshift=2.5*\xsh*\xr cm]
   \tlabel{c}
   \node (a) at (-1*\xr,0)[xnode]{};
   \node (b) at (0*\xr,1*\yr)[xnode]{};
   \node (c) at (0*\xr,-1*\yr)[xnode]{};
   \foreach \x/\y in {a/b,b/c,c/a}{\draw[xedge] (\x) to (\y);}
  \end{scope}

 \end{tikzpicture}
 \caption{The \strngl{s}, where~(c)'s $K_3$ must not be contained in (a) or (b).}
 \label{fig:strngl}
\end{figure}
Every habitat intersects exactly one zone; moreover, any edge of a habitat outside of a zone is forced.
As the zones have constant size, we can brute-force an optimum solution for each,
the union of which then is an optimum solution for our instance; see \cref{alg:degthree}.
\newcommandx{\EHX}[2][1=H,2=X]{\ensuremath{E_{#1-#2}}}
\newcommandx{\FHX}[2][1=H,2=X]{\ensuremath{F_{#1,#2}}}
\newcommandx{\PHvw}[3][1=H,2=v,3=w]{\ensuremath{\calP_{#1,#2,#3}^{\leq 2}}}
\begin{algorithm}[t!]
	\caption{Given a reduced instance~$I=(G=(V,E),c,\calH,\setFc,k)$ where~$\Delta(G)\leq 3$,
	return a set~$\setFc\subseteq F\subseteq E$ or \texttt{False}.
Here, $\EHX\ceq E(G[H])\setminus E(X)$ and~$\FHX\ceq \EHX\cup F_X$.}\label[algorithm]{alg:degthree}
  Initialize~$F \gets \setFc$\;
	\ForEach{$X\subseteq G$ being a \strngl{}}{
		Find~$\setFc\cap E(X)\subseteq F_X\subseteq E(X)$ of lowest cost such that for each~$H\in \calH$ with~$H\cap V(X)\neq\emptyset$ it holds true that~$\diam(G[\FHX][H])\leq 2$
		$F\gets F\cup F_X$\;
	}
	\leIf{$c(F)\leq k$}{\textbf{return} $F$}{\textbf{return \texttt{False}}}
\end{algorithm}

Let~$\mkthr$ be the graph obtained from adding a complete matching between two~$K_3$'s.
Observe that, due to the degree bound, any $K_4$ and any $\mkthr$ would be isolated in~$G$ and thus removed by \cref{rr:smallcomps}.
Let~$K_4-e$ be a~$K_4$ where one edge is deleted.
Let $\mkthr-e_M$ be a~$\mkthr$ where one matching edge is deleted.

\begin{definition}
 \label{def:strngl}
A \emph{\strngl{}} is a subgraph of~$G$ isomorphic to a~$K_4-e$ or an~$\mkthr-e_M$, or a~$K_3$ that is not contained in a~$K_4-e$ or $\mkthr-e_M$; compare with \cref{fig:strngl}.
\end{definition}

Considering all six possible pairs of \strngl{s},
one can prove
that every two distinct zones are edge-disjoint.

\begin{observation}
 \label{obs:suptr:disjoint}
 Let~$X,Y$ be two distinct \strngl{s}.
 Then~$E(X)\cap E(Y) = \emptyset$.
\end{observation}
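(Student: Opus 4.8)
The plan is to collapse the suggested six-pair case analysis into a single statement: in the reduced graph, a shared edge forces its two zones to coincide. The engine is a degree lemma for the reduced graph $G$ (where $\Delta(G)\le 3$). First I would observe that every edge lies in at most two triangles, since a triangle on $\{u,v\}$ with apex $w$ consumes one of the at most two free neighbours of each of $u$ and $v$. Moreover, if $\{u,v\}$ lies in two triangles with apexes $w_1,w_2$, then $u$ and $v$ have degree exactly three with $N(u)=\{v,w_1,w_2\}$ and $N(v)=\{u,w_1,w_2\}$; since a full $K_4$ on $\{u,v,w_1,w_2\}$ would be an isolated component of at most six vertices and thus removed by \cref{rr:smallcomps}, the edge $\{w_1,w_2\}$ is absent and these four vertices induce a \emph{unique} $\kfoure$ whose spine is $\{u,v\}$.

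Rather than inspect the six pairs directly, I would classify a hypothetical shared edge $e\in E(X)\cap E(Y)$ by the number of triangles of $G$ it lies in. If $e$ lies in two triangles, the lemma makes it the spine of a unique $\kfoure$; I would then check that $e$ can never be a \emph{rim} edge of any $\kfoure$ zone, because completing such a zone would require the second apex to be joined to $w_1$ or $w_2$ by the non-edge $\{w_1,w_2\}$. Hence both $X$ and $Y$ equal that $\kfoure$, so $X=Y$. If $e$ lies in no triangle, then $e$ can occur in a zone only as a matching edge of an $\mkthr-e_M$; its endpoints then have degree three with all neighbours inside the zone, so the two triangles meeting $e$ (one through each endpoint) are forced, and the second matching edge is the unique cross edge supplied by the third-neighbours—again $X=Y$.

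The central case is $e$ lying in exactly one triangle $T$: every zone containing $e$ then contains $T$, so it suffices to show that the zone of a triangle is unique. Here I would argue that the three zone types are mutually exclusive for a fixed $T$ and that the extension is forced within each type. The triangle $T$ lies in a $\kfoure$ exactly when one of its edges is a spine (in two triangles), which determines the $\kfoure$ uniquely and simultaneously rules out the $\mkthr-e_M$ type, whose triangle edges each lie in a single triangle. Otherwise every edge of $T$ lies in a single triangle, and $T$ lies in an $\mkthr-e_M$ iff two of its vertices have degree three; those two vertices are the matched ones, and following their third-neighbours pins down the partner triangle and the second matching edge uniquely. If fewer than two vertices of $T$ have degree three, $T$ admits no two matching edges and is a standalone $K_3$.

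The step I expect to be most delicate is the uniqueness of the $\mkthr-e_M$ zone when all three vertices of $T$ have degree three with distinct third-neighbours, since a priori several matched pairs could lead to several partner triangles. I would dispose of this purely by the degree bound: any two candidate matched pairs share a vertex of $T$ whose third-neighbour $b^{\ast}$ would have to be adjacent to that vertex, to two further triangle vertices, and to its own triangle mate, forcing $\deg(b^{\ast})\ge 4$, a contradiction. The same bound—together with the removal of isolated $\kfoure$ and $\mkthr$ by \cref{rr:smallcomps}—also forbids degenerate coincidences such as two third-neighbours of $T$ being equal, which would create a spine and move $T$ into the $\kfoure$ case. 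Assembling these observations, any shared edge forces $X=Y$, contradicting distinctness and proving \cref{obs:suptr:disjoint}; this simultaneously settles all six pairs that a direct enumeration would require.
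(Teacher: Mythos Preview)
Your edge-classification approach (by the number of triangles of $G$ containing the shared edge) is a genuinely different route from the paper's six-pair case analysis on the isomorphism types of $X$ and $Y$. It is arguably cleaner: the zero- and two-triangle cases are short, and the one-triangle case reduces to the single claim that the zone containing a fixed triangle $T$ is unique. The paper instead handles each type pair by ad-hoc connectivity and cycle-length arguments. Both proofs ultimately rest on the same ingredients, namely $\Delta(G)\le 3$ and the removal of isolated $K_4$ and $\mkthr$ via \cref{rr:smallcomps}.

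There is one genuine gap in your delicate step. When all three vertices of $T=\{p,q,r\}$ have degree three with distinct third-neighbours $p',q',r'$, you claim that two candidate matched pairs, say $\{p,q\}$ and $\{p,r\}$, force $\deg(p')\ge 4$. They do not. Since $p'$ is matched in both candidate zones, $\deg_G(p')=3$ with $N(p')=\{p\}\cup\{\text{two partner-triangle mates}\}$; the degree bound then forces the partner triangle of the second zone to coincide with that of the first, namely $\{p',q',r'\}$, so $\deg(p')$ stays at three. The contradiction is not a degree violation but that now all three cross edges $\{p,p'\},\{q,q'\},\{r,r'\}$ are present, yielding an isolated $\mkthr$ on six vertices, which \cref{rr:smallcomps} has removed. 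You invoke that rule in the following sentence for ``degenerate coincidences'', but here it is the actual engine. A second, smaller omission: in the two-triangle case you rule out $e$ being a rim edge of a $\kfoure$ but do not explicitly exclude $e$ from an $\mkthr-e_M$ zone; this is easy (every triangle edge of such a zone lies in exactly one triangle of $G$, and every matching edge in none), but should be stated. With these two fixes your proof is complete.
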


\newcommand{\iso}{\cong}

\begin{proof}
\xcase{1}{$X\iso K_3$ (i.e., $X$ is isomorphic to~$K_3$)}
If~$Y\iso K_3$,
then either they are not distinct or they form a~$K_4$,
a contradiction to both being a \strngl{}.
Let~$Y\iso K_4-e$ or $Y\iso \mkthr-e_M$.
Then~$Y$ has no edge with two endpoints each of degree smaller at most two.
Thus,
$X$ must be contained in~$Y$,
a contradiction.

\xcase{2}{$X\iso K_4-e$ and~$Y\not\iso K_3$}
Let $v, w \in X$ be the unique endpoints with~$\{v,w\}\notin E$.
Then $Y$ has a vertex not contained in~$V(X)$.
Since~$X$ and~$Y$ share an edge,
there must be~$x\in V(X)\setminus\{v,w\}$
contained in~$Y$.
Since~$K_4-e$ and~$\mkthr-e_M$ are 2-connected and contain no induced 5-cycle,
there must be a vertex~$u\in V(Y)\setminus V(X)$ and both edges~$\{v,u\},\{u,w\}\in E(Y)$.
Thus,
$u,v,w,x$ induce a four-cycle,
and hence,
$Y\not\iso K_4-e$.
Since also~$v,w,x$ has only~$y\in V(X)\setminus\{v,w,x\}$
as its only remaining neighbor,
$Y\not \iso \mkthr-e_M$.
Either way contradicts that~$Y\not\iso K_3$ is a \strngl.

\xcase{3}{$X,Y\iso \mkthr-e_M$}
Since~$G$ contains no $\mkthr$ we have~$\{v,w\}\not\in E$.
Thus, $Y$ has a vertex~not contained in~$V(X)$.
Since~$\mkthr-e_M$ is 2-connected,
there must be a vertex~$u\in V(Y)\setminus V(X)$ and a $(v,u)$-path and a $(w,u)$-path in~$Y$ which are disjoint.
Since~$X$ and~$Y$ share an edge,
there must be a cycle of length at least five in~$Y$,
a contradiction to $Y\iso \mkthr-e_M$.
\end{proof}

\noindent The following is the main property used by \cref{alg:degthree}.
\begin{observation}
 \label{obs:pathsincluded}
 Let~$H\in\calH$ and~$v,w\in H$ and~$F$ be a solution by~\cref{alg:degthree}.
 Let~$X\subseteq G$ be a \strngl{}
 such that~$P\subseteq X$ for all~$P\in\PHvw$,
 where~$\PHvw$ is the set of all $v$-$w$~paths of length at most two in~$G[H]$.
 Then~$\dist_{G[F][H]}(v,w)\leq 2$.
\end{observation}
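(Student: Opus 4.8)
The plan is to derive the global bound $\dist_{G[F][H]}(v,w)\le 2$ from the local diameter guarantee that \cref{alg:degthree} enforces while processing the \strngl{} $X$. The crux is to recognize that $X$ is precisely the (unique) zone to which $H$ belongs, so that the edge set $\FHX=\EHX\cup F_X$ on which the algorithm certifies the constraint for $H$ is contained in the returned solution $F$; once this is established, the conclusion follows by monotonicity of distances under edge addition.

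First I would check that $\PHvw\neq\emptyset$. Since the input to \cref{alg:degthree} is reduced, \cref{rr:infeas} is inapplicable and hence $\diam(G[H])\le 2$, so $\dist_{G[H]}(v,w)\le 2$ and there is a $v$-$w$ path of length at most two in $G[H]$ (the case $v=w$ being trivial). Fixing such a $P\in\PHvw$, the hypothesis $P\subseteq X$ means that $P$ uses an edge of $G[H]$ that also lies in $E(X)$; in particular $v,w\in V(X)$, so $H$ meets the zone $X$ by way of an actual $H$-edge (this is robust to reading ``meets'' vertex- or edge-wise). Invoking the structural facts stated before \cref{alg:degthree}, namely that every habitat meets exactly one zone and that every edge of a habitat lying outside of its zone is forced, I conclude that $X$ is that unique zone of $H$ and therefore $\EHX=E(G[H])\setminus E(X)\subseteq\setFc$.

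Next I would unwind the construction of $F$. When \cref{alg:degthree} handles $X$ we have $H\cap V(X)\neq\emptyset$, so the selected $F_X$ satisfies $\diam(G[\FHX][H])\le 2$, giving $\dist_{G[\FHX][H]}(v,w)\le 2$. Because $F$ is initialized to $\setFc$ and only accumulates the zone solutions, we have $F\supseteq F_X$ and $F\supseteq\setFc\supseteq\EHX$, hence $\FHX=\EHX\cup F_X\subseteq F$. Thus $G[\FHX][H]$ is a spanning subgraph of $G[F][H]$, and since adding edges cannot increase a distance, $\dist_{G[F][H]}(v,w)\le\dist_{G[\FHX][H]}(v,w)\le 2$, which is the claim.

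The only genuinely delicate step is identifying $X$ with the unique zone of $H$; everything downstream is the standard fact that distances are nonincreasing under edge insertion. This identification is exactly where the hypothesis that \emph{all} short $v$-$w$ paths lie in $X$ is used, and where the two structural claims (one zone per habitat, and forcedness of habitat edges outside that zone) do the real work of guaranteeing $\EHX\subseteq\setFc$ and hence $\FHX\subseteq F$. I would be careful to confirm that the witness for ``$H$ meets $X$'' is a true edge of $G[H]$ contained in $E(X)$, so that the uniqueness statement applies and no spurious second zone can intervene.
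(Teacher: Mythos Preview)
Your approach reaches the conclusion but differs from the paper's and leans on facts that are only stated informally, not proven, at that point in the paper.

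The paper's proof is a two-line argument that uses the hypothesis directly: from the algorithm's guarantee $\diam(G[\FHX][H])\le 2$ one takes a witnessing $v$--$w$ path of length at most two in $G[\FHX][H]$; this path lies in $\PHvw$, so by hypothesis it sits inside $X$, and hence its edges are in $\FHX\cap E(X)=F_X\subseteq F$. Done---there is no need to know anything about $\EHX$.

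You instead try to show $\FHX\subseteq F$ outright by arguing $\EHX\subseteq\setFc$, invoking the prose sentence before \cref{alg:degthree} (``every habitat intersects exactly one zone; moreover, any edge of a habitat outside of a zone is forced''). That sentence is a high-level summary, not an established lemma; it is true, but making it rigorous requires its own case analysis on how triangles of $G[H]$ can sit in zones, and the paper never carries this out as a standalone statement. So your proof has a rigor gap unless you supply that analysis.

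There is also a tell that you are not using the hypothesis as intended: your argument only needs \emph{one} $P\in\PHvw$ to lie in $X$ (to produce an $H$-edge in $E(X)$ and then appeal to uniqueness), so your remark that the ``all'' quantifier is what drives the identification is not accurate. In the paper's proof, by contrast, the ``all'' does the entire work---it ensures the specific short path produced inside $G[\FHX][H]$ cannot stray into $\EHX$, so one never needs to know whether those edges are forced.
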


\begin{proof}
 Let~$F_X = F\cap E(X)$.
 As~$\diam(G[F_{H,X}][H])\leq 2$,
 we have $\dist_{G[F_{H,X}][H]}(v,w)\leq 2$.
 All $v$-$w$~paths of length $\le 2$ are in~$X$;
 thus~$\dist_{G[F_{X}][H]}(v,w) = \dist_{G[F_{H,X}][H]}(v,w) \leq 2$.
\end{proof}

To see that \cref{alg:degthree} computes a solution, observe the following:
Either all relevant paths of length two between two habitat vertices are in a zone, in which case \cref{obs:pathsincluded} works,
or not, in which case all edges on a path outside the zone are forced.

\begin{proof}[Proof of \cref{thm:maxdeg3}]
 We first prove that~$F$ is a solution and then that it is of minimal cost.
 Finally, we show that \cref{alg:degthree} runs in $\O(n + r)$ time.
 Refer to \cref{fig:deg3:fsol} for an illustration.

 \emph{$F$ is a solution.}
 Let~$H\in \calH$ and let~$v,w\in H$.
 We show that $\dist_{G[F][H]}(v,w)\leq 2$.
 We perform an exhaustive case distinction on the $\PHvw$.
 \begin{figure}[t!]
	\centering
	\begin{tikzpicture}
		\def\xr{1.25}
		\def\yr{0.6}
		\def\xsh{1.875}
		\tikzpreamble{}
		\newcommandx{\tlabel}[2][1=-0.675]{\node at (#1*\xr,1.325*\yr)[]{#2)};}

		\begin{scope}
		\tlabel[-0.275]{1}
		\node (a) at (0*\xr,1*\yr)[xnode,label=90:{$v$}]{};
		\node (b) at (0.5*\xr,0*\yr)[xnode,label=-90:{$x$}]{};
		\node (c) at (0*\xr,-1*\yr)[xnode,label=-90:{$w$}]{};
		\foreach \x/\y/\z in {a/b/red,b/c/red,a/c/blue}{\draw[xedge,color=\z] (\x) to (\y);}
		\end{scope}
		\begin{scope}[xshift=1*\xsh*\xr cm]
		\tlabel{2}
		\node (a) at (0*\xr,1*\yr)[xnode,label=90:{$v$}]{};
		\node (b) at (0.5*\xr,0*\yr)[xnode,label=-90:{$x$}]{};
		\node (bp) at (-0.5*\xr,0*\yr)[xnode,label=-90:{$y$}]{};
		\node (c) at (0*\xr,-1*\yr)[xnode,label=-90:{$w$}]{};
		\foreach \x/\y in {a/b,b/c,c/bp,bp/a}{\draw[xedge] (\x) to (\y);}
		\draw[dashed,gray] (b) to (bp);
		\end{scope}
		\begin{scope}[xshift=2*\xsh*\xr cm]
		\tlabel{3}
		\node (a) at (0*\xr,1*\yr)[xnode,label=90:{$v$}]{};
		\node (b) at (0.5*\xr,0*\yr)[xnode,label=-90:{$x$}]{};
		\node (bp) at (-0.5*\xr,0*\yr)[xnode,label=-90:{$y$}]{};
		\node (c) at (0*\xr,-1*\yr)[xnode,label=-90:{$w$}]{};
		\foreach \x/\y in {a/b,b/c,c/bp,bp/a,a/c}{\draw[xedge] (\x) to (\y);}
		\end{scope}

		\begin{scope}[xshift=3*\xsh*\xr cm]
		\tlabel{4}
		\node (a) at (0*\xr,1*\yr)[xnode,label=90:{$v$}]{};
		\node (b) at (0.5*\xr,0*\yr)[xnode,label=-90:{$x$}]{};
		\node (bp) at (-0.5*\xr,0*\yr)[xnode,label=-90:{$y$}]{};
		\node (bpp) at (0*\xr,0*\yr)[xnode,label={[label distance=-3pt]45:{$z$}}]{};
		\node (c) at (0*\xr,-1*\yr)[xnode,label=-90:{$w$}]{};
		\foreach \x/\y in {a/b,b/c,c/bp,bp/a,a/bpp,bpp/c}{\draw[xedge] (\x) to (\y);}
		\draw[dashed,gray] (b) to (bpp);
		\end{scope}

	\end{tikzpicture}
	\caption{Illustration to the cases in the proof of~\cref{thm:maxdeg3}.
	In~1), the cases are whether only the red, only the blue,
	or both paths are present.
	Dashed edges correspond to subcases.}
	\label{fig:deg3:fsol}
	\end{figure}

 \xcase{1}{$\PHvw$ consists of the paths~$(v,w)$ or~$(v,x,w)$}
 If $\PHvw$ consists of~$(v,w)$,
 then the edge is forced.
 If $\PHvw$ consists of both paths,
 then~$\PHvw\subseteq X$ for some \strngl{}~$X$
 and
 the claim follows from~\cref{obs:pathsincluded}.
 If $\PHvw$ consists of $P\ceq (v,x,w)$,
 then either each of the two edges is forced or one of them
 is in a triangle~$T\subseteq G[H]$.
 Then,
 $T\subseteq X$ for some \strngl{}~$X$
 and hence~$\diam(G[\FHX][H])\leq 2$.
 Since~$P$ is the only~$v$-$w$~path of length at most two,
 it follows that~$E(P)\subseteq F$.

 \xcase{2}{$\PHvw$ consists of the paths~$(v,x,w)$ and~$(v,y,w)$}
 If $\{x,y\}\in E$,
 then~$\PHvw\subseteq X$ for some \strngl{}~$X$
 and
 the claim follows from~\cref{obs:pathsincluded}.
 Let~$\{x,y\}\not\in E$.
 Assume towards a contradiction that $\dist_{G[F][H]}(u,v)>2$.
 Suppose that~$\{v,x\},\{v,y\}\not\in F$.
 Then both edges are contained a triangles.
 Since~$\Delta(G)\leq 3$,
 there is a vertex~$z$ adjacent to~$\{v,x,y\}$.
 Then~$\{v,x,y,z\}$ form a \strngl{},
 a contradiction to~$\diam(G[\FHX][H])\leq 2$ by~\cref{alg:degthree}.

 Suppose that~$\{v,x\},\{y,w\}\not\in F$.
 Then both edges are contained in a triangle,
 and since~$\Delta(G)\leq 3$,
 there are~$x',y'$ such that~$\{v,x,x'\}$ and~$\{w,y,y'\}$ form a triangle.
 Then,
 $\{v,x,x',w,y,y'\}$ form a \strngl{} (an~$\mkthr-e_m$),
 contradicting~\cref{obs:pathsincluded}.

 \xcase{3}{$\PHvw$ consists of the paths~$(v,w)$, $(v,x,w)$, and~$(v,y,w)$}
 Then~$\PHvw\subseteq X$ for some \strngl{}~$X$
 and
 the claim follows from~\cref{obs:pathsincluded}.

 \xcase{4}{$\PHvw$ consists of the paths~$(v,z,w), (v,x,w), (v,y,w)$}
 Note that since~$\Delta(G)\leq 3$,
 at most one of~$\{x,y\},\{y,z\},\{x,z\}$ is contained in~$E$.
 Let~$u\in\{x,y,z\}$ be without an edge to~$\{x,y,z\}\setminus\{u\}$.
 Then the path~$(v,u,w)$ contains only forced edges,
 and hence $\dist_{G[F][H]}(u,v)\leq 2$.

 \emph{$F$ is of minimum cost.}
 From above,
 we know that~$F$ is a solution.
 Assume towards a contradiction that there is a solution~$F'$ with~$c(F')<c(F)$.
 $F$ and~$F'$ only differ on edges contained in triangle.
 Since every triangle is contained in a \strngl{},
 there is an \strngl{}~$X$ with~$c(E(X)\cap F')<c(E(X)\cap F)$.
 This contradicts the choice of~$F_X$ by~\cref{alg:degthree}.

 Let us analyze the running time of the algorithm.
 Note that each habitat has constant size due to the degree constraint and the diameter constraint, or we have a trivial \no-instance.
 Thus, \allrrs{} are applicable in $\O(n + m + r)$ time.
 Since all \strngl{s} are edge-disjoint, there are most~$m$ many,
 which can be found in linear time by searching the constant-size 3-neighborhoods of each vertex~$v$.
 Since every habitat~$H$ is of constant size,
 every \strngl{}~$X$ is of constant size,
 and by the constant maximum degree,
 there are only constantly many habitats with a vertex in~$X$.
\end{proof}

\section{Maximum Degree Four}
\label{sec:deg4}

In this section, we will show the following result.

\begin{theorem}
	\label{cor:deg4}
	\TwoDiamC{} is solvable in $\O(n+r)$ time if the maximum degree of $G$ and the maximum habitat size are at most four.
\end{theorem}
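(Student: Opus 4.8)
The plan is to reduce the instance, capture the interaction of habitats in a single auxiliary graph, prove that this graph has a very restricted structure, and then solve each connected piece either by brute force or by a single linear-time sweep.

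First I would apply \allrrs{} via \cref{obs:rr-applied}, so that every $G[H]$ is $2$-connected, contains a triangle, and has at most four vertices. The $2$-connected graphs on at most four vertices that contain a triangle are exactly $K_3$, \kfoure{}, and $K_4$, so after preprocessing every habitat induces one of these three graphs. The key conceptual step is the observation that the constraint $\diam(G[F][H]) \le 2$ depends only on $F \cap E(G[H])$, since $G[F][H]$ has vertex set $H$ and every path witnessing the diameter bound stays inside $H$. Consequently, two habitats with disjoint induced edge sets can be optimized completely independently. This motivates the \emph{habitat intersection graph} $\mathcal{J}$, whose vertices are the habitats and in which $H$ and $H'$ are adjacent exactly when $E(G[H]) \cap E(G[H']) \neq \emptyset$; the connected components of $\mathcal{J}$ partition the habitats into groups that can be solved separately, with their optimal subsolutions simply adding up.

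The technical heart --- and the step I expect to be hardest --- is the structural lemma that every connected component of $\mathcal{J}$ is a path, a cycle, or has constantly many vertices. The leverage is the degree bound: since $G[H]$ is $2$-connected, every $v \in H$ has at least two incident edges inside $G[H]$, so with $\deg_G(v) \le 4$ at most two edges leave $H$. I would convert this into a bound on the intersection-degree of a habitat through a case analysis over the three habitat types and over which edges are shared. A $K_4$ saturates all six internal edges and leaves each of its four vertices only a single external edge; a \kfoure{} and a $K_3$ leave slightly more room. In each case I would argue that a habitat can share edges with at most two others unless the local configuration already closes up into one of finitely many bounded gadgets, so that the component is either chain-like (every habitat of intersection-degree at most two, giving a path or cycle) or of constant size. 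Ruling out a third neighbor without forcing a vertex of degree five or breaking $2$-connectivity is where the real bookkeeping lies; this is precisely the structural property highlighted in the introduction, and it is claimed to fail once the habitat size reaches five.

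Granting the lemma, the algorithm is routine. For a constant-size component the union of its habitats' edge sets has constant size, so I brute-force the cheapest feasible edge subset containing the required forced edges in $\O(1)$ time. For a path or cycle component I run a dynamic program that sweeps along the path (and, for a cycle, tries all $\O(1)$ selections on one interface before closing up): the state records which edges of the constant-size interface shared with the previously processed habitat are selected, and at each habitat I extend this by a locally optimal choice satisfying $\diam(G[F][H]) \le 2$ while fixing all edges in $\setFc$. Since the per-habitat state space and transition are of constant size, each component is processed in time linear in its number of habitats. Finally I take the union $F$ of the chosen edge sets over all components --- which contains every forced edge, as each surviving edge lies in some habitat after \cref{rr:irredge} --- and return $F$ if $c(F) \le k$. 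Correctness follows from the independence of distinct components together with the per-habitat feasibility enforced by the brute force and the dynamic program. For the running time, the degree bound guarantees that every edge lies in $\O(1)$ habitats and every habitat has $\O(1)$ edges, so $\mathcal{J}$ and its components are built in $\O(n + r)$ time; combined with the linear-time preprocessing of \cref{obs:rr-applied} and the $\O(1)$ work per habitat, this yields the claimed $\O(n + r)$ bound.
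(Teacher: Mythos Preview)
Your overall architecture—preprocess, build a habitat intersection graph, prove its components are paths, cycles, or constant-size, then sweep or brute-force—is exactly the paper's. The gap is that your intersection graph is too coarse: with the adjacency ``$H \sim H'$ iff $E(G[H]) \cap E(G[H']) \neq \emptyset$'', the structural lemma is \emph{false}. Take the triangle strip on $v_0,\dots,v_n$ with edges $\{v_j,v_{j+1}\}$ and $\{v_j,v_{j+2}\}$ (so every interior vertex has degree four), let the habitats be $T_i=\{v_{i-1},v_i,v_{i+1}\}$ for $1 \le i \le n-1$, and add one further habitat $H=\{v_1,v_2,v_3,v_4\}$, which induces a $\kfoure$. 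In your $\mathcal{J}$ the $T_i$'s form a path, but $H$ is adjacent to each of $T_1,T_2,T_3,T_4$, and $T_2,T_3,T_4$ each acquire a third neighbor. The resulting component has $n$ habitats and is neither a path nor a cycle, so your planned case analysis cannot ``close up into a bounded gadget'' here.

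The paper avoids this in two ways you omit. First, it passes to the auxiliary problem \GwgbpAcr{} and applies \cref{rr:hab-subsets}, which deletes any habitat contained in another while intersecting their feasible-subsolution lists; in the example this absorbs $T_2,T_3\subseteq H$ and restores a path. Second, the paper's intersection graph $\partial\HHH$ uses a stricter adjacency (\cref{def:intersection-graph}): $H_1,H_2$ must share an \emph{unforced} edge, and no third habitat $H_3$ may have $E(G[H_1\cap H_2]) \subsetneq E(G[H_i\cap H_3])$ for some $i\in\{1,2\}$. Even with this refined graph, the dichotomy is not obtained by a direct case split over habitat shapes but by a potential argument on \emph{docking edges} (\cref{lem:potential-deg3,lem:two-makes-a-path}): adding a neighboring habitat to a connected set strictly decreases the number of docking edges unless a \emph{proper} docking edge is present, and the presence of a proper docking edge forces the set to induce a path in $\partial\HHH$. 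Your sketch anticipates neither the refined graph nor this monotonicity/induction argument, so the key structural step does not yet go through.
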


Note that, even if the habitats have size at most four, we cannot remove habitats that are subsets of other habitats:
Consider a habitat $H = \{u,v,w,x\}$ inducing a $K_4$.
Then a solution $F$ with $\{u,v\}, \{u,w\}, \{u,x\} \in F$ (but no other edges in $G[H]$) is sufficient for $H$; however for the subset habitat $H' = \{v, w, x\}$, $F$ would be infeasible.
In our following algorithm, such habitats would be disturbing.
Thus, instead, we simply compute the set $\FFF_H$ of feasible solutions for each habitat $H \in \HHH$; note that each set $\FFF_H$ contains only constantly many solutions as $H$ is of constant size.
If we then have a habitat $H'$ which is a subset of some habitats $H \in \HHH'$, then we remove $H'$ and keep only those solutions in $\FFF_H$, $H \in \HHH'$ that are feasible for both $H$ and~$H'$.

We assume that \allrrs{} are inapplicable,
implying
that each habitat induces a $K_3$, a $K_4$, or a $\kfoure$.
Formally, we reduce \DwgbpAcr[2]{} to the following problem.

\decprob{\GwgbpTsc{} (\GwgbpAcr{})}{gwgbp}
{An undirected graph~$G=(V,E)$ with edge costs $c\colon E\to \Nzero$,
a set~$\calH=\{H_1,\dots,H_r\}$ of habitats with $H_i\subseteq V$ and~$|H_i|\geq 2$ for all~$i\in\set{r}$,
a set $\FFF_{H_i} \subseteq 2^{E(G[H_i])}$ of feasible edge sets for all $i \in \set{r}$,
a set~$\setFc\subseteq E$ of \emph{forced} edges,
and
an integer~$k\in\Nzero$.}
{Is there a subset~$F\subseteq E$ with
$\setFc\subseteq F$ and~
$c(F)\ceq \sum_{e\in F}c(e)\leq k$ such that
for every~$i\in\set{r}$
there is a set $F_{H_i} \in \FFF_{H_i}$ with $F_{H_i} \subseteq F$?
}

\begin{observation}
	\label{obs:diam-to-gen}
	For each instance $\I = (G, c, \HHH, \setFc, k)$ of \DwgbpAcr[2]{} with constant habitat size,
	one can compute in $\O(r)$ time the sets $\FFF_H \subseteq 2^{E(G[H])}$, $H \in \HHH$,
	so that $F$ is a solution for $\I$ if and only if $F$ is a solution for the instance $\I' = (G, c, \HHH, \{\FFF_H\}_{H \in \HHH}, \setFc, k)$ of \GwgbpAcr{}.
\end{observation}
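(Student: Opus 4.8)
The plan is to exhibit, for each habitat $H\in\HHH$, the collection $\FFF_H$ of exactly those edge subsets of $G[H]$ that realise diameter at most two on $H$, and then argue that the diameter constraint for $H$ is satisfied by a global solution $F$ if and only if $F$ contains some member of $\FFF_H$. Concretely, I would define
\[
  \FFF_H \coloneqq \bigl\{\, F' \subseteq E(G[H]) : \diam(G[F'][H]) \le 2 \,\bigr\}.
\]
The key structural input is the remark (following \cref{obs:rr-applied}) that, after preprocessing, each habitat has constant size $\eta$ — indeed at most a constant depending only on $\Delta$. Hence $E(G[H])$ has constantly many edges, $2^{E(G[H])}$ has constantly many elements, and for each candidate $F'$ the diameter test $\diam(G[F'][H])\le 2$ runs in constant time (a bounded number of breadth-first searches on a constant-size graph). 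Summing over the $r$ habitats gives the claimed $\O(r)$ construction time.

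The correctness equivalence is essentially definitional, but I would spell out both directions against the two problem definitions. For the forward direction, suppose $F$ is a solution for $\I$, so $\setFc\subseteq F$, $c(F)\le k$, and $\diam(G[F][H])\le 2$ for every $H$. Fix $H$ and set $F_H \coloneqq F\cap E(G[H])$. Since $G[F_H][H] = G[F][H]$ (edges of $E(G[H])$ not in $F$ are exactly those excluded on both sides), we get $\diam(G[F_H][H])\le 2$, so $F_H\in\FFF_H$ and $F_H\subseteq F$; as $\setFc$, $c$, and $k$ are unchanged between $\I$ and $\I'$, $F$ witnesses a solution for $\I'$. For the backward direction, suppose $F$ is a solution for $\I'$: for each $H$ there is $F_H\in\FFF_H$ with $F_H\subseteq F$. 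Then $F_H\subseteq F\cap E(G[H])$, and since adding edges to a graph cannot increase distances, $\diam(G[F][H]) \le \diam(G[F_H][H]) \le 2$; together with the shared constraints $\setFc\subseteq F$ and $c(F)\le k$ this makes $F$ a solution for $\I$.

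The only point requiring a little care is the monotonicity step in the backward direction, namely that $F_H\subseteq F$ (rather than $F_H = F\cap E(G[H])$) still forces the diameter bound: this holds because $G[F_H][H]$ is a subgraph of $G[F][H]$ on the same vertex set $H$, so every $u$–$v$ distance in $G[F][H]$ is at most the corresponding distance in $G[F_H][H]$. This is exactly why defining $\FFF_H$ as the \emph{upward-closed} family of diameter-respecting edge sets (we need only that \emph{some} member is contained in $F$) matches the \GwgbpAcr{} semantics. I do not expect any genuine obstacle here; the whole content is that constant habitat size makes enumeration cheap, and that the membership quantifier ``$\exists F_H\in\FFF_H\colon F_H\subseteq F$'' faithfully encodes ``$\diam(G[F][H])\le 2$''. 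The one thing to double-check is that the construction is \emph{uniform} over $\I$ and $\I'$ — same $G$, $c$, $\setFc$, $k$ — so that ``$F$ is a solution'' means the same edge-set feasibility on both sides; this is immediate from how $\I'$ is written in \cref{obs:diam-to-gen}.
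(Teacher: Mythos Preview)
Your proposal is correct and is exactly the natural argument the paper leaves implicit: the paper states this observation without proof, treating it as immediate from the definitions, and your construction of $\FFF_H$ as all diameter-two-respecting edge subsets of $G[H]$ together with the monotonicity argument is precisely what is intended.
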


We apply the following rule to \GwgbpAcr{} instances.

\begin{rrule}
	\label{rr:hab-subsets}
	Let $H' \in \HHH$ and let $\HHH' \ceq \{H \in \HHH \mid H' \subseteq H\}$.
	Then, for each $H \in \HHH'$,
	keep only those sets $F_H$ in $\FFF_H$ such that there is a set $F_{H'} \in \FFF_{H'}$ with $F_{H'} \subseteq F_H$,
	and delete $H'$ and $\FFF_{H'}$.
\end{rrule}

We next introduce the habitat intersection graph and show its structural properties.

\subsection{The habitat intersection graph}

The crucial part for computing a solution for \GwgbpAcr{} (and thus also \TwoDiam{})
is that a habitat can influence the subsolution for another habitat (directly) only if they both share an (unforced) edge.
However, if the two habitats have no common edges, but both have edges with a third habitat in common, then their subsolutions can still affect each other.

To exploit this property, we formally define the habitat intersection graph, where call two habitats
neighboring---up to one small caveat---when they share an unforced edge.

\begin{figure}
	\centering
	\tikzset{
	}
	\tikzpreamble{}
	\begin{tikzpicture}
		\def\centerdist{0.21}
		\def\xsh{10.5}
		\def\xxsh{1.1}
		\def\ysh{0.875}
		\def\ysh{0.875}
		\def\teps{0.2}
		\newcommand{\labelk}[1]{\node at (-0.9,0.5)[anchor=west]{(#1)};}
		\newcommand{\labelkk}[1]{\node at (-0.9,0.5)[anchor=west]{(#1)};}

		\begin{scope}
			\coordinate (t1) at (0,0) {};
			\path (t1) to ++(090:2*\centerdist) node [xnode] (u) {};
			\path (t1) to ++(210:2*\centerdist) node [xnode] (v) {};
			\path (t1) to ++(330:2*\centerdist) node [xnode] (w) {};
			\path (w)  to ++(090:2*\centerdist) coordinate (t2) {};
			\path (t2) to ++(30:2*\centerdist) node [xnode] (x) {};
			\path (x)  to ++(270:2*\centerdist) coordinate (t3) {};
			\path (t3) to ++(330:2*\centerdist) node [xnode] (y) {};
			\path (y)  to ++(090:2*\centerdist) coordinate (t3) {};
			\path (t3) to ++(30:2*\centerdist) node [xnode] (z) {};
			\draw (u) edge (v);
			\draw (v) edge (w);
			\draw (w) edge (u);
			\draw (u) edge (x);
			\draw[ultra thick] (w) edge (x);
			\draw (w) edge (y);
			\draw (x) edge (y);
			\draw (x) edge (z);
			\draw (y) edge (z);

			\draw[-, densely dashed, blue, rounded corners, draw] ($(v)+(-\teps,-\teps)$) to ($(w)+(\teps,-\teps)$) to ($(x)+(\teps,\teps)$)  to ($(u)+(0,\teps)$) --cycle; %
			\draw[-, thick, dotted, red, rounded corners] ($(x)+(-\teps,\teps)$) to ($(z)+(\teps,\teps)$) to ($(y)+(\teps,-\teps)$) to ($(w)+(-\teps,-\teps)$) --cycle; %
			\draw[-, thick, dash dot, violet, rounded corners] ($(u)+(-1.5*\teps,1.5*\teps)$) to ($(x)+(1.5*\teps,1.5*\teps)$) to ($(y)+(1.5*\teps,-1.5*\teps)$) to ($(w)+(-1.5*\teps,-1.5*\teps)$) --cycle; %
		\end{scope}

		\begin{scope}[xshift=3cm]
			\node [xnode] at (0,0) (h1) {};
			\node [xnode] at (1,0) (h2) {};
			\node [xnode] at (2,0) (h3) {};

			\draw (h1) edge (h2) edge (h3);

			\node [circle, densely dashed, blue, draw, inner sep=3pt, label=below:{\footnotesize $H_1$}] at (h1) {};
			\node [circle, dotted, thick, red, draw, inner sep=3pt, label=below:{\footnotesize $H_2$}] at (h2) {};
			\node [circle, dash dot, violet, thick, draw, inner sep=3pt, label=below:{\footnotesize $H_3$}] at (h3) {};
		\end{scope}

	\end{tikzpicture}
	\caption{
		Illustration of a graph with three habitats, and the corresponding intersection graph $\partial \HHH$ by \cref{def:intersection-graph}.
		Note that $H_1$ and $H_3$ are not neighboring in $\partial\HHH$ as $E(G[H_1 \cap H_3]) \subsetneq E(G[H_1 \cap H_2])$.
	}
	\label{fig:intersection-graph}
\end{figure}
\begin{definition}
	\label{def:intersection-graph}
	For a graph $G$ with a subset of edges being forced and habitat set $\HHH$, the \emph{intersection graph}~$\partial \HHH$ has the vertex set $V(\partial\HHH) = \HHH$ and two habitats are \emph{neighboring}, i.e., $\{H_1, H_2\} \in E(\partial \HHH)$ if and only if
	\begin{compactenum}[(i)]
		\item $G[H_1 \cap H_2]$ contains at least one unforced edge and\label{def:intersection-graph:unforced}
		\item there is no $H_3 \in \HHH$ such that
			$E(G[H_1 \cap H_2])$ is a proper subset of $E(G[H_1 \cap H_3])$ \emph{or} $E(G[H_2 \cap H_3])$.\label{def:intersection-graph:proper}
	\end{compactenum}
	We call the edges in $G[H_1 \cap H_2]$ \emph{intersection edges} between~$H_1$ and~$H_2$.
	For a subset $\CCC \subseteq \HHH$,
	we denote by $V(\CCC)$ and $E(\CCC)$ the vertices and the edges of $G[\bigcup_{H \in \CCC} H]$.
\end{definition}

We show that
we can compute $\partial\HHH$ efficiently and solutions separately for each of its connected components.

\begin{observation}
	\label{obs:r-is-bounded}
	If the maximum habitat size and the maximum degree of the graph are at most four,
	then each edge is contained in at most 21 habitats.
\end{observation}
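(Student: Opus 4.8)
The plan is to fix an arbitrary edge $e = \{u,v\}$ and bound the number of habitats $H$ with $e \in E(G[H])$; these are exactly the habitats with $\{u,v\} \subseteq H$. Since we work on a reduced instance, each such $H$ induces a $K_3$, a $K_4$, or a $\kfoure$, so $3 \le |H| \le 4$. Writing $W \coloneqq H \setminus \{u,v\}$ we thus have $1 \le |W| \le 2$, and because habitats are pairwise distinct vertex sets, each habitat through $e$ is determined uniquely by its set $W$. It therefore suffices to bound the number of admissible sets $W$.

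The key step is to show that every vertex $w \in W$ is adjacent in $G$ to $u$ or to $v$. This follows from $2$-connectivity: as $G[H]$ is isomorphic to $K_3$, $K_4$, or $\kfoure$, every vertex of $G[H]$ has degree at least two. If some $w \in W$ were adjacent to neither $u$ nor $v$, then its only possible neighbour inside $H$ would be the at most one other vertex of $W$, forcing $\deg_{G[H]}(w) \le 1$ --- a contradiction.

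Consequently $W$ is a nonempty subset of $A \cup B$ of size at most two, where $A \coloneqq N(u) \setminus \{v\}$ and $B \coloneqq N(v) \setminus \{u\}$. The maximum-degree-four assumption gives $|A| \le 3$ and $|B| \le 3$, hence $|A \cup B| \le 6$. The number of nonempty size-at-most-two subsets of a set of size at most six is at most $\binom{6}{1} + \binom{6}{2} = 6 + 15 = 21$, which is the desired bound on the number of habitats containing $e$.

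I expect the only delicate point to be the adjacency claim for $w$, which is exactly where the structural consequence of the preprocessing --- that each $G[H]$ is $2$-connected, established in \cref{obs:rr-applied} --- is indispensable. Without it one could attach a vertex at distance two from both $u$ and $v$ (e.g.\ a pendant on a common neighbour), so that the candidate set for $W$ would no longer be confined to $N(u) \cup N(v)$ and the count would fail. Everything else is the elementary counting above, and the extremal value $21$ is attained precisely when $u$ and $v$ each have three further neighbours, all distinct.
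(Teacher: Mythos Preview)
Your proof is correct and follows essentially the same approach as the paper: both arguments note that any habitat through $e=\{u,v\}$ is determined by the one or two extra vertices, which must lie in $N(u)\cup N(v)\setminus\{u,v\}$ (a set of size at most six), yielding the count $6+\binom{6}{2}=21$. Your write-up is in fact more explicit about \emph{why} the extra vertices lie in $N(u)\cup N(v)$---the paper simply asserts ``each habitat containing $e$ is a subset of $N(e)$'' without spelling out the minimum-degree-two argument you give.
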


\begin{proof}
	Consider an edge $e = \{u, v\}$
	and two habitats $H$ and $H'$ containing $e$.
	Let $N(e) \ceq N(u)\cup N(v)$.
	As each habitat containing $e$ is a subset of $N(e)$,
	we have $H = H'$ whenever $N(e) \cap H = N(e) \cap H'$;
	i.e., each habitat $H$ containing $e$ is uniquely identified by $N(e) \cap H$.
	Since $|N(e)\setminus \{u, v\}| \le 6$,
	the number of habitats of size three and four containing~$e$ is at most~$6$ and~$\binom{6}{2}$,
	respectively.
\end{proof}

\begin{lemma}
 \label{lem:habintgr-rt}
 Given a graph $G$ with maximum degree four, a subset of forced edges and a set $\HHH$ of~$r$ habitats each of size at most four,
 one can compute~$\partial\HHH$ in~$\O(n+r)$ time.
\end{lemma}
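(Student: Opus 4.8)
The plan is to establish Lemma~\ref{lem:habintgr-rt} by first bounding the number of candidate neighboring pairs and then showing that condition~\eqref{def:intersection-graph:proper} of \cref{def:intersection-graph} can be verified locally and cheaply. The starting point is \cref{obs:r-is-bounded}: since each edge lies in at most $21$ habitats, the total number of incidences between edges and habitats is $\O(m)$, and since each habitat has constant size (at most four) and thus $\O(1)$ edges, we have $m = \O(r)$ after preprocessing. Crucially, two habitats can only be candidates for neighboring if they share at least one edge, and every shared edge is counted among the $\binom{21}{2} = \O(1)$ pairs of habitats through it. Hence the number of \emph{candidate} pairs $\{H_1, H_2\}$ with $E(G[H_1 \cap H_2]) \neq \emptyset$ is $\O(m) = \O(n + r)$, and this set can be enumerated in the same time by iterating over each edge $e$, collecting the $\O(1)$ habitats containing $e$ (computable from the local neighborhood $N(e)$ as in the proof of \cref{obs:r-is-bounded}), and emitting all pairs.

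First I would compute, for each edge $e$, the list $L(e)$ of habitats containing $e$; this takes $\O(n + r)$ time by scanning the constant-size neighborhoods. From these lists I build, for each candidate pair $\{H_1, H_2\}$, the intersection edge set $E(G[H_1 \cap H_2])$, which has constant size since $|H_1|, |H_2| \le 4$. Checking condition~\eqref{def:intersection-graph:unforced}---that this set contains an unforced edge---is then immediate in constant time per pair. The remaining work is condition~\eqref{def:intersection-graph:proper}: I must verify that no third habitat $H_3$ satisfies $E(G[H_1 \cap H_2]) \subsetneq E(G[H_1 \cap H_3])$ or $\subsetneq E(G[H_2 \cap H_3])$. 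The key observation making this efficient is that any such $H_3$ must itself share all edges of $E(G[H_1 \cap H_2])$ with $H_1$ (or with $H_2$), so $H_3 \in L(e)$ for every $e \in E(G[H_1 \cap H_2])$; in particular $H_3$ ranges over the $\O(1)$ habitats in a single list $L(e)$ for any fixed intersection edge~$e$. Thus for each candidate pair I only examine $\O(1)$ potential witnesses $H_3$, each comparison of constant-size edge sets costing $\O(1)$.

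The main obstacle I anticipate is the bookkeeping that guarantees each candidate pair and each witness check is touched only a constant number of times, so the total does not blow up beyond the $\O(n+r)$ budget. Since a candidate pair may arise from several shared edges, I would deduplicate pairs (e.g.\ by a canonical ordering of habitat indices and a linear-time bucketing pass), ensuring each of the $\O(n+r)$ pairs is processed exactly once; the proper-subset test for that pair then costs $\O(1)$ as argued. Summing over all candidate pairs yields $\O(n + r)$ total time to determine the full edge set of $\partial\HHH$, and creating its $r$ vertices is trivially within this bound. I would close by noting that all constants hidden in the $\O$-notation depend only on the maximum degree and maximum habitat size, both bounded by four, which justifies treating per-pair and per-edge work as $\O(1)$ throughout.
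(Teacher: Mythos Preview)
Your proposal is correct and follows essentially the same approach as the paper: both build per-edge habitat lists using \cref{obs:r-is-bounded}, iterate over the $\O(1)$ pairs of habitats sharing each edge, and verify condition~\eqref{def:intersection-graph:proper} by examining the $\O(1)$ habitats that share an edge with $H_1$ or $H_2$. You are somewhat more explicit than the paper about deduplicating candidate pairs and about the bound $m = \O(r)$, but the underlying argument is the same.
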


\begin{proof}
	We first compute for each edge $e$ the set of habitats containing $e$ (which is constant by \cref{obs:r-is-bounded}) in two steps:
	First, compute for each vertex $v$ the set of habitats containing $v$ (which is also constant by \cref{obs:r-is-bounded}),
	then do a breadth-first search over $G$ and, whenever there is an edge $e = \{u,v\}$ with $u,v \in H$, add $H$ to the habitat set for $e$.
	Next, for every unforced edge $e$ and every pair $H_1, H_2$ of habitats containing $e$,
	we check \cref{def:intersection-graph}\eqref{def:intersection-graph:proper} by listing all habitats sharing an edge with $H_1$ or $H_2$ in constant time.
	If there is no $H_3 \in \HHH$ with $E(G[H_1 \cap H_2]) \subseteq E(G[H_i \cap H_3])$ for $i \in \{1, 2\}$, then add $\{H_1, H_2\}$ to $E(\partial\HHH)$.
	As there are $\O(n)$ edges, the claimed running time follows.
\end{proof}

\begin{lemma}
	\label{lem:partition-components}
	For an instance of \GwgbpAcr{},
	let $\CCC_1, \CCC_2, \dots, \CCC_\ell$ be the set of connected components of $\partial \HHH$.
	For each $i \in \{1, \dots, \ell\}$,
	define $G_i \coloneqq G[\bigcup \CCC_i]$,
	and $F_i \subseteq E(G_i)$ be a minimum-cost solution for the subgraph $G_i$.
	Then $F \coloneqq \setFc \cup F_1 \cup \dots \cup F_\ell$ is a minimum-cost solution for $G$, where $\setFc$ is the set of forced edges.
\end{lemma}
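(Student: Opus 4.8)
The plan is to establish two directions---that $F$ is a valid solution and that it is of minimum cost---the second of which carries all the weight. Throughout I assume that \cref{rr:hab-subsets} has been applied exhaustively, so that no habitat is a subset of another. This hypothesis is essential: without it a habitat contained in a larger one can serve as a ``dominating witness'' in condition~(ii) of \cref{def:intersection-graph} and isolate a third habitat that genuinely shares an unforced edge, which breaks the independence of the components (one can build a three-habitat instance on a $K_4$ together with a pendant triangle on which solving the components separately overshoots the optimum). I also use \cref{rr:irredge}, which guarantees that every edge---and in particular every forced edge---lies in $E(G[H])$ for some habitat $H$; hence $E(G) = \bigcup_i E(G_i)$ and $\setFc \subseteq \bigcup_i F_i$, the latter making the explicit union with $\setFc$ in the statement redundant.

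Validity is immediate. Each habitat $H$ lies in exactly one component $\CCC_i$, and since $F_i$ is a solution for $G_i$ there is a feasible set $F_H \in \FFF_H$ with $F_H \subseteq F_i \subseteq F$; moreover $\setFc \cap E(G_i) \subseteq F_i$ for each $i$, so $\setFc \subseteq F$. Thus $F$ contains all forced edges and satisfies every habitat. The heart of the argument is the following structural claim, which I call \emph{Claim~(A)}: if an unforced edge $f$ satisfies $f \in E(G[H]) \cap E(G[H'])$ for two habitats $H, H'$, then $H$ and $H'$ lie in the same connected component of $\partial\HHH$.

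To prove Claim~(A) I would show that the set $\mathcal{S}_f$ of all habitats whose induced subgraph contains $f$ spans a connected subgraph of $\partial\HHH$. The starting observation is that any witness $H_3$ that prevents two habitats of $\mathcal{S}_f$ from being adjacent via condition~(ii) of \cref{def:intersection-graph} must itself lie in $\mathcal{S}_f$: from $E(G[H_1 \cap H_2]) \subsetneq E(G[H_1 \cap H_3])$ we get $f \in E(G[H_1 \cap H_3]) \subseteq E(G[H_3])$, so $H_3 \in \mathcal{S}_f$. Hence adjacency within $\mathcal{S}_f$ is decided entirely inside $\mathcal{S}_f$, and a $\subseteq$-maximal intersection edge set among pairs of $\mathcal{S}_f$ always yields an edge of $\partial\HHH$. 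I would then argue connectivity by contradiction: assuming $\mathcal{S}_f$ splits into parts $A \sqcup B$ with no crossing edge, pick a crossing pair whose intersection edge set $S$ has maximum cardinality; its non-adjacency forces a witness $H_3 \in \mathcal{S}_f$ with a strictly larger intersection on one side, and a short case analysis (using that the witness lies in $\mathcal{S}_f$, the maximality of $S$, and the no-subset property to dispose of the case where the witness falls on the \emph{same} side of the partition) produces a crossing pair strictly larger than $S$, a contradiction. I expect this connectivity step to be the main obstacle; the delicate case is precisely the internal witness, and it is there that the no-subset guarantee of \cref{rr:hab-subsets} must be invoked.

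Granting Claim~(A), optimality follows by a cost-accounting exchange. Let $F^\ast$ be any optimal solution and set $F^\ast_i \coloneqq F^\ast \cap E(G_i)$. Each $F^\ast_i$ satisfies every habitat of $\CCC_i$ and contains $\setFc \cap E(G_i)$, hence is a solution for $G_i$, so $c(F_i) \le c(F^\ast_i)$ by minimality of $F_i$. Claim~(A) ensures that every unforced edge a minimum solution would ever use lies in the effective edge set $\bigcup_{H \in \CCC_i} E(G[H])$ of a single component, so any edge shared between $E(G_i)$ and $E(G_j)$ for $i \neq j$ that is actually used is forced. Splitting $F$ and $F^\ast$ into their private parts (edges relevant to exactly one component) and their shared forced part, the shared forced contribution is identical in both, since all such edges are forced and therefore lie in $F^\ast$ and in every $F_i$ whose $G_i$ contains them; on the private parts the per-component inequalities $c(F_i) \le c(F^\ast_i)$ sum up without double-counting. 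This gives $c(F) \le c(F^\ast)$ and completes the proof.
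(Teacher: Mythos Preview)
Your central Claim~(A) is false, even after \cref{rr:hab-subsets} has been applied. Take $H_1=\{1,2,3,6\}$, $H_2=\{2,3,4\}$, $H_3=\{1,2,3,5\}$ on a graph in which $\{1,2,3\}$ spans a triangle and $\{2,3\}$ is unforced. Then $E(G[H_1\cap H_2])=E(G[H_2\cap H_3])=\{\{2,3\}\}$ while $E(G[H_1\cap H_3])=\{\{1,2\},\{1,3\},\{2,3\}\}$, so $H_3$ blocks $\{H_1,H_2\}$ via condition~(ii), $H_1$ blocks $\{H_2,H_3\}$ symmetrically, and the only edge of $\partial\HHH$ is $\{H_1,H_3\}$. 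Hence $H_2$ is isolated despite sharing the unforced edge $\{2,3\}$ with both $H_1$ and $H_3$, and no habitat is a subset of another. This is precisely your ``same side'' case: the witness for the maximal crossing pair $(H_1,H_2)$ is $H_3\in A$, the witness for the resulting pair $(H_3,H_2)$ is $H_1\in A$, and the process cycles without ever producing a larger crossing pair or an edge across the cut. The ``short case analysis using no-subset'' you hope for cannot exist.

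This is not merely a gap in the proof: the lemma itself fails at the stated generality. In the instance above one can choose feasible sets so that the component $\{H_1,H_3\}$ has two locally optimal solutions of equal cost, one containing $\{2,3\}$ and one not; picking the latter yields $c(F_1\cup F_2)$ strictly larger than the global optimum, which uses $\{2,3\}$ once for both $H_1$ and $H_2$. For concreteness, with unit costs and $\FFF_{H_1}=\{\{23,16\},\{12,13\}\}$, $\FFF_{H_3}=\{\{15,35\}\}$, $\FFF_{H_2}=\{\{23,24\}\}$, the component-wise optimum is~$6$ while the global optimum is~$5$. The paper's own proof sidesteps this by asserting that a single witness $H_3$ already yields $\{H_1,H_3\},\{H_2,H_3\}\in E(\partial\HHH)$, which your example equally refutes; any correct argument must invoke the additional structure guaranteed by the full preprocessing (two-connected habitats with triangles, bounded degree), not just the absence of habitat containments.
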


\begin{proof}
	If $G[H_1 \cap H_2]$ contains edges but $\{H_1, H_2\} \notin E(\partial\HHH)$,
	then all intersection edges are in $\setFc$,
	or there is a habitat $H_3$ as in \cref{def:intersection-graph}(ii).
	In the latter case, $\{H_1, H_3\}, \{H_3, H_2\} \in E(\partial\HHH)$; thus $H_1, H_2, H_3$ are in the same connected component.
	Thus, for every $i, j \in \{1, \dots, \ell\}$, we have $E(G_i) \cap E(G_j) \subseteq \setFc$.
	Consequently, $F_i \cap F_j \subseteq \setFc$.
	Clearly, $F$ is a solution, and every solution must contain $\setFc$.
	As every $F_i$ is cost-optimal for $G_i$, the resulting solution $F$ is also cost-optimal.
\end{proof}

\subsection{Structure of the habitat intersection graph}

The following important vertices limit how habitats intersect.

\begin{definition}
	\label{def:docking}
	For two neighboring habitats $H, H' \in \HHH$, we call a vertex $v \in H \cap H'$ \emph{docking towards $H'$} if it is adjacent to a vertex in $H' \setminus H$.
\end{definition}

Observe that a vertex $v \in H \cap H'$ may be docking towards $H'$, but not towards $H$; i.e., docking vertices are not symmetric.
Also, there are always at least two docking vertices towards $H'$ in $H \cap H'$ if $H$ and $H'$ are neighboring.

\begin{observation}
	\label{obs:docking-2-3}
	Let $H, H' \in \HHH$ be two neighboring habitats.
	Then there are at least two and at most $\max \{|H|, |H'|\}-1$ vertices in $H$ docking towards $H'$.
\end{observation}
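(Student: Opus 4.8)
The plan is to treat the two bounds separately, writing $S \coloneqq H \cap H'$ and $T \coloneqq H' \setminus H$, and to lean on two consequences of \cref{obs:rr-applied}: that $G[H']$ is $2$-connected, and that (by the neighboring condition, \cref{def:intersection-graph}(i)) $G[S]$ contains an edge, so $|S| \ge 2$. I will also use that \cref{rr:hab-subsets} has been applied exhaustively, so no habitat is contained in another; chasing this through chains of inclusions shows that after exhaustive application the instance has no proper-subset relations among habitats, hence $H' \not\subseteq H$ and $T \neq \emptyset$.

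For the \emph{upper bound}, every vertex docking towards $H'$ lies in $S$ by \cref{def:docking}, so the count is at most $|S| = |H \cap H'|$. Since $H \neq H'$, a short split on the sizes gives $|H \cap H'| \le \max\{|H|,|H'|\} - 1$: when $|H| \neq |H'|$ this is immediate from $|H \cap H'| \le \min\{|H|,|H'|\} \le \max\{|H|,|H'|\} - 1$, and when $|H| = |H'|$ distinctness forces the intersection to be proper, so again $|H \cap H'| \le |H| - 1$. This step is routine and does not even need the subset rule.

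The \emph{lower bound} is the real point. I claim at least two vertices of $S$ have a neighbor in $T$, and I would argue by contradiction. If no vertex of $S$ has a neighbor in $T$, then there is no edge between the two nonempty sets $S$ and $T$, contradicting connectivity of $G[H']$. If exactly one vertex $s \in S$ has a neighbor in $T$, then deleting $s$ leaves the nonempty sets $S \setminus \{s\}$ (nonempty since $|S| \ge 2$) and $T$ with no edge between them, so $s$ is a cut vertex of $G[H']$, contradicting $2$-connectivity. Hence at least two vertices of $S$ dock towards $H'$.

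The main obstacle is simply preventing the lower-bound argument from degenerating: it collapses if $T = \emptyset$ or $|S| < 2$, so the crux is to pin down $T \neq \emptyset$ (from \cref{rr:hab-subsets}) and $|S| \ge 2$ (from the neighboring edge). Once those are secured, the cut-vertex argument via $2$-connectivity of $G[H']$ closes the proof uniformly, with no need to case-split on whether $G[H']$ is a $K_3$, a $K_4$, or a $\kfoure$.
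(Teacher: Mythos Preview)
Your proof is correct and follows essentially the same approach as the paper's: the paper's two-sentence proof sketches exactly your cut-vertex argument for the lower bound (``either $H$ and $H'$ do not have an intersecting edge \dots\ or $H'$ is not $2$-connected'') and the subset contradiction for the extremes (``one habitat is subset of the other''). You have simply unpacked these hints carefully, in particular making explicit the roles of $T \neq \emptyset$ (via \cref{rr:hab-subsets}) and $|S| \ge 2$ (via the intersecting edge), which the paper leaves implicit.
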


\begin{proof}
	If there is only one docking vertex, then either $H$ and $H'$ do not have an intersecting edge and thus are not neighboring, or $H'$ is not $2$-connected.
	If there is none or the upper bound is violated, then one habitat is subset of the other.
\end{proof}

Note that the docking vertices towards $H'$ may have at most three neighbors within $H$, otherwise they cannot be adjacent to any vertex in $H' \setminus H$.
Indeed, we show that among the intersecting edges, there must be one whose endpoints both have degree at most three in $H$.
We call them \emph{docking edges}, and additionally \emph{proper}
if one of their endpoints has degree two.
We define these notions for sets of habitats.

\begin{definition}
	For a set $\CCC \subseteq \HHH$ of habitats,
	an edge $\{u,v\} \in E(\CCC)$ is called \emph{docking} if $\deg_{\CCC}(u) \le 3$ and $\deg_{\CCC}(v) \le 3$, and properly docking if $\deg_{\CCC}(u) = 2$ and $\deg_{\CCC}(v) = 3$.
\end{definition}

\begin{observation}
	\label{obs:potential-0}
	If a set $\CCC \subseteq \HHH$ does not contain docking edges, then $\CCC$ does not have neighboring habitats.
\end{observation}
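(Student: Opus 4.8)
The plan is to prove the contrapositive: if $\CCC$ contains two neighboring habitats $H$ and $H'$, then $E(\CCC)$ contains a docking edge. Two ingredients are available immediately. By condition~(i) of \cref{def:intersection-graph}, $H$ and $H'$ share an unforced edge, so $|H \cap H'| \ge 2$ and the intersection edges all lie in $E(\CCC)$. By \cref{obs:docking-2-3}, at least two vertices of $H$ dock towards $H'$ and, symmetrically, at least two of $H'$ dock towards $H$; every such docking vertex lies in $H \cap H'$ and has a neighbor in the opposite habitat. These docking vertices are the natural endpoints to inspect, and the task reduces to showing that some edge of $E(\CCC)$ incident to them has both endpoints of small $\deg_\CCC$.

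I would next reduce to a finite case analysis. By \cref{obs:rr-applied} each of $G[H]$ and $G[H']$ is $2$-connected, contains a triangle, and has at most four vertices, hence is isomorphic to $K_3$, $K_4$, or $\kfoure$; moreover $|H\cap H'|\in\{2,3\}$ because neither habitat is a subset of the other. In every combination I would exploit the following budget: since the maximum degree is at most four and a docking vertex must spend a slot on a neighbor in the opposite habitat, the number of further incident edges a docking vertex can carry is tightly constrained. Tracking, per vertex, the edges forced inside its own habitat by $2$-connectivity against this degree-four budget bounds $\deg_\CCC$ from above and singles out which vertices could possibly be saturated, i.e.\ attain the maximal value of $\deg_\CCC$.

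The crux is the saturated configuration, where all vertices of $H\cap H'$ attain the maximal $\deg_\CCC$, so that no intersection edge is itself docking---this already occurs for a $K_4$ and a $K_3$ sharing a single edge. Here I would argue that a docking edge need not be an intersection edge: the $2$-connectivity of a $K_4$- or $\kfoure$-habitat forces a ``private'' edge, disjoint from the docking vertices, and the degree budget already consumed by docking prevents both of its endpoints from being saturated, so this private edge is docking. Since $\CCC$ is finite and the degree-four cap limits how far saturation can propagate, this local budgeting cannot be evaded indefinitely; establishing that the propagation always terminates in an unsaturated adjacent pair is the step I expect to require the most care, and it is what finally rules out the saturated case and proves the contrapositive.
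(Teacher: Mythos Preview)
You have misread the statement. ``$\CCC$ does not have neighboring habitats'' does not mean ``no two habitats \emph{inside} $\CCC$ are adjacent in $\partial\HHH$''; it means that no habitat $H\in\HHH\setminus\CCC$ is adjacent in $\partial\HHH$ to some $H'\in\CCC$. This is confirmed both by the paper's own proof (which starts ``Suppose that there is a habitat $H\notin\CCC$ that is neighboring a habitat $H'\in\CCC$'') and by the only place the observation is used, \cref{lem:path-cycle-constant}, where it guarantees that once $\CCC$ has run out of docking edges it cannot be grown further.

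Under the correct reading the argument is short and needs none of your case analysis or ``propagation'' worries: the external habitat $H$ contributes a vertex $x\in H\setminus V(\CCC)$, and the docking vertices $u,v\in H\cap H'$ are adjacent to $x$; since $\Delta(G)\le 4$, this forces $\deg_\CCC(u),\deg_\CCC(v)\le 3$. If $\{u,v\}\in E$ one has a docking edge in $\CCC$ immediately; otherwise there is a third vertex $w\in H\cap H'$ with $\{u,w\},\{v,w\}\in E$, and a brief check using that $G[H]$ is $2$-connected with a triangle rules out $\deg_\CCC(w)=4$. Your detour through $K_4$/$\kfoure$ combinations, ``private'' edges, and a saturation argument whose termination you yourself flag as unclear is a symptom of attacking the wrong claim: without the external vertex $x$ you lose the clean degree bound, which is exactly what makes the intended proof work.
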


\begin{proof}
	Suppose that there is a habitat $H \notin \CCC$ that is neighboring a habitat $H' \in \CCC$.
	Then there are two vertices $u, v \in H \cap H'$ that are docking towards $H$, and there is one vertex $x \in H \setminus V(\CCC)$.
	Thus, the degree of both of these vertices within $\CCC$ is at most three.
	So if $\{u,v\} \in E$, then $\CCC$ contains a docking edge.
	Hence, $\{u,v\} \notin E$, and as $H \cap H'$ contains an unforced edge, there is another vertex $w \in H \cap H'$, and $\{u,w\}, \{w, v\} \in E$.
	As $\CCC$ contains no docking edge, we have $\deg_\CCC(w) = 4$, and so $\{w, x\} \notin E$.
	Thus, $H$ induces a $C_4$ (or $G[H]$ contains even fewer edges);
	a contradiction to the inapplicability of \cref{rr:nontriangle}.
\end{proof}

We now show a monotonicity property when there are no proper docking edges in $\CCC$.

\begin{lemma}
	\label{lem:potential-deg3}
	Let $\CCC \subseteq \HHH$ be a set of at least two habitats that are connected in $\partial\HHH$.
	Let $H \in \HHH \setminus \CCC$ be adjacent to at least one habitat in $\CCC$.
	If for each vertex $v \in V(\CCC)$ docking towards $H$ we have $\deg_\CCC(v) \ge 3$,
	then $\CCC' \coloneqq \CCC \cup \{H\}$ contains strictly less docking edges than~$\CCC$.
	Moreover, $\CCC'$ does not contain a docking edge with an endpoint in $H$.
\end{lemma}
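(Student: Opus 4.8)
The plan is to compare the docking edges of $\CCC$ and of $\CCC' = \CCC \cup \{H\}$ by tracking how vertex degrees change, using that enlarging $\CCC$ by $H$ contributes exactly the edges $E(G[H])$, i.e. $E(\CCC') = E(\CCC) \cup E(G[H])$, so degrees can only increase. Recall that each habitat induces a $K_3$, a $K_4$, or a $\kfoure$. I write $A \coloneqq H \cap V(\CCC)$, $B \coloneqq H \setminus V(\CCC)$, and let $D \subseteq A$ be the vertices docking towards $H$ (those of $A$ with a neighbour in $B$). I treat the case $B \neq \emptyset$ (otherwise no vertex docks towards $H$), and note $|A| \ge 2$ since $H$ shares an edge with some $H' \in \CCC$.

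The step I expect to be the main obstacle is showing that $B$ is a single vertex; everything else is bookkeeping. For $v \in D$ the hypothesis gives $\deg_\CCC(v) \ge 3$, while $\deg_{\CCC'}(v) \ge \deg_\CCC(v) + |N(v) \cap B|$ and the maximum degree is four; hence $v$ has exactly one neighbour in $B$, with $\deg_\CCC(v) = 3$ and $\deg_{\CCC'}(v) = 4$. Thus every vertex of $A$ has at most one neighbour in $B$. If $|B| \ge 2$, then $|A| = |B| = 2$ and $H \in \{K_4, \kfoure\}$; as $H$ then has six (resp. five) edges, of which at most one lies inside $A$ and at most one inside $B$, at least four (resp. three) edges run between $A$ and $B$, forcing some vertex of $A$ to have two neighbours in $B$ -- a contradiction. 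So $B = \{w\}$ is a single vertex, $D = N_H(w)$, and $|D| \ge 2$ because $K_3, K_4, \kfoure$ have minimum degree two.

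Now the new edges are easy to pin down. At most one vertex of $H$ is a non-neighbour of $w$, so $|A \setminus D| \le 1$ and no edge of $G[H]$ joins two vertices of $A \setminus D$; since each $v \in D$ spends its single new edge on $w$, every edge of $G[H]$ avoiding $w$ already lies in $E(\CCC)$. Hence $E(\CCC') \setminus E(\CCC)$ consists precisely of the edges $\{w, v\}$ with $v \in D$, and each of these has the endpoint $v$ of degree four, so none is docking. This is exactly the ``moreover'' statement: no docking edge of $\CCC'$ meets the new vertex $w$, the unique element of $H \setminus V(\CCC)$. It also shows that any docking edge of $\CCC'$ is an old edge whose endpoints have degree at most three in $\CCC'$, hence at most three in $\CCC$; so it is already a docking edge of $\CCC$.

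Finally, to force a strict drop I would produce one docking edge of $\CCC$ that disappears. Checking $K_3, K_4$, and $\kfoure$, the neighbourhood $N_H(w) = D$ always spans an edge $\{u, v\}$ of $G[H]$; it avoids $w$, so $\{u, v\} \in E(\CCC)$, and $\deg_\CCC(u) = \deg_\CCC(v) = 3$ makes it docking in $\CCC$. In $\CCC'$ both endpoints have degree four, so $\{u, v\}$ is no longer docking there. Combined with the previous paragraph -- no docking edge is created -- this gives strictly fewer docking edges in $\CCC'$ than in $\CCC$, completing the proof.
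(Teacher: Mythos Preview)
Your argument for the main conclusion (strictly fewer docking edges) is correct and follows the same strategy as the paper: track how degrees change when $H$ is added, show that no new docking edges appear, and exhibit one that disappears. The organisation is different and arguably cleaner: the paper splits into the cases $|H\cap V(\CCC)|=2$ and $|H\cap V(\CCC)|=3$ (and further sub-cases on the shape of $G[H]$), whereas you first reduce uniformly to $|B|=1$ and then run a single degree-counting argument. Your counting proof that $|B|\ge 2$ is impossible is a nice alternative to the paper's ``$H$ would induce a $C_4$'' argument.

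There is one gap. You misread the ``moreover'' clause: it asserts that no docking edge of $\CCC'$ has an endpoint \emph{anywhere in $H$}, not merely at the new vertex $w$. Your argument already yields $\deg_{\CCC'}(v)=4$ for every $v\in D$, and all edges at $w$ go to $D$, so you have in fact covered $D\cup\{w\}$. What remains is the sub-case $|A\setminus D|=1$, which occurs precisely when $|A|=3$ and $G[H]\cong\kfoure$ with the non-edge between $w$ and some $v\in A$. For this $v$ you have not argued that it is incident to no docking edge in $\CCC'$; its edges inside $G[H]$ go to $D$ and are therefore fine, but $v$ may have further neighbours in $V(\CCC)\setminus H$. (The paper's own case analysis is also terse at exactly this point.)

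A minor technical remark: your opening identity $E(\CCC')=E(\CCC)\cup E(G[H])$ is not literally what the paper's definition gives, since $E(\CCC)$ is the edge set of the \emph{induced} subgraph $G[\bigcup_{H'\in\CCC}H']$; adding $H$ also adds any edge of $G$ from $w$ to $V(\CCC)\setminus H$. Your degree computations for $v\in D$ are unaffected, but the claim that the new edges are \emph{exactly} $\{w,v\}$ with $v\in D$ needs this caveat.
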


\begin{proof}
	Suppose first that $H \cap V(\CCC)$ contains exactly two vertices $u, v$.
	Then both
	are docking vertices, and $\deg_{\CCC}(u) = \deg_{\CCC}(v) = 3$.
	Moreover, $\{u,v\} \in E$, or $H$ is not neighboring any habitat in $\CCC$.
	If $H$ contains four vertices, then the two (adjacent) vertices in $H \setminus V(\CCC)$ both have degree two; thus $H$ induces a $C_4$; a contradiction to \cref{rr:nontriangle} being inapplicable.
	If $H = \{u, v, w\}$, then $\deg_{\CCC'}(w) = 3$ and $\deg_{\CCC'}(u) = \deg_{\CCC'}(v) = 4$.
	Thus, the new vertex $w \in V(\CCC') \setminus V(\CCC)$ is not incident to a docking edge,
	and $\{u,v\}$ is docking in $\CCC$ but not in $\CCC'$.
	As the remaining edges are unaffected,
	there are strictly less docking edges in $\CCC'$.

	Now, suppose that $H \cap V(\CCC)$ contains exactly three vertices $u,v,w$.
	Then $H$ contains a fourth vertex $x \notin V(\CCC)$.
	If one of the vertices in $H \cap V(\CCC)$, say $v$, is not docking, then it must be adjacent to both $u$ and $w$; otherwise $G[H]$ is not $2$-connected.
	Then $\{v,x\} \notin E$ (otherwise $v$ is docking); thus $u$, $v$, $w$ form a triangle in $G$ (otherwise $H$ induces a $C_4$).
	Then none of the edges incident to $x \in V(\CCC') \setminus V(\CCC)$ are docking,
	but $\{u,w\}$ is docking in $\CCC'$ but not in $\CCC$.
	As the remaining edges are unaffected, the number of docking edges in $\CCC'$ strictly decreases.
	Suppose next that $v$ is docking.
	If $u$, $v$, $w$ form a triangle in $G$ (i.e., $H$ induces a $K_4$), then we are in a case identical to the three-vertex case above.
	If one of the edges between $u, v, w$, say, $\{u,w\}$, is missing, then $H$ induces a $\kfoure$.
	In this case, $\{u,v\}$ and $\{v,w\}$ are docking in $\CCC$, but not in $\CCC'$; and there is no docking edge incident to $x$ as $\deg_{\CCC'}(u) = \deg_{\CCC'}(w) = 4$ and $\deg_{\CCC'}(x) = 3$.
	Thus, the number of docking edges strictly decreases.
\end{proof}

Next we show that, if $\CCC$ contains proper docking edges, then $\CCC$ is part of a path in $\partial\HHH$.

\begin{lemma}
	\label{lem:two-makes-a-path}
	Let $\CCC \subseteq \HHH$ be a set of at least two habitats that are connected in $\partial\HHH$.
	If $\CCC$ contains at least one proper docking edge, then $\partial\HHH[\CCC]$ is a path,
	and each habitat corresponding to an endpoint of the path contains exactly one degree-two vertex with a degree-three neighbor.
\end{lemma}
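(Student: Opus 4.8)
The plan is to prove the statement by induction on $|\CCC|$, repeatedly peeling off one leaf of $\partial\HHH[\CCC]$ that is forced to exist by a proper docking edge. First I would localize the proper docking edge $\{u,v\}$ with $\deg_\CCC(u)=2$ and $\deg_\CCC(v)=3$. The degree-two vertex $u$ pins down a unique habitat: every habitat in $\CCC$ containing $u$ must contain both neighbors of $u$ (which are $v$ and a second vertex $w$), plus at most one further vertex; since that further vertex is adjacent to $v$, the degree-three bound on $v$ leaves only one candidate for it. Hence, using that no habitat is a subset of another after \cref{rr:hab-subsets}, $u$ lies in exactly one habitat $H_0$, which is either a $K_3$ on $\{u,v,w\}$ or a $\kfoure$ on $\{u,v,w,c\}$ in which $u$ has degree two.

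Next I would show that $H_0$ is an endpoint of $\partial\HHH[\CCC]$, i.e.\ a leaf. Both $u$ and $v$ are confined to $H_0$ (all their $\CCC$-neighbors lie in $H_0$, and any further habitat through them would be a forbidden subset by \cref{rr:hab-subsets}), so every edge of $H_0$ except a single \emph{outer} edge — $\{v,w\}$ in the $K_3$ case, $\{w,c\}$ in the $\kfoure$ case — is incident to $u$ or $v$ and therefore shared with no other habitat. Thus all neighbors of $H_0$ share exactly this one outer edge. Its endpoints have essentially no free adjacency left outside $H_0$ (by the degree-three bound on $v$ in the first case, and the maximum-degree-four bound in the second), and together with \cref{def:intersection-graph}\eqref{def:intersection-graph:proper} this forces all habitats sharing the outer edge with $H_0$ to coincide: two distinct ones would either intersect each other in strictly more than the outer edge, destroying the corresponding edge of $\partial\HHH$, or reuse the single remaining neighbor and thus be equal. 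Connectivity of $\CCC$ forbids $\deg_{\partial\HHH[\CCC]}(H_0)=0$, so $H_0$ has exactly one neighbor $H_1$. A short check (a second degree-two vertex in $H_0$ would make $H_0$ isolated) yields the endpoint characterization: $u$ is the unique degree-two vertex of $H_0$, and its neighbor $v$ has degree three.

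For the inductive step I would set $\CCC' \coloneqq \CCC \setminus \{H_0\}$. Since $H_0$ is a leaf, $\CCC'$ stays connected and $\partial\HHH[\CCC']$ is precisely $\partial\HHH[\CCC]$ with the vertex $H_0$ deleted. The key point is that the shared vertex $v$, of degree three in $\CCC$, drops to degree two in $\CCC'$ because it loses exactly its neighbor $u$, which leaves together with $H_0$; meanwhile its partner $w$ must still have degree at least three in $\CCC'$, since otherwise $H_1$ would be isolated in $\CCC'$, contradicting $|\CCC'|\ge 2$. This creates a fresh proper docking edge $\{v,w\}$ inside $H_1$, so the induction hypothesis applies and shows that $\partial\HHH[\CCC']$ is a path with $H_1$ as an endpoint. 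Reattaching $H_0$ at the leaf $H_1$ lengthens the path by exactly one edge; the endpoint characterization holds at the $H_0$-end by the second step and at the far end by the induction hypothesis, the latter transfer being routine because $u$ is the only genuinely new vertex and it is adjacent only to $v$ and $w$. The base case $|\CCC|=2$ is a single edge and hence already a path.

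The main obstacle is the leaf argument of the second paragraph: collapsing the possibly several habitats sharing $H_0$'s unique outer edge into one neighbor. This is where the three ingredients — the absence of subset habitats (\cref{rr:hab-subsets}), the degree-four bound (which, as in \cref{obs:r-is-bounded}, leaves the outer vertices almost no spare adjacency), and the redundancy clause \cref{def:intersection-graph}\eqref{def:intersection-graph:proper} — must be combined carefully, and where the $\kfoure$ case genuinely differs from the $K_3$ case and has to be treated separately. The degree bookkeeping when transferring the endpoint property between $\CCC$ and $\CCC'$, as well as the verification that removing $H_0$ preserves connectivity and the induction hypotheses, is then comparatively routine.
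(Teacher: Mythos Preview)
Your overall plan --- induction on $|\CCC|$, peel off the habitat $H_0$ carrying the proper docking edge as a leaf, and exhibit a fresh proper docking edge in $\CCC\setminus\{H_0\}$ --- is exactly the paper's approach. The execution, however, conflates the $K_3$ and $\kfoure$ cases in a way that fails in both.

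In the $K_3$ case $H_0=\{u,v,w\}$, your claim that $v$ is confined to $H_0$ is false: $\deg_\CCC(v)=3$, but only two of $v$'s neighbors (namely $u,w$) lie in $H_0$, so the third lies outside. Hence your own designated outer edge $\{v,w\}$ is incident to $v$ and yet \emph{can} be shared with another habitat; the sentence ``every edge of $H_0$ except a single outer edge \ldots\ is incident to $u$ or $v$ and therefore shared with no other habitat'' does not logically isolate $\{v,w\}$. You still owe a separate argument (using $v$'s unique outside neighbor together with \cref{def:intersection-graph}\eqref{def:intersection-graph:proper}) that $H_0$ has only one neighbor in $\partial\HHH$. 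Conversely, in the $\kfoure$ case $H_0=\{u,v,w,c\}$, $v$ \emph{is} confined (its three neighbors $u,w,c$ all lie in $H_0$), but then $v$ disappears from $V(\CCC')$ together with $H_0$; your inductive step ``the shared vertex $v$ \ldots\ drops to degree two in $\CCC'$'' is vacuous here. The vertex that must supply the new proper docking edge is one of $w,c$, and that requires its own degree check. The paper handles precisely these case splits --- both in the leaf argument and in producing the next proper docking edge --- and your sketch would need to do the same.
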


\begin{figure}
	\centering
	\tikzset{
	}
	\tikzpreamble{}
	\begin{tikzpicture}
		\def\centerdist{0.21}
		\def\xsh{10.5}
		\def\xxsh{1.1}
		\def\ysh{0.875}
		\def\ysh{0.875}
		\def\teps{0.2}
		\newcommand{\labelk}[1]{\node at (-0.9,0.5)[anchor=west]{(#1)};}
		\newcommand{\labelkk}[1]{\node at (-0.9,0.5)[anchor=west]{(#1)};}

		\begin{scope}
		  \labelk{a\textsubscript{1}}
			\coordinate (t1) at (0,0) {};
			\path (t1) to ++(090:2*\centerdist) node [xnode] (u) {};
			\path (t1) to ++(210:2*\centerdist) node [xnode] (v) {};
			\path (t1) to ++(330:2*\centerdist) node [xnode] (w) {};
			\path (w)  to ++(090:2*\centerdist) coordinate (t2) {};
			\path (t2) to ++(30:2*\centerdist) node [xnode] (x) {};
			\draw (u) edge (v);
			\draw (v) edge (w);
			\draw[ultra thick] (w) edge (u);
			\draw (u) edge (x);
			\draw (w) edge (x);

			\draw[-, densely dashed, blue, rounded corners, draw] ($(v)+(-\teps,-\teps)$)
		to ($(w)+(\teps,-\teps)$) to ($(u)+(0,\teps)$) to  ($(v)+(-\teps,-\teps)$);
		\draw[-, thick, dotted, red, rounded corners] ($(x)+(\teps,\teps)$)
		to ($(w)+(0,-\teps)$) to ($(u)+(-\teps,\teps)$) to ($(x)+(\teps,\teps)$) ;
		\end{scope}

		\begin{scope}[xshift=\xsh cm * \centerdist]
			\labelk{a\textsubscript{2}}
			\coordinate (t1) at (0,0) {};
			\path (t1) to ++(090:2*\centerdist) node [xnode] (u) {};
			\path (t1) to ++(210:2*\centerdist) node [xnode] (v) {};
			\path (t1) to ++(330:2*\centerdist) node [xnode] (w) {};
			\path (w)  to ++(090:2*\centerdist) coordinate (t2) {};
			\path (t2) to ++(30:2*\centerdist) node [xnode] (x) {};
			\path (x)  to ++(270:2*\centerdist) coordinate (t3) {};
			\path (t3) to ++(330:2*\centerdist) node [xnode] (y) {};
			\draw (u) edge (v);
			\draw (v) edge (w);
			\draw[ultra thick] (w) edge (u);
			\draw (u) edge (x);
			\draw (w) edge (x);
			\draw (w) edge (y);
			\draw (x) edge (y);

			\draw[-, densely dashed, blue, rounded corners, draw] ($(v)+(-\teps,-\teps)$)
		to ($(w)+(\teps,-\teps)$) to ($(u)+(0,\teps)$) to  ($(v)+(-\teps,-\teps)$);
		\draw[-, thick, dotted, red, rounded corners] ($(x)+(\teps,\teps)$) to ($(y)+(\teps,-\teps)$) to ($(w)+(0,-\teps)$) to ($(u)+(-\teps,\teps)$) to ($(x)+(\teps,\teps)$) ;
		\end{scope}

		\begin{scope}[xshift=2*\xsh cm * \centerdist]
		  \labelk{a\textsubscript{3}}
			\coordinate (t1) at (0,0) {};
			\path (t1) to ++(090:2*\centerdist) node [xnode] (u) {};
			\path (t1) to ++(210:2*\centerdist) node [xnode] (v) {};
			\path (t1) to ++(330:2*\centerdist) node [xnode] (w) {};
			\path (w)  to ++(090:2*\centerdist) coordinate (t2) {};
			\path (t2) to ++(30:2*\centerdist) node [xnode] (x) {};
			\path (x)  to ++(270:2*\centerdist) coordinate (t3) {};
			\path (t3) to ++(330:2*\centerdist) node [xnode] (y) {};
			\draw (u) edge (v);
			\draw (v) edge (w);
			\draw[ultra thick] (w) edge (u);
			\draw[ultra thick] (u) edge (x);
			\draw[ultra thick] (w) edge (x);
			\draw (w) edge (y);
			\draw (x) edge (y);

			\draw[-, densely dashed, blue, rounded corners, draw] ($(v)+(-\teps,-\teps)$)
		to ($(w)+(\teps,-\teps)$) to ($(x)+(\teps,\teps)$)  to ($(u)+(0,\teps)$) to  ($(v)+(-\teps,-\teps)$);
		\draw[-, thick, dotted, red, rounded corners] ($(x)+(\teps,\teps)$) to ($(y)+(\teps,-\teps)$) to ($(w)+(0,-\teps)$) to ($(u)+(-\teps,\teps)$) to ($(x)+(\teps,\teps)$);
		\end{scope}

		\begin{scope}[yshift=-7*\ysh cm*\centerdist,xshift=0*\xsh cm * \centerdist]
			\labelk{a\textsubscript{4}}
			\coordinate (t1) at (0,0) {};
			\path (t1) to ++(090:2*\centerdist) node [xnode] (u) {};
			\path (t1) to ++(210:2*\centerdist) node [xnode] (v) {};
			\path (t1) to ++(330:2*\centerdist) node [xnode] (w) {};
			\path (w)  to ++(090:2*\centerdist) coordinate (t2) {};
			\path (t2) to ++(30:2*\centerdist) node [xnode] (x) {};
			\path (x)  to ++(270:2*\centerdist) coordinate (t3) {};
			\path (t3) to ++(330:2*\centerdist) node [xnode] (y) {};
			\path (y)  to ++(090:2*\centerdist) coordinate (t3) {};
			\path (t3) to ++(30:2*\centerdist) node [xnode] (z) {};
			\draw (u) edge (v);
			\draw (v) edge (w);
			\draw (w) edge (u);
			\draw (u) edge (x);
			\draw[ultra thick] (w) edge (x);
			\draw (w) edge (y);
			\draw (x) edge (y);
			\draw (x) edge (z);
			\draw (y) edge (z);

			\draw[-, densely dashed, blue, rounded corners, draw] ($(v)+(-\teps,-\teps)$)
		to ($(w)+(\teps,-\teps)$) to ($(x)+(\teps,\teps)$)  to ($(u)+(0,\teps)$) to  ($(v)+(-\teps,-\teps)$);
		\draw[-, thick, dotted, red, rounded corners] ($(x)+(-\teps,\teps)$) to ($(z)+(\teps,\teps)$) to ($(y)+(\teps,-\teps)$) to ($(w)+(-\teps,-\teps)$)to ($(x)+(-\teps,\teps)$);
		\end{scope}

		\begin{scope}[yshift=-7*\ysh cm*\centerdist,xshift=16*\xxsh cm*\centerdist]
			\labelkk{b\textsubscript{1}}
			\coordinate (t1) at (0,0) {};
			\path (t1) to ++(090:2*\centerdist) node [xnode] (u) {};
			\path (t1) to ++(210:2*\centerdist) node [xnode] (v) {};
			\path (t1) to ++(330:2*\centerdist) node [xnode] (w) {};
			\path (w)  to ++(090:2*\centerdist) coordinate (t2) {};
			\path (t2) to ++(30:2*\centerdist) node [] (x) {};
			\path (x)  to ++(270:2*\centerdist) coordinate (t3) {};
			\path (t3) to ++(330:2*\centerdist) node [] (y) {};
			\path (y)  to ++(090:2*\centerdist) coordinate (t3) {};
			\path (t3) to ++(30:2*\centerdist) node [xnode] (z) {};
			\path (z)  to ++(270:2*\centerdist) coordinate (t4) {};
			\path (t4) to ++(330:2*\centerdist) node [xnode] (a) {};
			\path (a)  to ++(090:2*\centerdist) coordinate (t5) {};
			\path (t5) to ++(30:2*\centerdist) node [xnode] (b) {};
			\draw (u) edge (v);
			\draw (v) edge (w);
			\draw (w) edge (u);

			\foreach \s/\t in {w/x,u/x,w/y,w/y,z/x,z/y,a/y}{
				\path (\s.center) to node[pos=0.30] (1\s) {} node[pos=0.70] (2\s) {} (\t.center);
				\draw (\s) edge (1\s.center);
				\draw (1\s.center) edge[dotted] (2\s.center);
			}

			\draw (b) edge (z);
			\draw[ultra thick] (a) edge (z);
			\draw (a) edge (b);

			\draw[-, densely dashed, blue, rounded corners, draw] ($(a)+(0,-\teps)$)
		to ($(b)+(\teps,\teps)$) to ($(z)+(-\teps,\teps)$) to ($(a)+(0,-\teps)$);
		\end{scope}

		\begin{scope}[yshift=-2*7*\ysh cm*\centerdist,xshift=0*\xxsh cm*\centerdist]
			\labelkk{b\textsubscript{2}}
			\coordinate (t1) at (0,0) {};
			\path (t1) to ++(090:2*\centerdist) node [xnode] (u) {};
			\path (t1) to ++(210:2*\centerdist) node [xnode] (v) {};
			\path (t1) to ++(330:2*\centerdist) node [xnode] (w) {};
			\path (w)  to ++(090:2*\centerdist) coordinate (t2) {};
			\path (t2) to ++(30:2*\centerdist) node [] (x) {};
			\path (x)  to ++(270:2*\centerdist) coordinate (t3) {};
			\path (t3) to ++(330:2*\centerdist) node [] (y) {};
			\path (y)  to ++(090:2*\centerdist) coordinate (t3) {};
			\path (t3) to ++(30:2*\centerdist) node [xnode] (z) {};
			\path (z)  to ++(270:2*\centerdist) coordinate (t4) {};
			\path (t4) to ++(330:2*\centerdist) node [xnode] (a) {};
			\path (a)  to ++(090:2*\centerdist) coordinate (t5) {};
			\path (t5) to ++(30:2*\centerdist) node [xnode] (b) {};
			\path (b)  to ++(270:2*\centerdist) coordinate (t6) {};
			\path (t6) to ++(330:2*\centerdist) node [xnode] (c) {};
			\draw (u) edge (v);
			\draw (v) edge (w);
			\draw (w) edge (u);

			\foreach \s/\t in {w/x,u/x,w/y,w/y,z/x,z/y,a/y}{
				\path (\s.center) to node[pos=0.30] (1\s) {} node[pos=0.70] (2\s) {} (\t.center);
				\draw (\s) edge (1\s.center);
				\draw (1\s.center) edge[dotted] (2\s.center);
			}

			\draw[ultra thick] (b) edge (z);
			\draw[ultra thick] (a) edge (z);
			\draw[ultra thick] (a) edge (b);
			\draw (a) edge (c);
			\draw (b) edge (c);

			\draw[-, densely dashed, blue, rounded corners, draw] ($(a)+(0,-\teps)$)
		to ($(c)+(\teps,-\teps)$) to ($(b)+(\teps,\teps)$) to ($(z)+(-\teps,\teps)$) to ($(a)+(0,-\teps)$);
		\end{scope}

		\begin{scope}[yshift=-2*7*\ysh cm*\centerdist,xshift=16*\xxsh cm*\centerdist]
			\labelkk{b\textsubscript{3}}
			\coordinate (t1) at (0,0) {};
			\path (t1) to ++(090:2*\centerdist) node [xnode] (u) {};
			\path (t1) to ++(210:2*\centerdist) node [xnode] (v) {};
			\path (t1) to ++(330:2*\centerdist) node [xnode] (w) {};
			\path (w)  to ++(090:2*\centerdist) coordinate (t2) {};
			\path (t2) to ++(30:2*\centerdist) node [] (x) {};
			\path (x)  to ++(270:2*\centerdist) coordinate (t3) {};
			\path (t3) to ++(330:2*\centerdist) node [] (y) {};
			\path (y)  to ++(090:2*\centerdist) coordinate (t3) {};
			\path (t3) to ++(30:2*\centerdist) node [xnode] (z) {};
			\path (z)  to ++(270:2*\centerdist) coordinate (t4) {};
			\path (t4) to ++(330:2*\centerdist) node [xnode] (a) {};
			\path (a)  to ++(090:2*\centerdist) coordinate (t5) {};
			\path (t5) to ++(30:2*\centerdist) node [xnode] (b) {};
			\path (b)  to ++(270:2*\centerdist) coordinate (t6) {};
			\path (t6) to ++(330:2*\centerdist) node [xnode] (c) {};
			\draw (u) edge (v);
			\draw (v) edge (w);
			\draw (w) edge (u);

			\foreach \s/\t in {w/x,u/x,w/y,w/y,z/x,z/y,a/y}{
				\path (\s.center) to node[pos=0.30] (1\s) {} node[pos=0.70] (2\s) {} (\t.center);
				\draw (\s) edge (1\s.center);
				\draw (1\s.center) edge[dotted] (2\s.center);
			}

			\draw (b) edge (z);
			\draw[ultra thick] (a) edge (z);
			\draw (a) edge (b);
			\draw (a) edge (c);
			\draw (b) edge (c);

		\draw[-, densely dashed, blue, rounded corners, draw] ($(a)+(0,-\teps)$)
		to ($(c)+(\teps,-\teps)$) to ($(b)+(\teps,\teps)$) to ($(z)+(-\teps,\teps)$) to ($(a)+(0,-\teps)$);
		\end{scope}

	\end{tikzpicture}
	\caption{
		Illustration to \cref{lem:two-makes-a-path} with base cases (a\textsubscript{1})--(a\textsubscript{4}), with two habitats (dashed/dotted), and induction step cases (b\textsubscript{1})--(b\textsubscript{3}) with habitat~$H$ (dashed). Thick edges are shared.
	}
	\label{fig:two-makes-a-path}
\end{figure}

\begin{proof}
	We prove the statement by induction on the number $|\CCC|$ of habitats in $\CCC$.
	Our base case is $|\CCC| = 2$.
	Then, $\partial\HHH[\CCC]$ is a path.
	We now show that each habitat in $\CCC$ contains exactly one degree-two vertex which is endpoint to a proper docking edge.
	If one habitat in $\CCC$ induces a $K_4$, then $\CCC$ does not contain any proper docking edge by \cref{lem:potential-deg3}.
	So the two habitats can only induce a $K_3$ or a $\kfoure$.
	If both habitats induce a $K_3$, then they share exactly one edge (or the two habitats are equal).
	Thus, $\CCC$ induces a $\kfoure$ in $G$, which fulfills the asked properties.
	If one habitat induces a $K_3$ and the other induces a $\kfoure$, then again the two share exactly one edge.
	If the shared edge is not a proper docking edge of the $\kfoure$ habitat, then $\CCC$ has no proper docking edges.
	So the shared edge must be any other edge of the $\kfoure$ habitat.
	The resulting habitat set $\CCC$ contains exactly one proper docking edge per habitat, and the fifth vertex has degree four.
	If both habitats induce a $\kfoure$, then $5 \le |V(\CCC)| \le 6$.
	If $|V(\CCC)| = 5$, then the habitats intersect in a triangle, and $\CCC$ as the same structure as in the previous case.
	If $|V(\CCC)| = 6$, then the two habitats again share just the edge, and it must be a proper docking edge in both habitats, or \cref{lem:potential-deg3} applies. %
	Thus, $\CCC$ contains one proper docking edge per habitat. Its remaining two vertices have degree four.

	For the induction step, suppose that the statement is true for any set of at most $\ell$ habitats.
	Let $\CCC' \subseteq \HHH$ be a set of $\ell+1$ such that $\partial\HHH[\CCC']$ is connected and $\CCC'$ contains a habitat $H$ with a proper docking edge.
	Define $\CCC \coloneqq \CCC' \setminus \{H\}$.

	If $V(\CCC) = V(\CCC')$, then $\CCC$ is connected in $\partial\HHH$ and contains at least one proper docking edge; thus $\CCC'$ satisfies the induction hypothesis, and we are done.
	So suppose that $V(\CCC) \subset V(\CCC')$.
	As $\CCC'$ is connected, there is at least one habitat $H' \in \CCC$ that is neighboring $H$.
	As $3 \le |H| \le 4$ and as $H$ and $H'$ share at least one edge,
	we have $2 \le |H \cap H'| \le 3$, and that $1 \le |H \setminus V(\CCC)| \le 2$.
	Moreover, as $H$ contains a proper docking edge,
	\cref{lem:potential-deg3} implies that there is a vertex $u \in H' \cap H$ with $\deg_{\CCC}(u) = 2$ that is docking towards $H$.
	Moreover, there is a second docking vertex $v \in H' \cap H$ that is adjacent to $u$.
	Observe that $\deg_{\CCC}(v) = 3$ (if its degree were $2$, then $H'$ induces a $C_4$ or $|V(\CCC)| = 3$);
	thus $\{u,v\}$ is a proper docking edge in $\CCC$.
	By the induction hypothesis, this implies that $H'$ corresponds to an endpoint of the path induced by $\CCC$ in $\partial\HHH$.
	So, to show that $\CCC'$ induces a path in $\partial\HHH$ as well, we must show that there is no second habitat neighboring $H$.
	Assume towards a contradiction that there is a habitat  $H'' \in \CCC$ neighboring $H$.
	We do a case distinction on
	$|V(\CCC) \cap H|$.

	Suppose first that $V(\CCC) \cap H = \{u,v\}$.
	Then $H \cap H' = H \cap H'' = \{u,v\}$, and as $H \cap H''$ must not be a proper subset of $H' \cap H''$, we have $H' \cap H'' = \{u,v\}$ as well.
	But then, as both $H'$ and $H''$ are $2$-connected and contain at least one more vertex, $u$ and $v$ are together incident to four edges within $\CCC$ whose second endpoint is not $u$ or $v$.
	Thus, $\deg_{\CCC}(u) + \deg_{\CCC}(v) \ge 6$; a contradiction to $\{u,v\}$ being a proper docking edge in $\CCC$.

	Suppose next that $V(\CCC) \cap H = \{u,v,w\}$.
	Then $|V(\CCC)| \ge 4$ as otherwise $V(\CCC) \subseteq H$, and also $H$ contains a fourth vertex $x \notin V(\CCC)$.
	Moreover, $u$, $v$ and $w$ induce a triangle (otherwise, either $H$ induces a $C_4$ or $x$ is adjacent to all three vertices and $H$ does not contain a proper docking edge; note that this also implies that $w$ is not docking towards $H$).
	If $H \cap H' = \{u,v\}$,
	then $w \in H'' \setminus H'$.
	But then, as in the previous case, we have $\deg_{\CCC}(u) + \deg_{\CCC}(v) \ge 6$ and obtain a contradiction.
	Symmetry implies that $H \cap H'' \ne \{u,v\}$.
	So suppose that $H \cap H' = H \cap H'' = \{u,v,w\}$.
	Then there are vertices $y \in H' \setminus H$
	and $z \in H'' \setminus H$, and $y \ne z$.
	Moreover, $y$ and $z$ must be adjacent to at least two of $u$, $v$ and $w$.
	This again implies that $\deg_{\CCC}(u) + \deg_{\CCC}(v) \ge 6$, and we again obtain our contradiction.

	This proves that $\CCC'$ induces a path in $\partial\HHH$ as well.
	Moreover, the only vertices $z \in V(\CCC')$ with $\deg_{\CCC}(z) < \deg_{\CCC'}(z)$ are the endpoints of the proper docking edge $\{u,v\}$.
	By the induction hypothesis, $\CCC$ contains another proper docking edge $e$ whose endpoint degrees are unchanged in $\CCC'$,
	the habitat set $\CCC'$ also contains two proper docking edges.
	This finishes the proof.
\end{proof}

From this, we can deduce the following.

\begin{lemma}%
	\label{lem:path-cycle-constant}
	Each component in $\partial\HHH$ induces a path or a cycle, or has constant size.
\end{lemma}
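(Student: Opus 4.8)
The plan is to fix a connected component $\CCC$ of $\partial\HHH$ and argue by a case distinction on whether $\CCC$ contains a proper docking edge. If $|\CCC| \le 1$ the component is a single vertex and hence trivially a path, so I assume $|\CCC| \ge 2$; then $\CCC$ is connected with at least two habitats and, by \cref{obs:potential-0}, contains at least one docking edge.

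In the first case, $\CCC$ contains a proper docking edge. Here I would invoke \cref{lem:two-makes-a-path} directly with the whole component as the set $\CCC$: since $\CCC$ is connected, has at least two habitats, and contains a proper docking edge, the lemma yields that $\partial\HHH[\CCC]$ is a path, and this case needs no further work.

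In the second case, $\CCC$ contains no proper docking edge, and I want to conclude that $\partial\HHH[\CCC]$ is a cycle or that $\CCC$ has constant size. The main tool is the number of docking edges of a habitat set, used as a potential $\beta$. I would build $\CCC$ up one habitat at a time along a connected ordering $\CCC_2 \subseteq \CCC_3 \subseteq \dots \subseteq \CCC_{|\CCC|} = \CCC$, where $\CCC_{i+1} = \CCC_i \cup \{H_{i+1}\}$ with $H_{i+1}$ adjacent to $\CCC_i$. By \cref{lem:potential-deg3}, as long as every vertex of $V(\CCC_i)$ docking towards $H_{i+1}$ has degree at least three in $\CCC_i$, the step strictly decreases $\beta$ and creates no docking edge incident to the new habitat. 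As two habitats of size at most four span only constantly many edges, $\beta(\CCC_2)$ is bounded by a constant, and $\beta(\CCC_i) \ge 1$ holds throughout by \cref{obs:potential-0}; hence, if \emph{every} step satisfies the hypothesis of \cref{lem:potential-deg3}, the number of steps, and thus $|\CCC|$, is constant. Consequently, a non-constant component must contain a step at which the hypothesis fails, i.e., where the new habitat attaches at a vertex of current degree exactly two that is docking towards it.

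The hard part will be to show that, in the absence of proper docking edges, such a failing step can arise only when the construction closes a cycle, so that the resulting component is precisely a cycle. At a degree-two docking vertex $v$ towards the attached habitat $H$, its two neighbours $a,b$ inside the current set lie in a common habitat; the absence of a proper docking edge keeps the degrees of $a$ and $b$ away from the value three, and a local argument using the $2$-connectivity of habitats and the inapplicability of \cref{rr:nontriangle} (exactly in the style of \cref{lem:potential-deg3,lem:two-makes-a-path}) should rule out the degenerate configurations and pin down that $v$ sits on habitats that are endpoints of the partially built structure. The delicate point is that docking-ness is defined through the current degree $\deg_{\CCC_i}$ and so depends on the build order; to keep the counting clean I would prefer to phrase the argument order-independently on the full component, bounding the $\partial\HHH[\CCC]$-degree of every habitat by two and using a single failing step to certify that the component is $2$-regular and connected, hence a cycle, while all genuinely small components are absorbed into the ``constant size'' alternative.
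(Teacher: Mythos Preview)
Your Case~1 is fine and matches how the paper leverages \cref{lem:two-makes-a-path}. The divergence is in Case~2, and there your argument has a real gap.

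The paper does \emph{not} try to show that a component without proper docking edges is a cycle or constant. Instead it proves the contrapositive-style statement: if some habitat $H$ has at least three neighbours in $\partial\HHH$, then its component has constant size. This makes the path/cycle alternative automatic (maximum degree at most two in $\partial\HHH$), and it makes the potential argument clean: starting from $\CCC=\{H\}\cup N_{\partial\HHH}(H)$, every intermediate set $\CCC_i$ one builds along the component still contains $H$ with $\deg_{\partial\HHH[\CCC_i]}(H)\ge 3$, so $\partial\HHH[\CCC_i]$ is never a path, and by the contraposition of \cref{lem:two-makes-a-path} no $\CCC_i$ has a proper docking edge. That is exactly what is needed to invoke \cref{lem:potential-deg3} at every step and drive the docking-edge count strictly down from a constant to zero, where \cref{obs:potential-0} stops the growth.

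Your build-up from an arbitrary adjacent pair $\CCC_2$ loses this invariant: the intermediate $\CCC_i$ can very well be paths in $\partial\HHH$ and hence \emph{can} contain proper docking edges even though the full component $\CCC$ does not. So you cannot assume the hypothesis of \cref{lem:potential-deg3} holds throughout, and your plan to interpret every failing step as ``closing a cycle'' is only a hope, not an argument---you yourself flag it as ``the hard part'' and leave it at a sketch. In particular, a failing step (a degree-two docking vertex $v$ in $\CCC_i$) need not involve habitats at the ends of any path-like structure, because $\CCC_i$ need not be a path; and even if it were, several failing steps could occur, so the ``single failing step certifies $2$-regularity'' idea does not go through as stated. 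The paper's choice of starting set sidesteps all of this: begin at a high-degree habitat, and the contraposition of \cref{lem:two-makes-a-path} does the structural work for you at every stage.
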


\begin{proof}
	We show that if there is a habitat $H$ with at least three neighbors in $\partial\HHH$, then it is part of a constant-size component~$\CCC'$.
	Let $\CCC$ comprise of $H$ and its neighbors.
	By the contraposition of \cref{lem:two-makes-a-path}, $\CCC$ does not contain any proper docking edge.
	So any habitat added to $\CCC$ strictly decreases the number of docking edges of the subgraph by \cref{lem:potential-deg3}, and by \cref{obs:potential-0}, $\CCC$ cannot grow without docking edges.
	As $|E(\CCC)|$ initially is a constant, its number of docking edges is also constant; thus $|\CCC'|$ is also constant.
\end{proof}

\subsection{The algorithm}

We give an algorithm for \GwgbpAcr{} where $\partial\HHH$ is a path.

\newcommand{\mxfh}{\ensuremath{\lambda}}

\begin{lemma}
	\label{thm:dp-path}
	Let $\I$ be an instance of \GwgbpAcr{} where $\partial\HHH$ is a path and \cref{rr:hab-subsets} is exhaustively applied.
	If $\partial\HHH$ is given, then one can compute in $\O(r\cdot \mxfh^4)$ time an optimum cost solution for~$\I$.
	Here, $\mxfh \coloneqq \max_{H \in \HHH} |\FFF_H|$.
\end{lemma}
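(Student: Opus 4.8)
The plan is to solve $\I$ by a left-to-right dynamic program along the path, after relabelling the habitats as $H_1, H_2, \dots, H_r$ in the order in which they appear on $\partial\HHH$. First I would reduce the task to pure cost minimisation. For a candidate edge set $F$, a habitat $H_i$ is satisfied as soon as $F$ contains \emph{some} member of $\FFF_{H_i}$, and every edge of the component lies in some $E(G[H_i])$ (edges in no habitat having been removed by \cref{rr:irredge}); hence a minimum-cost solution has the form $F = \setFc \cup \bigcup_i F_{H_i}$ with $F_{H_i} \in \FFF_{H_i}$. Since adding edges can never increase a diameter, the per-habitat choices are \emph{feasibility-independent}: any combination of one member per $\FFF_{H_i}$ already yields a valid solution (subset habitats removed by \cref{rr:hab-subsets} stay satisfied, as the retained feasible sets were chosen to contain one of theirs). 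Thus the only coupling between the choices is the shared cost of edges lying in several of the $E(G[H_i])$, and the objective becomes to minimise $c\bigl(\setFc \cup \bigcup_i F_{H_i}\bigr)$.

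The key step is to show that this cost coupling is \emph{local}. I would fix the forced edges once (they contribute a constant that I add up front) and argue, from \cref{def:intersection-graph} together with the assumption that $\partial\HHH$ is a path, that two habitats can share an \emph{unforced} edge only if they are path-neighbours or are linked through a witness habitat as in condition~(ii) of \cref{def:intersection-graph}; with the bounded habitat size from \cref{obs:r-is-bounded}, this confines every unforced edge to the induced subgraphs of only a constant number of \emph{consecutive} habitats. I expect this to be the main obstacle: determining exactly how far along the path a single unforced edge can reach, so that a bounded look-back in the dynamic program is provably sufficient.

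Given this window bound, I would let the dynamic-program state at position $i$ record the chosen set $F_{H_i}$ together with the constantly many earlier choices needed to recognise every habitat that could still co-contain an edge with $H_{i+1}$. Each edge is charged exactly once, at the \emph{leftmost} habitat of its window whose chosen set contains it; accordingly, the transition from position $i$ to $i+1$ adds the cost of those edges of $F_{H_{i+1}}$ that appear in no relevant stored choice. After initialising at $H_1$, the optimum cost is read off the states at $H_r$, and an optimal edge set is recovered via back-pointers, yielding a solution for $\I$.

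For the running time, there are $r$ positions, each state is a constant-length tuple of choices from sets of size at most $\mxfh$, and each transition inspects a constant number of edges (habitats have constant size) against a constant number of stored choices; the per-position work is therefore a fixed power of $\mxfh$, bounded by $\O(\mxfh^4)$, for $\O(r \cdot \mxfh^4)$ in total. Correctness follows because the leftmost-charging rule evaluates $c\bigl(\setFc \cup \bigcup_i F_{H_i}\bigr)$ exactly, the look-back window is wide enough to catch every earlier occurrence of a shared edge, and feasibility holds termwise by the feasibility-independence established at the outset.
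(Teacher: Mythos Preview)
Your overall plan—dynamic programming along the path with a bounded look-back window and a leftmost-charging rule—is exactly the paper's approach. The gap is that you never pin down the window size, and your route to locality does not quite get there.

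You invoke \cref{obs:r-is-bounded} for ``bounded habitat size,'' but that observation bounds the number of \emph{habitats containing a given edge} (at most $21$), not the habitat size itself. Even granting that every unforced edge lies in constantly many habitats, nothing in your argument shows those habitats are \emph{consecutive} on the path. The witness clause~(ii) of \cref{def:intersection-graph} tells you that when two non-adjacent habitats share an unforced edge there is a third habitat with a strictly larger edge-intersection with one of them, but chasing such witnesses recursively does not by itself bound the path distance between the original two. Without a concrete window width $w$ you cannot claim $\O(\mxfh^4)$ per position; a window of width $w$ yields $\O(\mxfh^{w+1})$, and you have not argued $w=3$. You even flag this yourself as ``the main obstacle,'' so the proposal is really an outline rather than a proof.

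The paper closes this gap with a short direct argument that avoids the intersection-graph definition entirely: after \cref{rr:hab-subsets} no habitat is contained in another, so for every $j$ both $H_{j}\setminus H_{j+1}$ and $H_{j+1}\setminus H_{j}$ are nonempty; combined with the ambient assumption that every habitat has at most four vertices, this forces $H_{i-1}$ and $H_{i+2}$ to have no common edge. Hence it suffices to keep the three choices $(F_{H_i},F_{H_{i+1}},F_{H_{i+2}})$ in the state and minimise over $F_{H_{i-1}}$ in the transition, giving precisely $\O(r\cdot\mxfh^4)$.
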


\begin{proof}
	Let $(H_1, H_2, \dots, H_r)$ be the path in $\partial\HHH$.
	Assume without loss of generality that for each $H \in \HHH$ and each $F_H \in \FFF_H$, we have $\setFc \cap E(G[H]) \subseteq F_H$, i.e., forced edges are parts of all subsolutions.

	Let $D[i, F_1, F_2, F_3]$ denote the costs when using subsolutions $F_1 \in \FFF_{H_i}, F_2 \in \FFF_{H_{i+1}}, F_3 \in \FFF_{H_{i+2}}$ for habitats $H_i, H_{i+1}, H_{i+2}$, respectively, under the assumption that the cost for the subsolutions for all habitats $H_1, \dots, H_{i+1}$ is minimal given $F_1$ and $F_2$.
	Given the values of all entries $D[j, \cdot, \cdot, \cdot]$ for $j < i$, we can compute the entry as follows.
\[
D[i,F_1,F_2,F_3] = \begin{cases}
	c(F_1\cup F_2\cup F_3), \text{if $i=1$,}& \\
	\min\limits_{F_0\in \FFF_{H_{i-1}}} D[i-1,F_0,F_1,F_2]
	& \\
	\qquad
	+ c(F_3\setminus (F_1\cup F_2)), \text{otherwise.}&
\end{cases}
\]
The optimum cost for the instance then is
\[
	\min_{F_1\in \FFF_{H_{r-2}},F_2\in \FFF_{H_{r-1}},F_3\in \FFF_{H_r}} D[r-2, F_1, F_2, F_3].
\]

The deciding observation to see that this dynamic programming approach is correct is that for $i \in [r-2]$, the habitats $H_{i-1}$ and $H_{i+2}$ do not have a common edge.
This is true since $H_{j+1} \setminus H_{j}$ and $H_{j} \setminus H_{j+1}$ each contain at least one vertex for each $j \in [r-1]$ and each habitat contains at most four vertices.
Thus, $F_{i-1} \cap F_{i+2}$ for any two $F_{i-1} \in \FFF_{H_{i-1}}$ and $F_{i+2} \in \FFF_{H_{i+2}}$.

The running time to compute an entry of $D$ is $\O(\mxfh)$ time.
As there are at most $\mxfh^3$ entries for each $i \in [r-2]$,
the overall running time is $\O(r \cdot \mxfh^4)$.
Using standard dynamic programming backtracking techniques, one can compute a solution with optimal cost.
\end{proof}

With this at hand, we can prove the section's theorem.
\begin{proof}[Proof of \cref{cor:deg4}]
	We first apply \allrrs{} in $\O(n + r)$ time.
	Then, we reduce our instance of \TwoDiamC{} to an equivalent instance of \GwgbpAcr{} (\cref{obs:diam-to-gen}).
	We then compute $\partial\HHH$ in~$\O(n + r)$ time (\cref{lem:habintgr-rt}).
	By \cref{lem:partition-components}, it suffices to compute a solution for each connected component $\CCC$ of $\partial\HHH$.
	By \cref{lem:path-cycle-constant}, each component induces a path or a cycle, or has constant size.
	If $\CCC$ has constant size, then we
	brute-force over all solutions in constant time.
	If $\CCC$ is a path, then we use \cref{thm:dp-path} to compute a solution; by \cref{thm:dp-path}, this takes $\O(|\CCC|\cdot \mxfh^4)$ time,
	where $\mxfh \coloneqq \max_{H \in \HHH} |\FFF_H|$.
	If $\CCC$ is a cycle, then we take any habitat $H \in \CCC$, guess a subsolution $F_H \in \FFF_H$ (note that there are constantly many solutions to pick from), mark all edges $e \in F_H$ as forced (i.e., add them to $\setFc$), and solve the remaining path component $\CCC \setminus \{H\}$ using \cref{thm:dp-path}.
	This takes $\O(\mxfh \cdot |\CCC| \cdot \mxfh^4)$ time.

	By \cref{lem:partition-components}, the union of the computed solutions is optimal for our instance of \GwgbpAcr{}, and thus also for our input instance.
	As $\mxfh$ is constant, our algorithm's running time is in $\O(n + r)$.
\end{proof}

\section{Maximum Degree Five}
\label{sec:maxdeg5}

We know that \TwoDiamC{} is polynomial-time solvable when
\begin{enumerate}[(a)]
\item the graph has maximum degree four and maximum habitat size three (\cref{cor:deg4}),
\item the graph is planar and the maximum habitat size is three (\cref{prop:planarH3}), or
\item the maximum habitat size is two (as every edge contained in a habitat is forced).
\end{enumerate}
We will prove these findings to be tight by proving \NP-hardness for \TwoDiam{},
whenever
in (a) the maximum degree or
in (b) and (c) the habitat size
is increased by one.
We prove \eqref{thm:nph:deg5:hab3} and \eqref{thm:nph:deg5:planarhab4}

\begin{theorem}
	\label{thm:nph:deg5}
	\TwoDiam{} is \nphard{} even if
	\begin{inparaenum}[(i)]
	\item the graph has maximum degree five and the maximum habitat size is three, or \label{thm:nph:deg5:hab3}
	\item the graph is planar, has maximum degree five, and the maximum habitat size is four.\label{thm:nph:deg5:planarhab4}
	\end{inparaenum}
\end{theorem}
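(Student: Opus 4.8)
The plan is to prove both items by polynomial-time many-one reductions from \prob{Cubic Vertex Cover} for item~\eqref{thm:nph:deg5:hab3} and from \PCVC{} for item~\eqref{thm:nph:deg5:planarhab4}; both are NP-hard, and starting from a cubic instance is what will let me control the maximum degree. The driving observation is the one already used throughout \cref{sec:prelim}: a size-three habitat inducing a triangle has diameter at most two in $G[F]$ if and only if $F$ contains at least two of its three edges, i.e.\ at most one edge of the triangle may be omitted. Hence minimizing $|F|$ is exactly a matter of choosing, triangle by triangle, which single edge (if any) to waive, and the whole difficulty is that triangles share edges. I would engineer the shared structure so that a minimum solution has size $C + |\mathrm{VC}|$ for a fixed constant $C$ depending only on the construction and a minimum vertex cover $\mathrm{VC}$ of the input cubic graph $G'=(V',E')$; the budget $k$ then encodes the cover budget $\ell$.

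The gadgets I would build are a \emph{selector} per vertex and an \emph{edge gadget} per edge of $G'$. The selector attached to $v\in V'$ is a private triangle on $v$ and two new vertices, using two edges incident to $v$; its state encodes whether $v$ is placed in the cover, and switching it to the ``in'' state costs exactly one extra selected edge. The edge gadget for $e=\{u,v\}\in E'$ is a triangle habitat that can be satisfied (brought to diameter two) cheaply precisely when at least one of the selectors of $u,v$ is ``in'', and otherwise forces one additional selected edge; the coupling is realised by letting a selector edge double as an edge of the incident edge-gadget triangle, so that paying for a covered vertex simultaneously discharges the diameter-two requirement of its incident gadgets. This is exactly where the degree bound becomes tight: a vertex $v$ carries three edges to its three incident edge gadgets (cubicity) plus two edges into its selector triangle, for degree exactly five. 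This also explains why \cref{cor:deg4} and \cref{prop:planarH3} stop working at this point: at degree five there is finally enough room at each vertex to encode a binary choice together with its three external couplings.

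For correctness I would argue both directions through the cost accounting. Given a vertex cover $C$ of size at most $\ell$, set the selectors of $C$ to ``in'' and, for every edge gadget, keep the two triangle edges that are cheap given a covered endpoint; a short check shows every habitat then has diameter two and $|F| \le C + \ell = k$. Conversely, given any $F$ with $|F|\le k$, I would read off the set $C$ of vertices whose selectors are ``in'' and use the per-triangle lower bound ``at least two edges'' to show that every uncovered original edge would force a strictly larger $F$; hence $C$ must be a vertex cover, and the same accounting bounds $|C|$ by $\ell$. The penalty asymmetry (a covered vertex costs one, an uncovered edge costs strictly more) is what makes this exact.

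For item~\eqref{thm:nph:deg5:planarhab4} I would fix a planar embedding of $G'$ from \PCVC{} and lay all gadgets out locally, respecting that embedding, so that no edges cross. The main obstacle is precisely this planar routing: around a degree-five vertex that must reach three external edge gadgets while also hosting its internal selector, a purely triangle-based gadget cannot always be drawn without crossings. I expect to resolve this by enlarging the relevant habitats from size three to size four—most naturally to $\kfoure$ habitats, whose diameter-two condition still allows the same ``cheap iff covered'' behaviour but gives the extra vertex needed to separate the three external attachments in the plane. This is exactly the habitat-size increase asserted in the theorem, and verifying that the size-four habitats retain the clean cost gap while keeping every vertex at degree at most five is the delicate, construction-specific part of the proof; the non-planar item~\eqref{thm:nph:deg5:hab3} avoids it entirely and keeps all habitats of size three.
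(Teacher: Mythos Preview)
Your high-level plan matches the paper's: reduce from (planar) cubic vertex cover, build a per-vertex selector and a per-edge gadget whose costs couple so that the budget encodes the cover size, and observe that cubicity plus a binary selector is what forces degree five. That part is fine.

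The gap is in the concrete coupling mechanism. You write that the selector is ``a private triangle on $v$ and two new vertices'' and that the coupling works by ``letting a selector edge double as an edge of the incident edge-gadget triangle''; but in the same breath you account for $v$'s degree as three edges to edge gadgets \emph{plus} two edges into the selector. Those cannot both hold: if the selector edges at $v$ are the same as the edge-gadget edges at $v$, the degree count collapses; if they are distinct, the selector state has no way to reach the edge gadgets. A single triangle at $v$ is not enough to encode a binary state \emph{and} broadcast it to three independent neighbours under a degree-five cap. The paper resolves this with genuinely larger gadgets: a five-vertex vertex gadget with three designated ``docking pairs'' and two competing subsolutions $F_u^\top$ (three edges, ``in cover'') and $F_u^\bot$ (two edges, ``not in cover''), plus a five-vertex edge gadget with two docking pairs; the gadgets are glued by identifying docking pairs, so that choosing $F_u^\top$ literally contributes one edge to each incident edge gadget. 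Several size-two habitats force a fixed backbone $F^*$ that makes the accounting exact. You would need something at this level of detail.

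For item~\eqref{thm:nph:deg5:planarhab4} your intuition that size-four habitats buy the planar layout is right, but the paper's realisation differs from what you sketch. The size-four habitats live in the \emph{vertex} gadget (which grows to six vertices and uses two $\kfoure$-type habitats to implement $F_u^\top$ versus $F_u^\bot$), while the edge gadgets stay triangle-based but now come in two flavours, a ``default'' and an ``anti-crossing'' variant, chosen according to the rotation system of the planar embedding so that the docking pairs can always be aligned without crossings. So the extra habitat size is not spent on ``separating the three external attachments in the plane'' at $v$ directly, but on making the selector itself planar while still exposing three docking pairs; the crossing issue is then handled combinatorially by picking the right edge-gadget variant per edge.
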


We prove \eqref{thm:nph:deg5:hab3} and \eqref{thm:nph:deg5:planarhab4}
via polynomial-time reductions from the \nphard{} \cite{garey1976simplified} \PCVC{} problem, where,
given a planar, connected graph $G = (V, E)$ in which every vertex has degree three and an integer $k$,
the task is to decide whether $G$ has a vertex cover of size at most $k$.

In the following two constructions (\cref{constr:nph-deg-5,constr:nph-planar}), we use the following \emph{docking} operation on graphs.

\begin{definition}
	\label{def:nph-docking}
	Let $G = (V, E)$ and $G' = (V, E)$ be two graphs and let $d = (u,v)$, $\{u, v\} \in E$ and $d' = (u', v')$, $\{u', v'\} \in E'$.
	The procedure of \emph{docking} $G$ and $G'$ at $d$ and $d'$ is the same as identifying $u$ with $u'$ and $v$ with $v'$, i.e., it results in the graph  $G^* = (V^*, E^*)$ with
	$V^* \coloneqq V \cup V' \setminus \{u', v'\}$ and
	$E^* \coloneqq E \cup \{ \{u, y'\} \mid y' \in N_{G'}(u') \}
		\cup \{ \{v, y'\} \mid y' \in N_{G'}(v') \}
		\cup E(G' - \{u', v'\})$.
\end{definition}

Observe the following about the degrees of vertices after being glued together.

\begin{observation}
	\label{obs:nph-docking-deg}
	For two graphs $G = (V, E)$ and $G' = (V, E)$,
	let $G^*$ be the graph obtained from docking $G$ and $G'$ at $(u, v)$ and $(u', v')$.
	Then $\deg_{G^*}(u) = \deg_{G}(u) + \deg_{G'}(u') - 1$
	and $\deg_{G^*}(v) = \deg_{G}(v) + \deg_{G'}(v') - 1$.
\end{observation}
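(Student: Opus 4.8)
The plan is to prove the claim for $u$ (which, after the identification, is the same vertex as $u'$) by directly counting the edges of $G^*$ incident to it, reading off $E^*$ from \cref{def:nph-docking} term by term; the statement for $v$ then follows by the symmetric argument with the roles of $(u,u')$ and $(v,v')$ interchanged. Throughout I would treat the vertex sets of $G$ and $G'$ as disjoint (the docking is applied to fresh copies in the constructions), so that the only vertices identified in $G^*$ are those created by the operation itself.

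Concretely, I would inspect the four parts of $E^*$ and record which edges they place at $u$. First, the copy of $E$ contributes exactly the $\deg_G(u)$ edges of $G$ incident to $u$, a set that already contains the docking edge $\{u,v\}$. Second, the set $\{\{u, y'\} \mid y' \in N_{G'}(u')\}$ contributes $\deg_{G'}(u')$ edges; since $\{u',v'\} \in E'$ we have $v' \in N_{G'}(u')$, and as $v'$ is identified with $v$, one of these edges is $\{u, v'\} = \{u, v\}$, coinciding with the docking edge already counted, while the remaining $\deg_{G'}(u')-1$ edges reach the vertices $N_{G'}(u') \setminus \{v'\} \subseteq V' \setminus \{u', v'\}$. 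Third, the set $\{\{v, y'\} \mid y' \in N_{G'}(v')\}$ consists of edges incident to $v$; the only one possibly touching $u$ is $\{v, u'\} = \{u, v\}$ (from $y' = u' \in N_{G'}(v')$), which is once more the docking edge and yields no new neighbor of $u$. Finally, $E(G' - \{u', v'\})$ contains no edge incident to $u = u'$.

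Putting this together, the neighborhood of $u$ in $G^*$ is $N_G(u) \cup (N_{G'}(u') \setminus \{v'\})$. Because $N_G(u) \subseteq V$ and $N_{G'}(u') \setminus \{v'\} \subseteq V' \setminus \{u', v'\}$, disjointness of the two vertex sets makes these two neighbor sets disjoint, so their cardinalities add and I obtain $\deg_{G^*}(u) = \deg_G(u) + (\deg_{G'}(u') - 1)$, as claimed; the case of $v$ is identical.

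I expect the only point requiring care — and the reason the $-1$ appears — to be the bookkeeping around the docking edge $\{u,v\}$, which is present simultaneously in $E$, in the second gluing set (as $\{u,v'\}$), and in the third (as $\{v,u'\}$); since $E^*$ is a \emph{set}, these collapse to a single edge, so it must be counted once rather than twice in the combined degree. Everything else is a routine neighbor count once the disjointness of $V$ and $V'$ is invoked.
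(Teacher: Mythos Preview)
Your argument is correct: you carefully walk through the four parts of $E^*$ from \cref{def:nph-docking}, track which edges land at $u$, and observe that the docking edge $\{u,v\}$ appears in three of the four parts but, as $E^*$ is a set, is counted only once---hence the $-1$. The disjointness assumption on the underlying vertex sets is exactly what is needed to make the two neighbor sets $N_G(u)$ and $N_{G'}(u')\setminus\{v'\}$ disjoint, and it matches how the docking operation is used in \cref{constr:nph-deg-5,constr:nph-planar}.

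The paper itself offers no proof for this observation; it is stated and then used directly in the degree-bound arguments of \cref{obs:nph-deg-5,obs:nph-planar}. Your write-up therefore supplies strictly more detail than the original, which treats the claim as immediate from the definition.
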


We first prove \cref{thm:nph:deg5}\eqref{thm:nph:deg5:hab3}.
For this, we employ the following construction.

	\begin{figure}[t]
		\centering
		\begin{tikzpicture}
		\def\xr{0.675}
		\def\yr{0.575}
		\def\xsh{3.525}
		\tikzpreamble{}

		\begin{scope}
			\node (v0) at (1*\xr,0*\yr)[xnode]{};
			\node (v1) at (0*\xr,1*\yr)[xnode]{};
			\node (v2) at (2*\xr,1*\yr)[xnode]{};
			\node (v3) at (2*\xr,-1*\yr)[xnode]{};
			\node (v4) at (0*\xr,-1*\yr)[xnode]{};

			\node at (v0) [label={[label distance=-2pt]270:{\small $b_u^5$}}] {};
			\node at (v1) [label={[label distance=-3pt]90:{$b_u^1$}}] {};
			\node at (v2) [label={[label distance=-3pt]090:{$b_u^2$}}] {};
			\node at (v3) [label={[label distance=-3pt]270:{$b_u^3$}}] {};
			\node at (v4) [label={[label distance=-3pt]270:{$b_u^4$}}] {};

			\tikzES{v0/v1,v0/v3}
			\tikzES[-,thick,dashed,decoration={markings,mark=at position 0.5 with {\arrow{>}}},postaction={decorate}]{v0/v4,v2/v1,v2/v3}
			\tikzES[-,ultra thick, color=red]{v1/v4,v4/v3,v0/v2}
			\tikzES[-,line width=7pt,line cap=round, line join=round, opacity=0.2,color=green!50!black]{v1/v0,v0/v3}
			\tikzES[-,line width=7pt,line cap=round, line join=round, opacity=0.2,color=blue!50!black]{v0/v4,v1/v2,v2/v3}
		\end{scope}

		\begin{scope}[xshift=1*\xsh*\xr cm]
			\node (e1) at (0*\xr,1*\yr)[xnode]{};
			\node (e2) at (2*\xr,1*\yr)[xnode]{};
			\node (e3) at (2*\xr,0*\yr)[xnode]{};
			\node (e4) at (0*\xr,-1*\yr)[xnode]{};
			\node (e5) at (2*\xr,-1*\yr)[xnode]{};

			\node at (e1) [label={[label distance=-3pt]090:{$a_e^1$}}] {};
			\node at (e2) [label={[label distance=-3pt]090:{$a_e^2$}}] {};
			\node at (e3) [label={[label distance=-2pt]000:{$a_e^3$}}] {};
			\node at (e4) [label={[label distance=-3pt]270:{$a_e^4$}}] {};
			\node at (e5) [label={[label distance=-3pt]270:{$a_e^5$}}] {};

			\tikzES{e1/e3,e3/e4}
			\tikzES[-,thick,dashed,decoration={markings,mark=at position 0.5 with {\arrow{>}}},postaction={decorate}]{e2/e1,e5/e4}
			\tikzES[-,ultra thick, color=red]{e1/e4,e2/e3,e3/e5}
			\tikzES[-,line width=7pt,line cap=round, line join=round, opacity=0.2,color=green!50!black]{e1/e2,e3/e4}
			\tikzES[-,line width=7pt,line cap=round, line join=round, opacity=0.2,color=blue!50!black]{e1/e3,e4/e5}
		\end{scope}

		\begin{scope}[xshift=2*\xsh*\xr cm]

			\node (v0) at (1*\xr,0*\yr)[xnode]{};
			\node (v1) at (0*\xr,1*\yr)[xnode]{};
			\node (v2) at (2*\xr,1*\yr)[xnode]{};
			\node (v3) at (2*\xr,-1*\yr)[xnode]{};
			\node (v4) at (0*\xr,-1*\yr)[xnode]{};

			\node at (v2) [label={[label distance=-3pt]090:{$b_u^2 = a_e^2$}}] {};
			\node at (v3) [label={[label distance=-3pt]270:{$b_u^3 = a_e^1$}}] {};

			\node (e1) at (4*\xr,1*\yr)[xnode]{};
			\node (e2) at (4*\xr,-1*\yr)[xnode]{};
			\node (e3) at (3*\xr,1*\yr)[xnode]{};

			\tikzES{v0/v1,v0/v3,v1/v4,v4/v3,v0/v2};
			\tikzES[-,thick,dashed,decoration={markings,mark=at position 0.5 with {\arrow{>}}},postaction={decorate}]{v0/v4,v2/v1,v2/v3,e1/e2};

				\tikzES{v2/e3,e3/e1,e3/e2,e3/v3,e2/v3};
		\end{scope}
		\end{tikzpicture}
		\caption{Illustrations for the construction for \cref{thm:nph:deg5}(i).
			\emph{Left:} A vertex gadget for a vertex $u$.
			Red edges are size-two habitats (and thus forced).
			Blue edges mark the set $F_u^\top$.
			Green edges mark the set $F_u^\bot$.
			Dashed edges mark the docking pairs.
			\emph{Center:} An edge gadget for an edge $e = \{u,v\}$.
			Green edges mark the set $F_e^u$.
			Blue edges mark the set $F_e^v$.
			\emph{Right:} The result of docking the two gadgets at $(b_u^2, b_u^3)$ and $(a_e^2, a_e^1)$.
		}
		\label{fig:NPh:53}
	\end{figure}
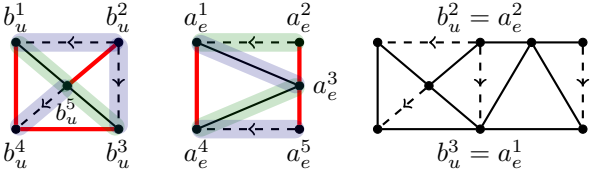

\subsection{Proof of \cref{thm:nph:deg5}\eqref{thm:nph:deg5:hab3}}

\begin{construction}
	\label{constr:nph-deg-5}
	Let $\III = (G = (V, E), k)$ be an instance of \PCVC{} with $V = \{1, \dots, n\}$ and $m \coloneqq |E|$.
	We construct an instance $\III' = (G', \calH, \setFc, k')$ of \TwoDiam{} with $k' \coloneqq 5n + 4m + k$
	by creating a vertex gadget for each $v \in V$ and an edge gadget for each $e \in E$.
	The gadgets will afterwards be connected using the docking operation from \cref{def:nph-docking}
	(see~\cref{fig:NPh:53} for illustration).
	Let $u \in V$. Then the \emph{vertex gadget for $u$} consists of
	the graph $G_u = (V_u, E_u)$ with vertex set $V_u \coloneqq \{b_u^1, \dots, b_u^5\}$,
	and edge set $E_u \coloneqq F^*_u \cup F_u^\top \cup F_u^\bot$,
	where $F^*_u \coloneqq \{ \{b_u^1,b_u^4\},\allowbreak \{b_u^2, b_u^5\},\allowbreak \{b_u^3, b_u^4\} \}$,
	$F^\top_u \coloneqq \{ \{b_u^1, b_u^2\},\allowbreak \{b_u^2, b_u^3\},\allowbreak \{b_u^4, b_u^5\} \}$,
	and $F^\bot_u \coloneqq \{ \{b_u^1, b_u^5\},\allowbreak \{b_u^3, b_u^5\} \}$,
	and the habitat set $\HHH_u \coloneqq F^*_u \cup \allowbreak \{ Q \mid Q \subseteq V_u \text{ induces a triangle} \}$.
	The vertex gadget has three \emph{docking pairs}, $d_u^1 \coloneqq (b_u^2, b_u^1)$, $d_u^2 \coloneqq (b_u^2, b_u^3)$, and $d_u^3 \coloneqq (b_u^5, b_u^4)$.
	See \cref{fig:NPh:53} (left) for an illustration.

	Let $e = \{u,v\} \in E$. Then the \emph{edge gadget for $e$} consists of
	the graph $G_e = (V_e, E_e)$ with vertex set $V_e \coloneqq \{a_e^1, \dots, a_e^5\}$,
	and edge set $E_e \coloneqq F_e^* \cup F_e^u \cup F_e^v$,
	where $F^*_e \coloneqq \{ \{a_e^1, a_e^4\},\allowbreak \{a_e^2, a_e^3\},\allowbreak \{a_e^3, a_e^5\} \}$,
	$F^u_e \coloneqq \{ \{a_e^1, a_e^2\},\allowbreak \{a_e^3, a_e^4\} \}$,
	and $F^v_e \coloneqq \{ \{a_e^1, a_e^3\},\allowbreak \{a_e^4, a_e^5\} \}$,
	and the habitat set $\HHH_e \coloneqq F^*_e \cup \allowbreak \{ Q \mid Q \subseteq V_e \text{ induces a triangle} \}$.
	The edge gadget has two \emph{docking pairs} $d_e^u \coloneqq (a_e^1, a_e^2)$ and $d_e^v \coloneqq (a_e^4, a_e^5)$.
	See \cref{fig:NPh:53} (center) for an illustration.

	We now construct the graph $G'$ by doing the following for each $u \in V$.
	For $i \in \{1, 2, 3\}$ let $e_i = \{u, v_i\}$ be the edges incident to $u$ such that $v_1 < v_2 < v_3$ (for some global linear order $<$, i.e., the numbering of $V$).
	Then, for $i \in \{1, 2, 3\}$,
	we dock the vertex gadget $G_u$ and and the edge gadget $G_{e_i}$ at $d_u^i$ and $d_{e_i}^u$ (after each docking, we refer to the obtained graph still as both $G_u$ and $G_{e_i}$.
	See \cref{fig:NPh:53} (right) for an illustration.

	Finally, we set $\HHH \coloneqq \bigcup_{u \in V} \HHH_u \cup \bigcup_{e \in E} \HHH_e$.
\end{construction}

\begin{observation}
	\label{obs:nph-deg-5}
	Let $\III$ be an instance of \PCVC{} and let $\III'$ be the instance obtained from running \cref{constr:nph-deg-5} on $\III$.
	Then the graph $G'$ in $\III'$ obtained from \cref{constr:nph-deg-5} has maximum degree five, each habitat $H \in \HHH$ has size at most three,
	and obtaining $\III'$ takes polynomial time.
\end{observation}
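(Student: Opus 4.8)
The plan is to verify the three assertions of \cref{obs:nph-deg-5} separately, disposing of the habitat-size and running-time claims quickly and spending the real effort on the degree bound.

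\emph{Habitat sizes.} By \cref{constr:nph-deg-5} every habitat lies in some $\HHH_u$ or $\HHH_e$, and each such habitat is either a two-element edge set from $F^*_u$ or $F^*_e$ (size two) or a vertex triple $Q$ inducing a triangle (size three). Docking only identifies vertices of \emph{distinct} gadgets, and never two vertices within the same gadget; hence it cannot change the cardinality of any of these fixed vertex sets. So I would argue in one line that every habitat keeps size at most three, with no case analysis needed.

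\emph{Maximum degree.} First I would tabulate degrees inside a single gadget: in a vertex gadget the vertices $b^1_u,\dots,b^5_u$ have degrees $3,3,3,3,4$, and in an edge gadget the vertices $a^1_e,\dots,a^5_e$ have degrees $3,2,4,3,2$. I would then propagate these through the dockings using \cref{obs:nph-docking-deg}, which replaces $\deg(x)$ by $\deg(x)+\deg(x')-1$ at a single docking of $(x,y)$ with $(x',y')$. Reading off the pairing from the construction, each docking of an incident edge gadget onto $G_u$ matches a \emph{hub} of $G_u$ (namely $b^2_u$ in $d^1_u,d^2_u$ and $b^5_u$ in $d^3_u$) with a degree-two vertex of the edge gadget, and a \emph{spoke} ($b^1_u,b^3_u,b^4_u$) with a degree-three vertex. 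Applying \cref{obs:nph-docking-deg} then yields $3+(3-1)=5$ for each spoke, $4+(2-1)=5$ for $b^5_u$, and the unchanged degree $4$ for the untouched internal vertex $a^3_e$.

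\emph{The main obstacle} is the hub $b^2_u$, which occurs in \emph{two} docking pairs $d^1_u$ and $d^2_u$ and is therefore identified with a degree-two vertex of each of two edge gadgets. Since \cref{obs:nph-docking-deg} is stated for a single docking, I must apply it iteratively and carefully track which docking edges are absorbed, to avoid double counting. The computation gives $3+(2-1)+(2-1)=5$, so even this tightest vertex reaches degree exactly five and never more; this is precisely what makes the bound tight and explains the ``five'' in the theorem. I would make this step rigorous by listing the merged neighborhood of $b^2_u$ explicitly, namely $b^1_u,b^3_u,b^5_u$ together with the two surviving degree-four internal vertices of the incident edge gadgets, to confirm that no further coincidences shrink or enlarge it.

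\emph{Running time.} Finally, \cref{constr:nph-deg-5} creates $n$ vertex gadgets and $m$ edge gadgets of constant size, performs $\sum_{u\in V}\deg_G(u)=2m$ dockings, each a constant-size local identification, and builds $\HHH$ as a union of constant-size habitat sets. All of this is clearly carried out in time polynomial (indeed linear) in $n+m$, which completes the plan.
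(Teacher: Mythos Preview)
Your proposal is correct and follows essentially the same approach as the paper's proof: both dispose of habitat size and running time in a line, then verify the degree bound by tabulating the gadget-internal degrees and propagating each docking via \cref{obs:nph-docking-deg}, singling out the hub vertex $b_u^2$ (which lies in two docking pairs) as the critical case reaching degree exactly five. Your version is slightly more explicit in spelling out the final neighborhood of $b_u^2$, but the structure and the computations are identical to the paper's.
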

\begin{proof}
	Habitat size and running time are straightforward.
	Let $u \in V$. %
	Observe that each docking pair is used for a docking operation exactly once.
	Consider the vertices in the docking pairs $(b_u^2, b_u^1), (b_u^2, b_u^3), (b_u^5, b_u^4)$.
	Observe that $\deg_{G_u}(b_u^1) = \deg_{G_u}(b_u^3) =\allowbreak \deg_{G_u}(b_u^4) = 3$, and the vertices are docked to either $a_e^1$ or $a_e^4$, where $e$ is one of the edges incident to $u$.
	As $\deg_{G_e}(a_e^1) = \deg_{G_e}(a_e^4) = 3$, we have that $\deg_{G'}(b_u^1) = \deg_{G'}(b_u^3) = \deg_{G'}(b_u^4) = 5$ in $G'$.
	The vertex $b_u^5$ has degree $\deg_{G_u}(b_u^5) = 4$, and is docked to either $a_e^2$ or $a_e^5$ for some $e$ incident to $u$.
	Thus, in $G'$, we have $\deg_{G'}(b_u^5) = 5$.
	Finally, $\deg_{G_u}(b_u^2) = 3$, but $b_u^2$ is part of two docking pairs.
	As it is docked to either $a_e^2$ or $a_e^5$ for some $e$ incident to $u$
	and to $a_{e'}^2$ or $a_{e'}^5$ for another ${e'}$ incident to $u$,
	we have $\deg_{G'}(b_u^2) = 5$.
	This bounds the degrees of all vertices that are part of a docking operation.
	The remaining vertices in $G'$ are $a_e^3$, $e \in E$, and have degree four.
\end{proof}

For a graph $G = (V, E)$ and a habitat $H \subseteq V$, we call an edge set $F \subseteq E(G)$ a \emph{subsolution} for $H$ if $\diam(G[F][H]) \le 2$.

\begin{observation}
	\label{obs:nph-deg-5-sol}
	Let $\III'$ be an instance from \cref{constr:nph-deg-5}.
	Then for each $u \in V$ and each $H \in \HHH_u$,
	$F_u^* \cup F_u^\top$ and $F_u^* \cup F_u^\bot$ are valid subsolutions for $H$,
	and for each $e = \{u,v\} \in E$ and each $H \in \HHH_e$,
	$F_e^* \cup F_e^u$ and $F_e^* \cup F_e^v$ are valid subsolutions for $H$,
\end{observation}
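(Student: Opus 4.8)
The plan is to verify the diameter condition separately for the two types of habitats occurring in each gadget, after first reducing everything to a purely local check inside the gadget. The key preliminary observation is that the docking operation (\cref{def:nph-docking}) only ever identifies a docking-pair vertex of one gadget with a vertex of another gadget; it never merges two vertices of the same gadget, and the edges it introduces all run between an identified vertex and neighbors lying \emph{outside} the vertex set $V_u$ (resp.\ $V_e$). Hence docking adds no edge with both endpoints in $V_u$ (resp.\ $V_e$), so for any $F$ consisting only of gadget edges and any habitat $H \in \HHH_u$ we have $G'[F][H] = G_u[F][H]$, and analogously for $\HHH_e$. It therefore suffices to check the diameter condition within the gadget itself.

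For the size-two habitats, namely the pairs in $F^*_u$ (resp.\ $F^*_e$), the claim is immediate: each such habitat $H$ is literally an edge of $F^*_u$, which is contained in both $F^*_u \cup F^\top_u$ and $F^*_u \cup F^\bot_u$ (and correspondingly $F^*_e \subseteq F^*_e \cup F^u_e, F^*_e \cup F^v_e$). Thus the two vertices of $H$ are adjacent in $G'[F][H]$, giving diameter~$1$.

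The core of the argument is the triangle habitats, and here I would exploit that a three-vertex induced subgraph has diameter at most $2$ exactly when it is connected, i.e.\ as soon as at least two of its three edges lie in $F$. I would first enumerate the triangles of each gadget: in the vertex gadget these are $\{b_u^1,b_u^2,b_u^5\}$, $\{b_u^1,b_u^4,b_u^5\}$, $\{b_u^3,b_u^4,b_u^5\}$, and $\{b_u^2,b_u^3,b_u^5\}$, and in the edge gadget they are $\{a_e^1,a_e^2,a_e^3\}$, $\{a_e^1,a_e^3,a_e^4\}$, and $\{a_e^3,a_e^4,a_e^5\}$. The decisive combinatorial pattern to record is that for every one of these triangles, $F^*$ contains exactly one of its three edges, while each of $F^\top_u$ and $F^\bot_u$ (resp.\ $F^u_e$ and $F^v_e$) contains a second, distinct one. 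Consequently each of the four candidate edge sets contains at least two edges of every triangle habitat of its gadget, so the corresponding induced subgraph is a connected graph on three vertices and hence has diameter at most~$2$.

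The only non-mechanical part of the proof is assembling the triangle lists correctly and confirming the ``one edge from $F^*$, one edge from the optional set'' incidence pattern for each triangle; once this is in place, every case closes immediately, and there is no genuine obstacle beyond this finite bookkeeping.
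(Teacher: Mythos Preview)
Your proposal is correct and follows essentially the same approach as the paper: handle the size-two habitats via $F^*$, then argue that every triangle habitat receives at least two of its three edges from each candidate set. The paper simply asserts the triangle check as ``easy to verify,'' whereas you carry out the enumeration explicitly; your preliminary locality reduction is sound but not actually needed, since $G'[F][H]$ depends only on the edges in $F$ with both endpoints in $H$, and here $F\subseteq E_u$ (resp.\ $E_e$) and $H\subseteq V_u$ (resp.\ $V_e$) already.
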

\begin{proof}
	Let $u \in V$ and let $F_u \in \{ F_u^* \cup F_u^\top, F_u^* \cup F_u^\bot \}$.
	Every habitat in $\HHH_u$ of size two is covered by an edge in $F_u^*$.
	Every other habitat in $H \in \HHH_u$ induces a triangle in $G'$; thus $G'[F_u][H]$ must contain at least two edges to be a subsolution for $H$.
	It is easy to verify that this is true.
	The proof for each $\HHH_e$, $e \in E$ is analogous.
\end{proof}

\begin{lemma}
	\label{lem:nph-deg-5-forward}
	If $\III$ is a \yes-instance, then so is the instance $\III'$ obtained from \cref{constr:nph-deg-5}.
\end{lemma}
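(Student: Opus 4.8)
The plan is to turn a vertex cover $S$ of $G$ of size at most $k$ into a solution $F$ for $\III'$ of size at most $k' = 5n + 4m + k$. The guiding correspondence is that a vertex lying in $S$ should select the ``top'' configuration of its gadget, while a vertex outside $S$ selects the ``bottom'' one. Concretely, I would let $F$ contain all forced edges $\setFc = \bigcup_u F^*_u \cup \bigcup_e F^*_e$, add $F^\top_u$ for each $u \in S$, and add $F^\bot_u$ for each $u \notin S$. For every edge $e = \{u,v\} \in E$, since $S$ is a vertex cover at least one endpoint, say $u$, lies in $S$; I then cover $e$ through that endpoint, adding the edges of $F^u_e$ (and symmetrically $F^v_e$ if instead $v \in S$ is chosen as the covering endpoint).

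Feasibility is then almost immediate from \cref{obs:nph-deg-5-sol}: for every vertex gadget, $F$ contains $F^*_u \cup F^\top_u$ or $F^*_u \cup F^\bot_u$, each a valid subsolution for every $H \in \HHH_u$; and for every edge gadget, $F$ contains $F^*_e \cup F^u_e$ or $F^*_e \cup F^v_e$, each valid for every $H \in \HHH_e$. Since adding further edges never increases a diameter, $\diam(G'[F][H]) \le 2$ holds for all $H \in \HHH$, and all forced edges are present by construction.

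The crux is the cost bound, which hinges on the docking identifications. The key point I would verify is that each of the three edges of $F^\top_u$ is, after docking, identical to the ``$u$-side'' covering edge of one of the three edge gadgets incident to $u$ (for instance $\{b_u^1, b_u^2\}$ becomes $\{a_e^1, a_e^2\} \in F^u_e$), whereas the edges of $F^\bot_u$ and all forced edges are pairwise distinct and coincide with no edge-gadget edge. Consequently, whenever I cover an edge $e$ through an endpoint $u \in S$, the shared covering edge of $F^u_e$ is already paid for by $F^\top_u$, so only the single remaining edge of $F^u_e$ (the one incident to the internal vertex $a_e^3$) is genuinely new. Counting distinct edges then yields $|\setFc| = 3n + 3m$, a contribution of $3|S| + 2(n - |S|)$ from the vertex gadgets, and exactly one new edge per edge gadget, for a total of $5n + 4m + |S| \le 5n + 4m + k = k'$.

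The main obstacle is exactly this edge accounting: I must track the docking identifications carefully so as not to double-count the shared $F^\top$/$F^u$ edges, and I must ensure that the covering of each edge is always routed through an endpoint in $S$ (which is possible precisely because $S$ is a vertex cover). Otherwise the covering edge of $F^v_e$ with $v \notin S$ would be supplied by no vertex gadget, creating an extra edge and breaking the budget.
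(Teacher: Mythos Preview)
Your proposal is correct and follows essentially the same approach as the paper: pick $F_u^\top$ for $u\in S$ and $F_u^\bot$ otherwise, route each edge gadget through an endpoint in $S$, invoke \cref{obs:nph-deg-5-sol} for feasibility, and exploit the single-edge overlap between $F_u^\top$ and $F_e^u$ at the docking to reach the bound $5n+4m+|S|\le k'$. The only cosmetic difference is that the paper formalizes the endpoint choice via a function $g\colon E\to S$ and counts via $|F_V|+|F_E|-|F_V\cap F_E|$, whereas you separate the forced edges first; both arrive at the same total.
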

\begin{proof}
	Let $S \subseteq V$, $|S| \le k$ be a vertex cover in $G$.
	Then there is a function $g \colon E \to S$ such that $g(\{u, v\}) \in \{u, v\}$.
	Let $F \coloneqq F_V \cup F_E$, where
	$F_V \coloneqq \bigcup_{u \in V} F_u^* \cup \bigcup_{u \notin S} F_u^\bot \cup \bigcup_{u \in S} F_u^\top$, and
	$F_E \coloneqq \bigcup_{e \in E} (F_e^* \cup F_e^{g(e)})$.
	Then, by \cref{obs:nph-deg-5-sol} we have $\diam(G'[F][H]) \le 2$ for every $H \in \HHH$.
	It remainds to show $|F| \le k'$.
	Note that
	\[
		|F_V| \le %
		(n-k) \cdot (3 + 2) + k \cdot (3 + 3) =
		5n + k.
	\]
	As for every $e \in E$, $|F_V \cap F_e^{g(e)}| = |F_{g(e)}^\top \cap F_e^{g(e)}| = 1$,
	\[
		|F| = |F_V| + |F_E| - |F_V \cap F_E| \le 5n + k + 5m - m = k'.\qedhere
	\]
\end{proof}

\begin{lemma}
	\label{lem:nph-deg-5-backward}
	If the instance $\III'$ obtained from running \cref{constr:nph-deg-5} on $\III$ is a \yes-instance, then so is $\III$.
\end{lemma}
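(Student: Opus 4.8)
The plan is to prove the contrapositive through a direct counting bound: I will show that any solution $F$ of $\III'$ with $|F| \le k'$ can be ``read off'' as a vertex cover of $G$ of size at most $k$. The starting point is that, by \cref{obs:nph-deg-5} and the size-two habitats created in \cref{constr:nph-deg-5}, each of the $3n+3m$ forced edges (those in $\bigcup_u F^*_u \cup \bigcup_e F^*_e$) lies in every solution, and a quick check shows these forced edges remain pairwise distinct after the docking operations. Hence it suffices to study the $N \coloneqq |F| - (3n+3m) \le 2n+m+k$ \emph{non-forced} edges. I would partition the non-forced edges of $G'$ into three groups: the \emph{private vertex edges} $b_u^1b_u^5, b_u^3b_u^5$ (the edges of $F^\bot_u$); the \emph{private edge edges} $a_e^3a_e^4, a_e^1a_e^3$ (the two inner edges of each edge gadget); and the \emph{shared} docking edges, each of which is simultaneously one of the three edges of some $F^\top_u$ and one of the two distinguished edges of an incident edge gadget. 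The crucial bookkeeping point is that each shared edge is counted exactly once, so that $N = (\text{private vertex}) + (\text{private edge}) + (\text{shared})$.

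Next I would establish two local lower bounds coming from the triangle habitats. For an edge gadget $e=\{u,v\}$, its three triangle habitats force at least one of the two private edges to be present, and both to be present unless at least one of the two docking edges of $e$ is in $F$. Writing $\pi_e \coloneqq 1$ if neither docking edge of $e$ is in $F$ and $\pi_e \coloneqq 0$ otherwise, this says that $e$ contributes at least $1+\pi_e$ private edge edges. For a vertex gadget $u$, let $d_u$ be the number of its docking edges in $F$. A short case analysis over the four triangle habitats $\{b_u^1,b_u^2,b_u^5\}$, $\{b_u^2,b_u^3,b_u^5\}$, $\{b_u^1,b_u^4,b_u^5\}$, $\{b_u^3,b_u^4,b_u^5\}$ shows that the number of private-plus-docking edges charged to $u$ is at least $2$, and at least $3$ as soon as $d_u \ge 1$; equivalently, this count is at least $2+[d_u\ge 1]$. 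The heart of the matter is that the \emph{only} way to satisfy all four vertex habitats with just two non-forced edges is to take exactly $F^\bot_u$, which uses no docking edge at all, so the first docking edge a vertex ``spends'' is never free.

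Summing these bounds over all gadgets, and using that each shared edge is charged to exactly one vertex and to exactly one edge, yields $N \ge 2n + m + |S| + |P|$, where $S \coloneqq \{u : d_u \ge 1\}$ and $P \coloneqq \{e : \pi_e = 1\}$. Finally I would observe that $S$ together with one endpoint of each $e \in P$ is a vertex cover of $G$: if an edge $e$ has $\pi_e = 0$, then one of its docking edges lies in $F$, and that docking edge belongs to some endpoint $w$ of $e$, forcing $d_w \ge 1$ and hence $w \in S$; the edges not covered this way are precisely those in $P$, which are covered by the added endpoints. This cover has size at most $|S|+|P| \le N - (2n+m) \le k$, so $G$ admits a vertex cover of size at most $k$ and $\III$ is a \yes-instance.

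The main obstacle is the vertex-gadget bound $2+[d_u\ge 1]$, that is, the rigidity claim that using \emph{any} docking edge of a vertex strictly increases the cost charged to that vertex; this is exactly what lets the mere presence of a docking edge flag its vertex into the cover. I expect to verify it by enumerating which of $p=b_u^1b_u^2$, $q=b_u^2b_u^3$, $s=b_u^4b_u^5$ are present: writing $p,q,s,x,y \in \{0,1\}$ for the indicators of the five non-forced vertex edges, the habitat constraints give $x \ge 1-\min(p,s)$ and $y \ge 1-\min(q,s)$, from which the bound follows after collecting terms. The secondary subtlety, which I would state explicitly to avoid an off-by-one error, is that shared edges serve two gadgets at once and must not be double-counted; handling them through the three-way partition of the non-forced edges keeps the accounting exact and makes the lower bound $N \ge 2n+m+|S|+|P|$ tight against the forward bound $5n+4m+|S|$ from \cref{lem:nph-deg-5-forward}.
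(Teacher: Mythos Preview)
Your argument is correct. It differs from the paper's proof in a meaningful way: the paper proceeds by an \emph{exchange argument}, first normalising the solution $F$ so that each edge gadget contributes exactly one private edge (replacing $F_e$ by $F_e^u$ if $|F_e|\ge 2$) and each vertex gadget contributes exactly $F_u^\bot$ or $F_u^\top$ (replacing $F_u$ by $F_u^\top$ if $|F_u|\ge 3$); only after this clean-up does it read off $S=\{u:F_u=F_u^\top\}$ and argue it is a vertex cover. You instead leave $F$ untouched and prove a direct lower bound $N\ge 2n+m+|S|+|P|$ via the local inequalities $p+q+s+x+y\ge 2+[d_u\ge 1]$ on vertex gadgets and ``$\ge 1+\pi_e$ private edges'' on edge gadgets, then cover $G$ with $S$ together with one endpoint per $e\in P$. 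Your route is slightly more robust (no need to argue that the replacements preserve feasibility for overlapping habitats) at the cost of the extra set $P$; the paper's route yields a cleaner cover description ($S$ alone) but relies on the modification steps. One wording nit: the shared edges are charged only to the vertex side in your accounting (the edge-gadget bound is on private edges alone), so ``charged to exactly one vertex and to exactly one edge'' overstates it; the arithmetic you actually perform is $N=(\text{private vertex}+\text{shared})+(\text{private edge})\ge (2n+|S|)+(m+|P|)$, which is fine.
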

\begin{proof}
	Let $F$ be a solution for $\III'$ of size at most $k'$.
	Clearly, $F^* \coloneqq \bigcup_{u \in V} F_u^* \cup \bigcup_{e \in E} F_e^* \subseteq F$ as it consists of all edges contained in size-two habitats.
	As the above sets are pairwise disjoint, we have $|F^*| = 3(m+n)$.
	Let $\HHH' \coloneqq \{ H \in \HHH \mid |H| \ge 3 \}$.
	Note that $\tilde F$ is a subsolution for a habitat in $\HHH'$ if and only if it contains at least two edges of $G'[H]$.
	Observe also that $F^*$ contains exactly one edge of each habitat $H \in \HHH'$; thus there must be at least one edge of $F \setminus F^*$ in each subgraph $G'[H]$.
	We say that a habitat $H \in \HHH'$ is \emph{covered} by an edge $e \in E' \setminus F^*$ if $e \in E(G'[H])$.
	Lastly, let $\HHH'_u \coloneqq \HHH' \cap \HHH_u$ for each $u \in V$ and $\HHH'_e \coloneqq \HHH' \cap \HHH_e$ for each $e \in E$.
	Note that $|\HHH'_u| = 4$ and $|\HHH'_e| = 3$.

	For each $u \in V$ let $F_u \coloneqq (E_u \cap F) \setminus F^*$.
	Moreover, for each $e = \{u, v\} \in E$, let $F_e \coloneqq (E_e \cap F) \setminus (F^* \cup \bigcup_{w \in V} E_w)$.
	Observe that $|F_e| \ge 1$ as each edge gadget contains one habitat $H \in \HHH'_e$ whose edges do not intersect the edges of any vertex gadget.
	Moreover, we may assume that $|F_e| = 1$.
	Otherwise, if $|F_e| \ge 2$, then $(F \setminus F_e) \cup F_e^u$ is also a valid solution of size at most $k'$ for $\III'$:
	The size constraint is fulfilled since $|F_e^u| = 2$.
	Moreover, $F_e^u$ covers a superset of the edges of $F_e$ as it also covers a habitat in $\HHH'_u$.
	As $|F_e^u \setminus E_u| = 1$, we indeed may assume that $|F_e| = |(E_e \cap F) \setminus (F^* \cup \bigcup_{w \in V} E_w)| = 1$ in our solution.
	Thus, $\sum_{e \in E} |F_e| = m$, and we obtain that $\sum_{u \in V} |F_u| \le 2n + k$.

	It is easy to verify that $|F_u| \ge 2$ for each $u \in V$:
	There are four habitats in $\HHH'_u$, and each of them requires to be covered by an edge in $F_u$.
	As every edge in $E_u$ covers at most two of the mentioned habitats, the claim thus follows by the pigeonhole principle.
	Indeed, if $|F_u| = 2$, then $F_u = F_u^\bot$: the two edges in $F_u^\bot$ are the only pair of edges that cover all habitats in $\HHH'_u$.
	Suppose now that $|F_u| \ge 3$.
	Then we may assume that $F_u = F_u^\top$, as $(F \setminus F_u) \cup F_u^\top$ is also a solution of size at most $k'$ for $\III'$:
	The size constraint is fulfilled since $|F_u^\top| = 3$.
	Clearly, the edges in $F_u^\top$ cover each habitat $H \in \HHH'_u$.
	Moreover, each edge in $F_u^\top$ covers one habitat in $\HHH'_e$, whereas any other edge in $E_u$ does not cover any habitats in $\HHH'_e$.
	Thus, $F_u^\top$ covers at least as many habitats as $F_u$.

	In all, we may assume that $F_u \in \{F_u^\top, F_u^\bot\}$.
	Let $S \coloneqq \{ u \in V \mid F_u = F_u^\top \}$.
	As $\sum_{u \in V} |F_u| \le 2n + k$ and $|F_u| \ge 2$, we have that $|S| \le k$.
	Suppose that there exists an edge $e = \{u, v\} \in E$ with $u, v \notin S$.
	The single edge in $F_e$ can cover only two of the three habitats in $\HHH'_e$;
	thus at least one of the two edges in $E_e \cap (E_u \cap E_v)$ must be in $F$.
	But as $F_u = F_u^\bot$ and $F_v = F_v^\bot$, and $F_u^\bot \cap E_e = F_v^\bot \cap E_e = \emptyset$, there is a habitat in $\HHH'_e$ that is not covered; thus $F$ is not a solution, a contradiction.
\end{proof}

\cref{thm:nph:deg5}\eqref{thm:nph:deg5:hab3} now follows directly from \cref{obs:nph-deg-5,lem:nph-deg-5-forward,lem:nph-deg-5-backward}.

The direction constraint may violate planarity.
Thus, for \cref{thm:nph:deg5}\eqref{thm:nph:deg5:planarhab4},
we need two different edge gadgets (see~\cref{fig:NPh:5p4})
such that the docking pairs point in the same direction in one and
in different directions in the other.
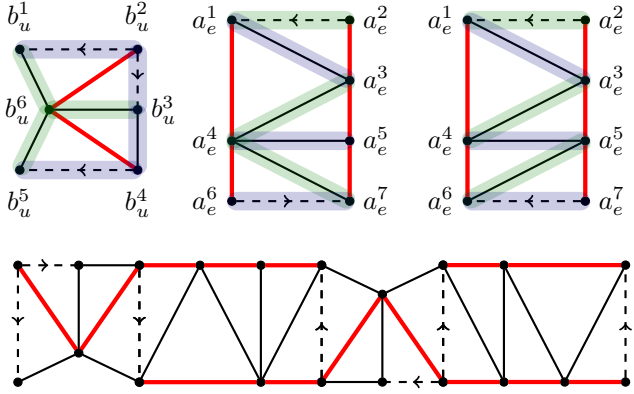
\begin{figure}[t]
		\centering
		\begin{tikzpicture}
		\def\xr{0.775}
		\def\yr{0.8}
		\tikzpreamble{}

		\begin{scope}[yshift=-0.5*\xr cm]
			\node (v0) at (0*\xr,1*\yr)[xnode]{};
			\node (v1) at (2*\xr,1*\yr)[xnode]{};
			\node (v2) at (2*\xr,0*\yr)[xnode]{};
			\node (v3) at (2*\xr,-1*\yr)[xnode]{};
			\node (v4) at (0*\xr,-1*\yr)[xnode]{};
			\node (v5) at (0.5*\xr,0*\yr)[xnode]{};

			\node at (v0) [label=090:{$b_u^1$}] {};
			\node at (v1) [label=090:{$b_u^2$}] {};
			\node at (v2) [label={[label distance=-2pt]000:{$b_u^3$}}] {};
			\node at (v3) [label=270:{$b_u^4$}] {};
			\node at (v4) [label=270:{$b_u^5$}] {};
			\node at (v5) [label=180:{$b_u^6$}] {};

			\tikzES{v0/v5,v1/v5,v2/v5,v2/v3,v3/v5,v4/v5}
			\tikzES[-,thick,dashed,decoration={markings,mark=at position 0.5 with {\arrow{>}}},postaction={decorate}]{v1/v0,v1/v2,v3/v4}
			\tikzES[-,ultra thick, color=red]{v5/v1,v5/v3}
			\tikzES[-,line width=7pt,line cap=round, line join=round, opacity=0.2,color=green!50!black]{v0/v5,v2/v5,v4/v5}
			\tikzES[-,line width=7pt,line cap=round, line join=round, opacity=0.2,color=blue!50!black]{v0/v1,v1/v2,v2/v3,v3/v4}
		\end{scope}

		\begin{scope}[xshift=7.6*\xr cm]
			\node (e1) at (0*\xr,1*\yr)[xnode]{};
			\node (e2) at (2*\xr,1*\yr)[xnode]{};
			\node (e3) at (2*\xr,0*\yr)[xnode]{};
			\node (e4) at (0*\xr,-1*\yr)[xnode]{};
			\node (e5) at (2*\xr,-1*\yr)[xnode]{};
			\node (e6) at (0*\xr,-2*\yr)[xnode]{};
			\node (e7) at (2*\xr,-2*\yr)[xnode]{};

			\node at (e1) [label={[label distance=-2pt]180:{$a_e^1$}}] {};
			\node at (e2) [label={[label distance=-2pt]000:{$a_e^2$}}] {};
			\node at (e3) [label={[label distance=-2pt]000:{$a_e^3$}}] {};
			\node at (e4) [label={[label distance=-2pt]180:{$a_e^4$}}] {};
			\node at (e5) [label={[label distance=-2pt]000:{$a_e^5$}}] {};
			\node at (e6) [label={[label distance=-2pt]180:{$a_e^6$}}] {};
			\node at (e7) [label={[label distance=-2pt]000:{$a_e^7$}}] {};

			\tikzES{e1/e3,e3/e4,e4/e5,e5/e6}
			\tikzES[-,thick,dashed,decoration={markings,mark=at position 0.5 with {\arrow{>}}},postaction={decorate}]{e2/e1,e7/e6}
			\tikzES[-,ultra thick, color=red]{e1/e4,e4/e6,e2/e3,e3/e5,e5/e7}
			\tikzES[-,line width=7pt,line cap=round, line join=round, opacity=0.2,color=green!50!black]{e1/e2,e3/e4,e6/e5}
			\tikzES[-,line width=7pt,line cap=round, line join=round, opacity=0.2,color=blue!50!black]{e1/e3,e4/e5,e6/e7}
		\end{scope}

		\begin{scope}[xshift=3.6*\xr cm]
			\node (e1) at (0*\xr,1*\yr)[xnode]{};
			\node (e2) at (2*\xr,1*\yr)[xnode]{};
			\node (e3) at (2*\xr,0*\yr)[xnode]{};
			\node (e4) at (0*\xr,-1*\yr)[xnode]{};
			\node (e5) at (2*\xr,-1*\yr)[xnode]{};
			\node (e6) at (0*\xr,-2*\yr)[xnode]{};
			\node (e7) at (2*\xr,-2*\yr)[xnode]{};

			\node at (e1) [label={[label distance=-2pt]180:{$a_e^1$}}] {};
			\node at (e2) [label={[label distance=-2pt]000:{$a_e^2$}}] {};
			\node at (e3) [label={[label distance=-2pt]000:{$a_e^3$}}] {};
			\node at (e4) [label={[label distance=-2pt]180:{$a_e^4$}}] {};
			\node at (e5) [label={[label distance=-2pt]000:{$a_e^5$}}] {};
			\node at (e6) [label={[label distance=-2pt]180:{$a_e^6$}}] {};
			\node at (e7) [label={[label distance=-2pt]000:{$a_e^7$}}] {};

			\tikzES{e1/e3,e3/e4,e4/e5,e4/e7}
			\tikzES[-,thick,dashed,decoration={markings,mark=at position 0.5 with {\arrow{>}}},postaction={decorate}]{e2/e1,e6/e7}
			\tikzES[-,ultra thick, color=red]{e1/e4,e4/e6,e2/e3,e3/e5,e5/e7}
			\tikzES[-,line width=7pt,line cap=round, line join=round, opacity=0.2,color=green!50!black]{e1/e2,e3/e4,e4/e7}
			\tikzES[-,line width=7pt,line cap=round, line join=round, opacity=0.2,color=blue!50!black]{e1/e3,e4/e5,e6/e7}
		\end{scope}

		\begin{scope}[xshift=6.155*\xr cm, yshift=-3.06*\yr cm]
			\begin{scope}[rotate=-90]
				\node (v0) at (0*\xr,1*\yr)[xnode]{};
				\node (v1) at (2*\xr,1*\yr)[xnode]{};
				\node (v2) at (2*\xr,0*\yr)[xnode]{};
				\node (v3) at (2*\xr,-1*\yr)[xnode]{};
				\node (v4) at (0*\xr,-1*\yr)[xnode]{};
				\node (v5) at (0.5*\xr,0*\yr)[xnode]{};

				\tikzES{v0/v5,v1/v5,v2/v5,v2/v3,v3/v5,v4/v5}
				\tikzES[-,thick,dashed,decoration={markings,mark=at position 0.5 with {\arrow{>}}},postaction={decorate}]{v1/v0,v1/v2,v3/v4}
				\tikzES[-,ultra thick, color=red]{v5/v1,v5/v3}
			\end{scope}
		\end{scope}

		\begin{scope}[xshift=9.251*\xr cm, yshift=-3.06*\yr cm]
			\begin{scope}[rotate=-90]
				\node (e1) at (0*\xr,1*\yr)[xnode]{};
				\node (e2) at (2*\xr,1*\yr)[xnode]{};
				\node (e3) at (2*\xr,0*\yr)[xnode]{};
				\node (e4) at (0*\xr,-1*\yr)[xnode]{};
				\node (e5) at (2*\xr,-1*\yr)[xnode]{};
				\node (e6) at (0*\xr,-2*\yr)[xnode]{};
				\node (e7) at (2*\xr,-2*\yr)[xnode]{};

				\tikzES{e1/e3,e3/e4,e4/e5,e5/e6}
				\tikzES[-,thick,dashed,decoration={markings,mark=at position 0.5 with {\arrow{>}}},postaction={decorate}]{e2/e1,e7/e6}
				\tikzES[-,ultra thick, color=red]{e1/e4,e4/e6,e2/e3,e3/e5,e5/e7}
			\end{scope}
		\end{scope}

		\begin{scope}[xshift=3.06*\xr cm, yshift=-5*\yr cm]
			\begin{scope}[rotate=90]
				\node (e1) at (0*\xr,1*\yr)[xnode]{};
				\node (e2) at (2*\xr,1*\yr)[xnode]{};
				\node (e3) at (2*\xr,0*\yr)[xnode]{};
				\node (e4) at (0*\xr,-1*\yr)[xnode]{};
				\node (e5) at (2*\xr,-1*\yr)[xnode]{};
				\node (e6) at (0*\xr,-2*\yr)[xnode]{};
				\node (e7) at (2*\xr,-2*\yr)[xnode]{};

				\tikzES{e1/e3,e3/e4,e4/e5,e4/e7}
				\tikzES[-,thick,dashed,decoration={markings,mark=at position 0.5 with {\arrow{>}}},postaction={decorate}]{e2/e1,e6/e7}
				\tikzES[-,ultra thick, color=red]{e1/e4,e4/e6,e2/e3,e3/e5,e5/e7}
			\end{scope}
		\end{scope}

		\begin{scope}[xshift=1*\xr cm, yshift=-5*\yr cm]
			\begin{scope}[rotate=90]
				\node (v0) at (0*\xr,1*\yr)[xnode]{};
				\node (v1) at (2*\xr,1*\yr)[xnode]{};
				\node (v2) at (2*\xr,0*\yr)[xnode]{};
				\node (v3) at (2*\xr,-1*\yr)[xnode]{};
				\node (v4) at (0*\xr,-1*\yr)[xnode]{};
				\node (v5) at (0.5*\xr,0*\yr)[xnode]{};

				\tikzES{v0/v5,v1/v5,v2/v5,v2/v3,v3/v5,v4/v5}
				\tikzES[-,thick,dashed,decoration={markings,mark=at position 0.5 with {\arrow{>}}},postaction={decorate}]{v1/v0,v1/v2,v3/v4}
				\tikzES[-,ultra thick, color=red]{v5/v1,v5/v3}
			\end{scope}
		\end{scope}

		\end{tikzpicture}
		\caption{Illustrations for the construction for \cref{thm:nph:deg5}(ii).
			Red edges are forced.
			Green and blue highlighted are two (sub)solutions.
			\emph{Top Left:} A vertex gadget for a vertex $u$.
			\emph{Top Center and Right:} A default edge gadget for an edge $e$.
			\emph{Top Right:} An anti-crossing edge gadget.
			\emph{Bottom:} The result of docking two vertex and two edge gadgets; the right edge gadget is anti-crossing.
		}
		\label{fig:NPh:5p4}
	\end{figure}
This is only possible with size-four habitats.

\subsection{Proof of \cref{thm:nph:deg5}\eqref{thm:nph:deg5:planarhab4}}

\begin{construction}
	\label{constr:nph-planar}
	Let $\III = (G = (V, E), k)$ be an instance of \PCVC{} with $V = \{1, \dots, n\}$ and $m \coloneqq |E|$.
	We assume to be given a plane embedding of $G$.
	We will construct an instance $\III' = (G', \calH, \setFc, k')$ of \TwoDiam{} with $k' \coloneqq 5n + 7m + k$ as follows
	(see~\cref{fig:NPh:5p4} for an illustration to the gadgets).
	For this, we will create a vertex gadget for each $v \in V$ and an edge gadget for each $e \in E$, which we will afterwards connect using the docking procedure from \cref{def:nph-docking}.

	Let $u \in V$.
	The \emph{vertex gadget} for $u$ consists of the graph $G_u = (V_u, E_u)$ with
	vertex set $V_u \coloneqq \{b_u^1, \dots, b_u^6\}$ and
	edge set $E_u \coloneqq F_u^\top \cup\allowbreak F_u^\bot \cup F_u^*$ with
	$F_u^\top \ceq \bigcup_{i=1}^4 \{\{b_u^i, b_u^{i+1}\}\}$,
	$F_u^\bot \ceq \bigcup_{i\in\{1,3,5\}} \{\{b_u^i, b_u^6\}\}$,
	and $F_u^*\ceq \{\{b_u^2,b_u^6\},\{b_u^4,b_u^6\}\}$,
	and the habitat set
	\[\HHH_u \coloneqq \{ \{b_u^1, b_u^2, b_u^3, b_u^6\}, \{b_u^3, b_u^4, b_u^5, b_u^6\} \}.\]
	The \emph{docking pairs} of the vertex gadget for $u$ are
	$d_u^1 \coloneqq (b_u^2, b_u^1)$,
	$d_u^2 \coloneqq (b_u^2, b_u^3)$, and
	$d_u^3 \coloneqq (b_u^4, b_u^5)$.

	The \emph{default edge gadget} for $e = \{u, v\} \in E$ consists of the graph $G_e = (V_e, E_e)$ with
	vertex set $V_e \coloneqq \{a_e^1, \dots, a_e^7\}$ and
	edge set $E_e \coloneqq F_e^* \cup F_e^u \cup F_e^v$,
	where $F_e^* \coloneqq \{
		\{a_e^1, a_e^4\},\allowbreak
		\{a_e^4, a_e^6\},\allowbreak
		\{a_e^2, a_e^3\},\allowbreak
		\{a_e^3, a_e^5\},\allowbreak
		\{a_e^5, a_e^7\},
	\}$,
	$F_e^u \coloneqq \{
		\{a_e^1, a_e^3\},\allowbreak
		\{a_e^4, a_e^5\},\allowbreak
		\{a_e^6, a_e^7\}
	\}$, and
	$F_e^v \coloneqq \{
		\{a_e^1, a_e^2\},\allowbreak
		\{a_e^3, a_e^4\},\allowbreak
		\{a_e^4, a_e^7\},
	\}$,
	and the habitat set $\HHH_e \coloneqq F^*_e \cup \allowbreak \{ Q \mid Q \subseteq V_e \text{ induces a triangle} \}$.
	The docking pairs of the gadget are
	$d_e^1 \coloneqq (a_e^2, a_e^1)$
	and
	$d_e^2 \coloneqq (a_e^6, a_e^7)$.
	Finally, let~$F_e^d \ceq \{\{a_e^1, a_e^2\},\{a_e^6, a_e^7\}\}$.

	The \emph{anti-crossing edge gadget} for $e = \{u, v\} \in E$
	can be constructed from a default edge gadget for $e$ by taking the graph $G_e$ and replacing the edge $\{a_e^4, a_e^7\}$ in $F_e^u$ with the edge $\{a_e^5, a_e^6\}$.
	The habitat set $\HHH_e$ is constructed as for the default edge gadget.
	The docking pairs of the gadget are
	$d_e^1 \coloneqq (a_e^2, a_e^1)$
	and
	$d_e^2 \coloneqq (a_e^7, a_e^6)$,
	and~$F_e^d \ceq \{\{a_e^1, a_e^2\},\{a_e^6, a_e^7\}\}$.

	Let us now construct the graph $G'$.
	We start with a planar drawing of $G$ and replace each vertex $u \in V$
	with a drawing of a vertex gadget for $u$ such that all vertices in docking pairs lie on the outer face.
	Consider a vertex $u \in V$ and let $e_i = \{u, v_i\}$, $i \in \{1, 2, 3\}$, be the three edges incident to $u$.
	Assume that, when iterating over the edges incident to $u$ in a clockwise order with respect to the drawing of $G$, they appear in the order $e_1, e_2, e_3$.
	Now map $e_i$ to $d_u^i$ for $i \in \{1, 2, 3\}$.
	With the mapping at hand, we can add the edge gadgets.
	Let $e = \{u, v\} \in E$ and let $e_i$ be mapped to $d_u^i$ and to $d_v^j$.
	Replace $e$ with an anti-crossing edge gadget if exactly one of $i$ or $j$ equals $1$,
	and with a default edge gadget otherwise.
	In any case, the drawing of the gadget should be such that the vertices in docking pairs are on the outer face.
	Finally, dock $G_u$ with $G_e$ at $d_u^i$ and $d_e^u$,
	and dock $G_v$ with $G_e$ at $d_v^j$ and $d_e^v$.

	Finally,
	set $\HHH \coloneqq \bigcup_{v \in V} \HHH_v \cup \bigcup_{e \in E} \HHH_e$
	and set $\setFc \coloneqq \bigcup_{v \in V} F_u^* \cup \bigcup_{e \in E} F_e^*$.
\end{construction}

\begin{observation}
	\label{obs:nph-planar}
	Let $\III$ be an instance of \PCVC{} and let $\III'$ be the instance obtained from running \cref{constr:nph-planar} on $\III$.
	Then the graph $G'$ in $\III'$ is planar and has maximum degree five, each habitat $H \in \HHH$ has size at most four,
	and obtaining $\III'$ takes polynomial time.
\end{observation}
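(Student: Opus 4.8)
The plan is to establish the four asserted properties — habitat size at most four, polynomial running time, maximum degree five, and planarity — one at a time, reusing the template of the proof of \cref{obs:nph-deg-5} for the first three and adding the planarity argument on top. The habitat sizes and running time are immediate: the two habitats in each $\HHH_u$ are $\{b_u^1,b_u^2,b_u^3,b_u^6\}$ and $\{b_u^3,b_u^4,b_u^5,b_u^6\}$, both of size four, while each $\HHH_e$ consists of the size-two sets in $F_e^*$ together with vertex triples inducing triangles, so no habitat exceeds size four; and the construction inserts one constant-size gadget per vertex and per edge, performs a constant number of dockings per gadget, and computes a plane embedding of $G$ with clockwise edge orderings, which is clearly polynomial.

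For the maximum degree I would compute the in-gadget degrees once and then invoke \cref{obs:nph-docking-deg} at every docking. Within $G_u$ one has $\deg(b_u^1)=\deg(b_u^5)=2$, $\deg(b_u^2)=\deg(b_u^3)=\deg(b_u^4)=3$, and $\deg(b_u^6)=5$; within a default edge gadget the only degree-five internal vertex is $a_e^4$, the docking vertices $a_e^1,a_e^7$ have degree three and $a_e^2,a_e^6$ degree two, and the anti-crossing gadget merely swaps the roles of $a_e^6,a_e^7$ (dropping $a_e^4$ to degree four). Since each docking pairs a degree-two endpoint of an edge gadget with an endpoint of a vertex gadget, \cref{obs:nph-docking-deg} yields a merged degree of at most five; the only twice-docked vertex is $b_u^2$ (lying in both $d_u^1$ and $d_u^2$), and each of its two dockings adds only $2-1=1$ to its initial degree three, reaching exactly five. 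The non-docked vertices $b_u^6$ and $a_e^4$ already sit at degree five, so $\Delta(G')\le 5$.

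The main work is planarity, which I would carry out locally on the given plane embedding of $G$. Replace each vertex $u$ by a plane drawing of $G_u$ inside a small disk $D_u$, arranged so that all docking-pair vertices lie on $\partial D_u$ and appear in the same cyclic order as the incident edges $e_1,e_2,e_3$ around $u$; draw each edge gadget inside a thin plane strip following the drawing of $e$, with its two docking pairs at the strip's ends on the outer face. Docking is then a purely local gluing of a strip end onto a disk boundary, so a crossing can arise only at the interface, and only if the orientation of the strip-side docking pair is incompatible with that of the matching pair $d_u^i$ on the disk side.

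The crux — and the reason two edge gadgets are needed — is an orientation-parity argument at this interface: reading a docking pair $(x,y)$ as an arrow fixes which identified vertex lies clockwise of the other along the shared boundary, and the gluing is crossing-free exactly when the two arrows being identified are anti-aligned. Placed in the prescribed rotation on $\partial D_u$, the pair $d_u^1=(b_u^2,b_u^1)$ points oppositely to $d_u^2=(b_u^2,b_u^3)$ and $d_u^3=(b_u^4,b_u^5)$, so the orientations demanded at the two ends of $e=\{u,v\}$ agree precisely when $i$ and $j$ are both $1$ or both in $\{2,3\}$, and disagree when exactly one of $i,j$ equals $1$. The default gadget realises agreeing orientations, while the anti-crossing gadget — swapping $\{a_e^4,a_e^7\}$ for $\{a_e^5,a_e^6\}$, which reverses $d_e^2$ — realises disagreeing ones; this is exactly the selection rule of \cref{constr:nph-planar}. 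I would conclude with a short case distinction over $(i,j)\in\{1,2,3\}^2$ verifying that the selected gadget always glues without crossings, so the union of all disk and strip drawings is a plane drawing of $G'$. The delicate point to get right is precisely this orientation bookkeeping: one must confirm that routing the three docking pairs of $G_u$ to $\partial D_u$ in the required rotation is simultaneously consistent with the fixed internal drawing of $G_u$ (in particular with the placement of the central vertex $b_u^6$), and that the two gadget types between them cover all orientation combinations that can occur.
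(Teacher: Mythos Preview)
Your proposal is correct and follows essentially the same approach as the paper: habitat size and running time are dismissed as immediate, the degree bound is obtained by tabulating in-gadget degrees and applying \cref{obs:nph-docking-deg} at each docking (with the twice-docked vertex $b_u^2$ handled separately), and planarity is argued via an orientation-parity analysis of the docking pairs showing that $d_u^1$ has the opposite orientation to $d_u^2,d_u^3$, so the default versus anti-crossing gadget choice matches exactly the rule in \cref{constr:nph-planar}. Your disk-and-strip presentation is slightly more explicit geometrically than the paper's terse ``think of replacing each edge with two parallel edges'' remark, and the paper additionally notes that the case $i=j=1$ is handled by mirroring the default gadget, but the substance is the same.
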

\begin{proof}
	Habitat size and running time are straightforward.
	Let us next bound the maximum degree of the vertices in docking pairs; the remaining vertices trivially have degree at most five.
	Observe that each docking pair is used for a docking operation exactly once.
	Consider the vertices in the docking pairs $(b_u^2, b_u^1), (b_u^2, b_u^3), (b_u^5, b_u^4)$.
	Note that the degrees of $b_u^1, b_u^3, b_u^4, b_u^5$ in $G_u$ are at most three,
	and any vertex in a docking pair of an edge gadget $G_e$ has degree at most three in $G_e$.
	Thus, by \cref{obs:nph-docking-deg}, the degree of these vertices is at most five.
	The vertex $b_u^2$ has degree three in $G_u$ and is docked to two edge gadgets.
	It is easy to verify that $b_u^2$ is docked to two vertices that have degree two in their respective edge gadgets.
	Thus, $\deg_{G'}(b_u^2) = 3 + 2 - 1 + 2 - 1 = 5$.

	It remains to prove the planarity of $G'$.
	Observe that in the construction, we follow the clockwise order of the adjacencies for each vertex.
	Moreover, the directions of the docking pairs $d_v^2$ and $d_v^3$ are also clockwise for each $v \in V$,
	but the direction of the docking pair $d_v^1$ is counterclockwise.
	Thus, to ensure that an edge gadget for an edge $e_i = \{u,v\}$ mapped to $d_u^i$ and $d_v^j$ can be drawn without crossing edges (think of replacing each edge with two parallel edges), it must have
	\begin{compactenum}[(i)]
	\item two counterclockwise docking pairs if $i \ne 1 \ne j$,
	\item a clockwise docking pair if either $i=1$ or $j=1$, and
	\item two clockwise docking pairs if $i=j=1$.
	\end{compactenum}
	Note that the latter can be obtained by mirroring the default crossing gadget.
\end{proof}

\begin{lemma}
 \label{lem:deg5:planarhab4:iff}
	Instance $\III$ is a \yes-instance of \PCVC{}
	if and only if
	instance $\III'$ obtained from $\III$ using \cref{constr:nph-planar} is a \yes-instance.
\end{lemma}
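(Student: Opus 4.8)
The plan is to establish the equivalence by proving each direction separately, following the template of \cref{lem:nph-deg-5-forward,lem:nph-deg-5-backward}, and by first recording a helper observation in the spirit of \cref{obs:nph-deg-5-sol}: for every vertex gadget, both $F_u^* \cup F_u^\top$ and $F_u^* \cup F_u^\bot$ are subsolutions for each habitat in $\HHH_u$, and for every edge gadget (default or anti-crossing), both $F_e^* \cup F_e^u$ and $F_e^* \cup F_e^v$ are subsolutions for each habitat in $\HHH_e$. As all habitats have size at most four, this is checked by direct inspection of the constant-size gadgets; by the symmetry of its triangle structure, the anti-crossing gadget obeys exactly the same covering relations, so both variants are treated uniformly.

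For the forward direction, given a vertex cover $S$ with $|S| \le k$, I would take $F_u^\top$ for $u \in S$ and $F_u^\bot$ otherwise, together with all forced edges, and for each edge $e=\{u,v\}$ pick the subsolution $F_e^u$ or $F_e^v$ whose shared docking edge lies on an endpoint in $S$ (one exists, since $S$ is a cover). Feasibility follows from the helper observation. For the cost I would split $|F|$ into the forced part $|\setFc| = 2n + 5m$, the per-vertex part ($4$ for $u \in S$, else $3$), and the edge-gadget part. The crucial bookkeeping is that each docking edge of an active gadget (an edge of $F_u^\top$) is identified through docking with the docking-side boundary edge of an incident edge gadget, so that gadget's docking-side triangle is covered for free and only two further private edges are needed to cover the remaining triangles. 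Summing yields $|F| = 5n + 7m + |S| \le k'$.

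For the backward direction, I would start from a solution $F$ with $|F| \le k'$, note $\setFc \subseteq F$, and set $F_u \coloneqq (E_u \cap F) \setminus \setFc$ and $F_e \coloneqq (E_e \cap F) \setminus (\setFc \cup \bigcup_w E_w)$, attributing each shared docking edge to its vertex gadget so that every $F_e$ consists only of private edge-gadget edges. Exchange arguments (never raising the cost) let me assume each $F_u \in \{F_u^\top, F_u^\bot\}$ and each $F_e$ minimal. The combinatorial core is then: $|F_u| \ge 3$ always (both size-four habitats need diameter two), with a size-three choice forced to be the star $F_u^\bot$, which supplies no docking edge; and $|F_e| \ge 2$ always (the three interior triangles need private edges), rising to $|F_e| \ge 3$ when neither endpoint supplies a docking edge (so that both docking-side triangles must also be covered privately). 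Writing $S \coloneqq \{u : F_u = F_u^\top\}$ and letting $m_0$ be the number of edges with both endpoints outside $S$, these bounds give $\sum_u |F_u| + \sum_e |F_e| \ge 3n + |S| + 2m + m_0$; combined with $|F| \le k'$ this forces $|S| + m_0 \le k$, and adding one endpoint of each of the $m_0$ uncovered edges to $S$ produces a vertex cover of size at most $k$.

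The main obstacle I anticipate lies in the two local-optimality claims underpinning the backward direction rather than in the cost arithmetic, which is routine. Specifically, I must verify that the star $F_u^\bot$ is essentially the unique way to give both size-four vertex habitats diameter two with only three edges---so that any gadget supplying a docking edge genuinely pays for a fourth edge---and that the minimum number of private edge-gadget edges is exactly two when some endpoint is active but jumps to three otherwise, for both the default and the anti-crossing gadget. Pinning down these constant-size case analyses precisely, and confirming they coincide for the two edge-gadget variants, is where the real care is required.
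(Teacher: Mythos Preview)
Your proposal is correct and follows essentially the same strategy as the paper's proof. The forward direction is identical. In the backward direction the paper is slightly more direct: instead of first normalizing via exchange arguments so that each $F_u \in \{F_u^\top, F_u^\bot\}$, it sets $S' \coloneqq \{v \in V \mid F \cap F_v^\top \neq \emptyset\}$ and $E^\emptyset \coloneqq \{e \in E \mid F \cap F_e^d = \emptyset\}$, then lower-bounds $|F|$ by $5n + |S'| + 7m + |E^\emptyset|$ without modifying $F$. This rests on exactly the two local facts you identified as the ``main obstacle'': any feasible choice inside a vertex gadget that touches $F_u^\top$ needs at least four unforced edges (otherwise three suffice, uniquely $F_u^\bot$), and any feasible choice inside an edge gadget missing both docking edges needs at least three private edges (otherwise two suffice). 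The paper asserts these counts tersely; your plan to verify them by constant-size case analysis, and to check they hold uniformly for the default and anti-crossing gadgets, is precisely what is needed. Your exchange argument is also sound because replacing $F_u$ by $F_u^\top$ can only add docking edges and hence cannot break any edge-gadget habitat.
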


\begin{proof}
 \RD{}
 Let~$S\subseteq V$, $|S|=k$ be a vertex cover.
 For~$u\in V$,
 let
 \[
  F_u = F_u^* \cup \begin{cases}
					F_u^\top, &\cif{} u\in S,\\
          F_u^\bot, &\otw.
        \end{cases}
 \]
 Consider an edge~$e\in E$.
 Since~$S$ is a vertex cover,
 one endpoint
 of~$e$,
 say~$u$,
 is contained in~$S$ and hence
 at least one docking edge is contained in~$F_u$.
 Thus,
 let~$F_e\ceq F_e^u\cup F_e^*$.
 We claim that~$F=\bigcup_{v \in V} F_u \cup \bigcup_{e \in E} F_e$ is a solution to~$\III'$.
 Note that~$F^*\subseteq F$.
 Moreover,
 since~$F^*$ is the union of each gadget's solution,
 for each habitat~$H\in\calH$ it holds that~$\diam(G[F][H])\leq 2$.
 Finally,
 \begin{align*}
		& |F|
		=\textstyle \big|\bigcup_{u\in S} F_u \cup \bigcup_{u\in \setminus S} F_u \cup \bigcup_{\{u,w\}\in E} (F_e\setminus (F_u\cup F_w))\big|
		\\
		&=\textstyle \big|\bigcup_{u\in S} F_u\big| + \big|\bigcup_{u\in \setminus S} F_u\big| + \big|\bigcup_{\{u,w\}\in E} (F_e\setminus (F_u\cup F_w))\big|
		\\
		&\leq k\cdot 6 + (n-k)\cdot 5 + m\cdot 7 = k'
 \end{align*}

 \LD{}
 Let~$F$ be a solution to~$\III'$.
 Let~$S' \ceq \{v\in V\mid F\cap F_u^\top \neq \emptyset\}$.
 Let~$S''$ be initially empty
 and let~$E^\emptyset \ceq \{e\in E\mid F\cap F_e^d = \emptyset\}$.
 For each edge~$e\in E^\emptyset$,
 arbitrarily add one endpoint of~$e$ to~$S''$.
 We claim that~$S\ceq S'\cup S''$ is a solution to~$\III$.
 Suppose that there exists an edge $\{u,v\} \in E$ with $u, v \notin S$.
 Then $\{u, v\} \notin E^\emptyset$.
 As the two edges in $F_e^d$ are identified with one edge in $F_u^\top$ and one edge in $F_v^\top$, we have $\{u, v\} \cap S \ne \emptyset$, and $S$ is a vertex cover.
 As~$|F\cap\bigcup_{u\in V} E_u|\geq 6|S'|+5\cdot (n-|S'|) = 5n+|S'|$
 and~$|F\cap \bigcup_{e\in E} (E_e\setminus F_e^d)| \geq 8|E^\emptyset| + 7(m-|E^\emptyset|) = 7m+|E^\emptyset| \geq 7m + |S''|$,
 we get
 \begin{align*}
	 k'\geq |F| &= \textstyle \big|F\cap\bigcup_{u\in V} E_u\big| + \big|F\cap \bigcup_{e\in E} (E_e\setminus F_e^d)\big|\\
		    & \geq 5n + |S'| + 7m + |S''|,
 \end{align*}
 implying that~$|S|\leq |S'|+|S''|\leq k$.
\end{proof}

\section{Epilogue}

We showed that reconnecting habitats so that every habitat has diameter two is computationally challenging
even in sparse graphs with small maximum habitat sizes.
On the way, we also identified intriguing structural properties to give efficient algorithms.

The main unanswered question in this work is whether \TwoDiamC{} is po\-ly\-no\-mi\-al-time solvable on graphs with maximum degree four and maximum habitat size at least five.
We believe that it should be possible to extend our habitat intersection graph approach.
If there are habitats of size six, we can however find another proper docking structure that can also result in paths.
We further remark that the dynamic programming approach for paths in the habitat intersection graph can also be generalized to trees and to graphs of bounded treewidth.
This raises the question: How do instances with these habitat intersection structures look?

Finally, a restriction of our model is that every habitat must have diameter $d=2$.
Of course, one could extend our study by considering the cases $d \ge 3$, or even considering an adaptive diameter bound, i.e., $\diam(G[F][H]) \le \alpha\diam(G[H])$ for $\alpha \ge 1$.

{\begingroup
	\let\clearpage\relax
	\renewcommand{\url}[1]{\href{#1}{$\ExternalLink$}}
	\bibliography{references}

\newcommand{\noopsort}[1]{}
\begin{thebibliography}{36}
\providecommand{\natexlab}[1]{#1}
\providecommand{\url}[1]{\texttt{#1}}
\expandafter\ifx\csname urlstyle\endcsname\relax
  \providecommand{\doi}[1]{doi: #1}\else
  \providecommand{\doi}{doi: \begingroup \urlstyle{rm}\Url}\fi

\bibitem[Agarwal et~al.(2013)Agarwal, Araujo, Caillouet, Cazals, Coudert, and
  P{\'e}rennes]{agarwal_connectivity_2013}
Deepesh Agarwal, Julio-Cesar~Silva Araujo, Christelle Caillouet, Frederic
  Cazals, David Coudert, and Stephane P{\'e}rennes.
\newblock Connectivity inference in mass spectrometry based structure
  determination.
\newblock In \emph{Proc.\ of 21st {ESA}}, pages 289--300. Springer, 2013.

\bibitem[Angluin et~al.(2010)Angluin, Aspnes, and
  Reyzin]{angluin_inferring_2010}
Dana Angluin, James Aspnes, and Lev Reyzin.
\newblock Inferring social networks from outbreaks.
\newblock In \emph{Proc.\ of 21st {ALT}}, pages 104--118. Springer, 2010.

\bibitem[Bateni et~al.(2011)Bateni, Hajiaghayi, and
  Marx]{bateni_approximation_2011}
Mohammadhossein Bateni, Mohammadtaghi Hajiaghayi, and D\'{a}niel Marx.
\newblock Approximation schemes for steiner forest on planar graphs and graphs
  of bounded treewidth.
\newblock \emph{Journal of the ACM}, 58\penalty0 (5):\penalty0 1--37, 2011.

\bibitem[Bennett(2017)]{bennett2017effects}
Victoria~J Bennett.
\newblock Effects of road density and pattern on the conservation of species
  and biodiversity.
\newblock \emph{Current Landscape Ecology Reports}, 2\penalty0 (1):\penalty0
  1--11, 2017.

\bibitem[Brandes et~al.(2012)Brandes, Cornelsen, Pampel, and
  Sallaberry]{brandes_path_2012}
Ulrik Brandes, Sabine Cornelsen, Barbara Pampel, and Arnaud Sallaberry.
\newblock Path-based supports for hypergraphs.
\newblock \emph{Journal of Discrete Algorithms}, 14:\penalty0 248--261, 2012.

\bibitem[Buchin et~al.(2011)Buchin, van Kreveld, Meijer, Speckmann, and
  Verbeek]{buchin2011planar}
Kevin Buchin, Marc~J. van Kreveld, Henk Meijer, Bettina Speckmann, and Kevin
  Verbeek.
\newblock On planar supports for hypergraphs.
\newblock \emph{Journal of Graph Algorithms and Applications}, 15\penalty0
  (4):\penalty0 533--549, 2011.
\newblock URL \url{https://doi.org/10.7155/jgaa.00237}.

\bibitem[Chen et~al.(2015)Chen, Komusiewicz, Niedermeier, Sorge, Such\'{y}, and
  Weller]{chen_polynomial_2015}
Jiehua Chen, Christian Komusiewicz, Rolf Niedermeier, Manuel Sorge, Ond\v{r}ej
  Such\'{y}, and Mathias Weller.
\newblock Polynomial-time data reduction for the subset interconnection design
  problem.
\newblock \emph{SIAM Journal on Discrete Mathematics}, 29\penalty0
  (1):\penalty0 1--25, 2015.

\bibitem[Chockler et~al.(2007)Chockler, Melamed, Tock, and
  Vitenberg]{chockler_constructing_2007}
Gregory Chockler, Roie Melamed, Yoav Tock, and Roman Vitenberg.
\newblock Constructing scalable overlays for pub-sub with many topics.
\newblock In \emph{Proc.\ of 26th {PODC}}, pages 109--118. ACM, 2007.

\bibitem[Conitzer et~al.(2004)Conitzer, Derryberry, and
  Sandholm]{conitzer_combinatorial_2004}
Vincent Conitzer, Jonathan Derryberry, and Tuomas Sandholm.
\newblock Combinatorial auctions with structured item graphs.
\newblock In \emph{Proc.\ of 19th {AAAI}}, pages 212--218. AAAI Press, 2004.

\bibitem[Du and Kelley(1995)]{du_complexity_1995}
Ding-Zhu Du and Dean~F. Kelley.
\newblock On complexity of subset interconnection designs.
\newblock \emph{Journal of Global Optimization}, 6:\penalty0 193--205, 1995.

\bibitem[Du and Miller(1988)]{du1988interconnection}
Ding{-}Zhu Du and Zevi Miller.
\newblock Matroids and subset interconnection design.
\newblock \emph{SIAM J. Discrete Math.}, 1\penalty0 (4):\penalty0 416--424,
  1988.
\newblock URL \url{https://doi.org/10.1137/0401042}.

\bibitem[Elspas(1964)]{elspas1964topological}
Bernard Elspas.
\newblock Topological constraints on interconnection-limited logic.
\newblock In \emph{Proc.\ of 5th SWCT}, pages 133--137, 1964.
\newblock URL \url{https://doi.org/10.1109/SWCT.1964.27}.

\bibitem[Fan et~al.(2008)Fan, Hundt, Wu, and Ernst]{fan_algorithms_2008}
Hongbing Fan, Christian Hundt, Yu-Liang Wu, and Jason Ernst.
\newblock Algorithms and implementation for interconnection graph problem.
\newblock In \emph{Proc.\ 2nd {COCOA}}, pages 201--210. Springer, 2008.

\bibitem[Fluschnik and Kellerhals(2024)]{FluschnikK24}
Till Fluschnik and Leon Kellerhals.
\newblock Placing green bridges optimally, with a multivariate analysis.
\newblock \emph{Theory Comput. Syst.}, 68\penalty0 (5):\penalty0 1312--1338,
  2024.

\bibitem[Frati et~al.(2015)Frati, Gaspers, Gudmundsson, and
  Mathieson]{frati_augmenting_2015}
Fabrizio Frati, Serge Gaspers, Joachim Gudmundsson, and Luke Mathieson.
\newblock Augmenting graphs to minimize the diameter.
\newblock \emph{Algorithmica}, 72:\penalty0 995--1010, 2015.

\bibitem[Garey et~al.(1976)Garey, Johnson, and Stockmeyer]{garey1976simplified}
M.~R. Garey, David~S. Johnson, and Larry~J. Stockmeyer.
\newblock Some simplified np-complete graph problems.
\newblock \emph{Theoretical Computer Science}, 1\penalty0 (3):\penalty0
  237--267, 1976.

\bibitem[Gionis et~al.(2017)Gionis, Rozenshtein, Tatti, and
  Terzi]{gionis_community_2017}
Aristides Gionis, Polina Rozenshtein, Nikolaj Tatti, and Evimaria Terzi.
\newblock Community-aware network sparsification.
\newblock In \emph{Proc.\ of 17th {SDM}}, pages 426--434. SIAM, 2017.

\bibitem[Hamer and McDonnell(2008)]{HammerM08}
Andrew~J. Hamer and Mark~J. McDonnell.
\newblock Amphibian ecology and conservation in the urbanising world: A review.
\newblock \emph{Biological Conservation}, 141\penalty0 (10):\penalty0
  2432--2449, 2008.
\newblock ISSN 0006-3207.
\newblock URL \url{https://doi.org/10.1016/j.biocon.2008.07.020}.

\bibitem[Hennings and Soll(2010)]{hennings2010wildlife}
Lori~A Hennings and Jonathan~Andrew Soll.
\newblock \emph{Wildlife corridors and permeability: a literature review}.
\newblock Metro Sustainability Center, 2010.

\bibitem[Herkenrath et~al.(2022)Herkenrath, Fluschnik, Grothe, and
  Kellerhals]{fluschnik2022placing}
Maike Herkenrath, Till Fluschnik, Francesco Grothe, and Leon Kellerhals.
\newblock Placing green bridges optimally, with habitats inducing cycles.
\newblock In \emph{Proc.\ of 31st {IJCAI}}, pages 3825--3831. IJCAI
  Organization, 2022.

\bibitem[Herrendorf et~al.(2024)Herrendorf, Komusiewicz, Morawietz, and
  Sommer]{HerrendorfKMS24}
Emanuel Herrendorf, Christian Komusiewicz, Nils Morawietz, and Frank Sommer.
\newblock On the complexity of community-aware network sparsification.
\newblock In \emph{Proc.\ of 49th {MFCS}}, volume 306 of \emph{LIPIcs}, pages
  60:1--60:18. Schloss Dagstuhl - Leibniz-Zentrum f{\"{u}}r Informatik, 2024.

\bibitem[Hosoda et~al.(2012)Hosoda, Hromkovi{\v{c}}, Izumi, Ono,
  Stei{\-}no{\-}v{\'a}, and Wada]{hosoda_approximability_2012}
Jun Hosoda, Juraj Hromkovi{\v{c}}, Taisuke Izumi, Hirotaka Ono, Monika
  Stei{\-}no{\-}v{\'a}, and Koichi Wada.
\newblock On the approximability and hardness of minimum topic connected
  overlay and its special instances.
\newblock \emph{Theoretical Computer Science}, 429:\penalty0 144--154, 2012.

\bibitem[Huijser et~al.(2008)Huijser, McGowen, Hardy, Kociolek, Clevenger,
  Smith, and Ament]{HuijserMHKCSA08}
Marcel~P. Huijser, Pat~T. McGowen, Amanda Hardy, Angela Kociolek, Anthony~P.
  Clevenger, Dan Smith, and Robert~J. Ament.
\newblock Wildlife-vehicle collision reduction study: Report to congress.
\newblock 2008.
\newblock URL
  \url{https://www.fhwa.dot.gov/publications/research/safety/08034/08034.pdf}.

\bibitem[Huijser et~al.(2009)Huijser, Duffield, Clevenger, Ament, and
  McGowen]{HuijserDCAM09}
Marcel~P. Huijser, John~W. Duffield, Anthony~P. Clevenger, Robert~J. Ament, and
  Pat~T. McGowen.
\newblock Cost–benefit analyses of mitigation measures aimed at reducing
  collisions with large ungulates in the united states and canada: a decision
  support tool.
\newblock \emph{Ecology and Society}, 14\penalty0 (2), 2009.
\newblock ISSN 17083087.
\newblock URL \url{http://www.jstor.org/stable/26268301}.

\bibitem[Jansson et~al.(2023)Jansson, Levcopoulos, and
  Lingas]{jansson_online_2021}
Jesper Jansson, Christos Levcopoulos, and Andrzej Lingas.
\newblock Online and approximate network construction from bounded connectivity
  constraints.
\newblock \emph{Int. J. Found. Comput. Sci.}, 34\penalty0 (5):\penalty0
  453--468, 2023.

\bibitem[Klemz et~al.(2014)Klemz, Mchedlidze, and
  N{\"{o}}llenburg]{klemz2014tree}
Boris Klemz, Tamara Mchedlidze, and Martin N{\"{o}}llenburg.
\newblock Minimum tree supports for hypergraphs and low-concurrency euler
  diagrams.
\newblock In \emph{Proc.\ of 14th {SWAT}}, pages 265--276. Springer, 2014.
\newblock URL \url{https://doi.org/10.1007/978-3-319-08404-6\_23}.

\bibitem[Korach and Stern(2003)]{korach_clustering_2003}
Ephraim Korach and Michal Stern.
\newblock The clustering matroid and the optimal clustering tree.
\newblock \emph{Mathematical Programming}, 98\penalty0 (1):\penalty0 385--414,
  2003.

\bibitem[Korach and Stern(2008)]{korach_complete_2008}
Ephraim Korach and Michal Stern.
\newblock The complete optimal stars-clustering-tree problem.
\newblock \emph{Discrete Applied Mathematics}, 156\penalty0 (4):\penalty0
  444--450, 2008.

\bibitem[Lai et~al.(2011)Lai, Gomes, Schwartz, McKelvey, Calkin, and
  Montgomery]{lai_steiner_2011}
Katherine~J. Lai, Carla~P. Gomes, Michael~K. Schwartz, Kevin~S. McKelvey,
  David~E. Calkin, and Claire~A. Montgomery.
\newblock The steiner multigraph problem: Wildlife corridor design for multiple
  species.
\newblock In \emph{Proc.\ of 25th {AAAI}}, pages 1357--1364. AAAI Press, 2011.

\bibitem[LeBras et~al.(2013)LeBras, Dilkina, Xue, Gomes, McKelvey, Schwartz,
  and Montgomery]{lebras_robust_2013}
Ronan LeBras, Bistra Dilkina, Yexiang Xue, Carla~P. Gomes, Kevin~S. McKelvey,
  Michael~K. Schwartz, and Claire~A. Montgomery.
\newblock Robust network design for multispecies conservation.
\newblock In \emph{Proc.\ of 27th {AAAI}}, pages 1305--1312. AAAI Press, 2013.

\bibitem[Onus and Richa(2011)]{onus_minimum_2011}
Melih Onus and Andr{\'e}a~W. Richa.
\newblock Minimum maximum-degree publish-subscribe overlay network design.
\newblock \emph{IEEE/ACM Transactions on Networking}, 19\penalty0 (5):\penalty0
  1331--1343, 2011.

\bibitem[Onus and Richa(2016)]{onus_parameterized_2016}
Melih Onus and Andr{\'e}a~W. Richa.
\newblock Parameterized maximum and average degree approximation in topic-based
  publish-subscribe overlay network design.
\newblock \emph{Computer Networks}, 94:\penalty0 307--317, 2016.

\bibitem[{\"O}ztoprak et~al.(2017){\"O}ztoprak, Layazali, and
  Dogdu]{oztoprak_topic_2017}
Kasim {\"O}ztoprak, Sina Layazali, and Erdogan Dogdu.
\newblock Topic distribution constant diameter overlay design algorithm
  (td-cd-oda).
\newblock In \emph{Proc.\ of 11th {ICSC}}, pages 482--487. IEEE, 2017.

\bibitem[Plesnik(1981)]{plesnik_complexity_1981}
J.~Plesnik.
\newblock The complexity of designing a network with minimum diameter.
\newblock \emph{Networks}, 11\penalty0 (1):\penalty0 77--85, 1981.

\bibitem[Schoone et~al.(1987)Schoone, Bodlaender, and
  Van~Leeuwen]{schoone_diameter_1987}
A.~A. Schoone, H.~L. Bodlaender, and J.~Van~Leeuwen.
\newblock Diameter increase caused by edge deletion.
\newblock \emph{Journal of Graph Theory}, 11\penalty0 (3):\penalty0 409--427,
  1987.

\bibitem[Shilling et~al.(2021)Shilling, Waetjen, Porter, Short, Karcs,
  Honigman, Mejrano, Mohabir, Jyaw, Jones, Vickers, and Harrold]{Shilling21}
Fraser Shilling, David Waetjen, Graham Porter, Claire Short, Morgen Karcs,
  Tiana Honigman, Mireya Mejrano, Gracyna Mohabir, Min Jyaw, Anna Jones,
  Winston Vickers, and Kathryn Harrold.
\newblock From wildlife-vehicle conflict to solutions for california drivers \&
  animals.
\newblock Technical report, Road Ecology Center, UC Davis, 2021.
\newblock URL
  \url{https://roadecology.ucdavis.edu/resources/california-wildlife-vehicle-collision-hotspots-2021}.

\end{thebibliography}
\endgroup}

\end{document}